\theoremstyle{plain}
\newtheorem{thm}{Theorem}[section]
\newtheorem{cor}[thm]{Corollary} 
\newtheorem{prop}[thm]{Proposition} 
\theoremstyle{definition} 
\newtheorem{assump}{Assumption}[section]
\newtheorem{example}{Example}[section]
\theoremstyle{remark} 
\newtheorem{rem}[thm]{Remark}
\newcommand{\cdf}{{\rm cdf}}
\newcommand{\N}{\mathbb{N}} 
\newcommand{\R}{\mathbb{R}} 
\newcommand{\cO}{\mathcal{O}} 
\newcommand{\diff}{\,\mathrm{d}} 
\newcommand{\eqand}{\quad \textrm{and } \quad}
\newcommand{\Mid}{\hspace{1mm} \big| \hspace{1mm}}
\newcommand{\MMid}{\hspace{1mm} \Big| \hspace{1mm}}
\newcommand{\one}{\mathbbm{1}} 
\newcommand{\supp}{{\rm supp}} 
\newcommand{\E}{\mathbb{E}} 
\newcommand{\prob}{\mathbb{P}} 
\newcommand{\PR}{\mathcal{P}(\R)} 
\newcommand{\A}{\mathcal{A}} 
\newcommand{\BR}{\mathcal{B}(\R)} 
\newcommand{\BPR}{\mathcal{B}(\PR)} 
\newcommand{\BU}{\mathcal{B}(0,1)} 
\newcommand{\F}{\mathcal{F}} 
\newcommand{\LL}{\mathscr{L}} 
\renewcommand{\phi}{\varphi} 
\newcommand{\s}{\sigma} 
\newcommand{\mg}{{\rm mg}} 
\newcommand{\st}{\leq_\textrm{st}} 
\newcommand{\XX}{\mathcal{X}} 
\newcommand{\crps}{{\rm crps}} 
\newcommand{\Sbar}{\overline{\rm CRPS}} 
\newcommand{\bs}{\mathrm{s}_\mathrm{B}} 
\newcommand{\BS}{{\rm BS}} 
\newcommand{\BSbar}{\overline{{\rm BS}}} 
\newcommand{\qs}{\mathrm{qs}_\alpha} 
\newcommand{\QS}{{\rm QS}} 
\newcommand{\QSbar}{\overline{{\rm QS}}} 
\newcommand{\DSC}{{\rm DSC}} 
\newcommand{\MCB}{{\rm MCB}} 
\newcommand{\MS}{{\rm MS}} 
\newcommand{\UNC}{{\rm UNC}} 
\newcommand{\DSCbar}{\overline{{\rm DSC}}} 
\newcommand{\MCBbar}{\overline{{\rm MCB}}} 
\newcommand{\UNCbar}{{\overline{\rm UNC}}} 
\newcommand{\CT}{{\rm CT}} 
\newcommand{\HB}{{\rm HB}} 
\newcommand{\HBO}{{\rm HBo}} 
\newcommand{\ISO}{{\rm ISO}} 
\newcommand{\ab}{{(a,b)}} 
\newcommand{\cF}{{\acute{F}}} 
\newcommand{\cFinv}{{\grave{F}}} 
\newcommand\myrot[1]{\mathrel{\rotatebox[origin=c]{#1}{$\Rightarrow$}}}
  \newcommand\SWarrow{\myrot{-135}} \newcommand\SEarrow{\myrot{-45}}
\newcommand\myrotation[1]{\mathrel{\rotatebox[origin=c]{#1}{$\hookrightarrow$}}}
   \newcommand\SEArrow{\myrotation{-45}}
\newcommand\myrotations[1]{\mathrel{\rotatebox[origin=c]{#1}{$\hookleftarrow$}}}
\newcommand\NEArrows{\myrotations{45}}   
\begin{document}
	
\title{Decompositions of the mean \\ continuous ranked probability score}

\author{Sebastian Arnold\thanks{Both authors contributed equally to this work.} \thanks{University of Bern, \texttt{sebastian.arnold@unibe.ch}}, Eva-Maria Walz\footnotemark[1] \thanks{Karlsruhe Institute of Technology (KIT) and Heidelberg Institute for Theoretical Studies (HITS), \texttt{eva-maria.walz@kit.edu}}, Johanna Ziegel\thanks{University of Bern, \texttt{johanna.ziegel@unibe.ch}}, Tilmann Gneiting\thanks{Heidelberg Institute for Theoretical Studies (HITS) and Karlsruhe Institute of Technology (KIT), \texttt{tilmann.gneiting@h-its.org}}}
	
\maketitle
	
\begin{abstract}
The continuous ranked probability score (\crps) is the most commonly used scoring rule in the evaluation of probabilistic forecasts for real-valued outcomes.  To assess and rank forecasting methods, researchers compute the mean \crps\ over given sets of forecast situations, based on the respective predictive distributions and outcomes.  We propose a new, isotonicity-based decomposition of the mean \crps\ into interpretable components that quantify miscalibration (MSC), discrimination ability (DSC), and uncertainty (UNC), respectively.  In a detailed theoretical analysis, we compare the new approach to empirical decompositions proposed earlier, generalize to population versions, analyse their properties and relationships, and relate to a hierarchy of notions of calibration.  The isotonicity-based decomposition guarantees the nonnegativity of the components and quantifies calibration in a sense that is stronger than for other types of decompositions, subject to the nondegeneracy of empirical decompositions.  We illustrate the usage of the isotonicity-based decomposition in case studies from weather prediction and machine learning.  
\end{abstract}	

\section{Introduction}  \label{sec:introduction}

Probabilistic predictions are forecasts in the form of predictive probability distributions, which ought to be as sharp as possible subject to calibration \citep{Gneiting2007b}.  Informally, predictive distributions are calibrated if they provide a statistically coherent explanation of the outcomes.  Sharpness, on the other hand, quantifies how well one can discriminate different scenarios for future events according to the forecast and is a property of the forecast only.  For the comparative evaluation of probabilistic forecasts, proper scoring rules should be employed \citep{Gneiting2007a}.  A proper scoring rule assigns a numerical score to a probabilistic forecast with corresponding observed realization, and addresses calibration and sharpness simultaneously.  If we compare two competing forecasts according to their scores, it is natural to ask in which aspect one forecast is superior to the other.  This motivates the decomposition of average realized scores into more interpretable terms measuring calibration, discrimination ability, and uncertainty, respectively. 

Historically, the first score decomposition was introduced by \citet{Murpy_1973}, who proposed a decomposition of the mean Brier score (BS).  For a sequence of forecast--observation pairs $(p_1,y_1), \dots, (p_n,y_n)$, consisting of predictive probabilities $p_i \in [0,1]$ and corresponding binary outcomes $y_i \in \{ 0, 1 \}$, the empirical average Brier score 
\begin{align*}
\BSbar = \frac{1}{n} \sum_{1=1}^n \left( p_i - y_i \right)^2
\end{align*}
quantifies the overall performance of the assessed forecasts based on the actual observations.  \citet{Murpy_1973} motivates a decomposition of $\BSbar$ into interpretable components: a term measuring miscalibration (MCB) or reliability, a term measuring discrimination ability (DSC) or resolution, and a term quantifying the overall uncertainty (UNC) of the outcome.  Originally derived as a vector partition by \citet{Murpy_1973}, \citet{Siegert} gives a persuasive interpretation of the Murphy decomposition: For $k = 1, \dots, n$, consider the conditional event probability $q_k$, i.e., the proportion of realized binary events ($y_i = 1$) in the cases where the forecast was $p_k$.  Denote by $\BSbar_c$ the empirical Brier score of the calibrated forecasts $q_1, \dots, q_k$, and by $\BSbar_r$ the empirical Brier score with respect to the static reference forecast $r = (1/n) \sum_{i=1}^n y_i$, namely, 
\begin{equation}  \label{eq:Murphy_components}
\BSbar_c = \frac{1}{n} \sum_{1=1}^n \left( q_i - y_i \right)^2 \eqand \BSbar_r = \frac{1}{n} \sum_{1=1}^n \left( r - y_i \right)^2.  
\end{equation}
\citet{Siegert} shows that the Murphy decomposition reads as 
\begin{equation}  \label{eq:Murphy_decomposition}
\BSbar = \underbrace{\big( \BSbar - \BSbar_c \big)}_{\MCBbar} - \underbrace{\big( \BSbar_r - \BSbar_c \big)}_{\DSCbar} + \underbrace{\BSbar_r}_{\UNCbar}.
\end{equation}
The three terms of this exact decomposition reveal deeper insight into the performance of the assessed forecasts: The predictive probabilities are calibrated if they are close to their conditional event probabilities, and hence, low values of $\MCBbar$ indicate a good performance in terms of calibration.  A perfectly calibrated forecast sequence can be constructed by issuing the marginal probability $r$ over all instances.  Even though perfectly calibrated, such a sequence would not be informative, since the same predictive probability is issued throughout.  For such a sequence, we would obtain $\DSCbar = 0$, which has a negative effect on the score, whereas larger values of $\DSCbar$ are obtained if the calibrated forecasts can discriminate different scenarios better than the reference forecast.  Finally, the $\UNCbar$ component informs about the inherent difficulty of the prediction problem and is independent of the forecasts. 

The rationale behind the decomposition in \eqref{eq:Murphy_decomposition} can be summarized as the following recipe: Having available a calibration method that transforms the original forecasts $p_1, \dots, p_n$ into calibrated forecasts $q_1, \dots, q_n$, one can measure miscalibration as the difference in the mean score of the original forecasts to the calibrated ones, resulting in the $\MCBbar$ term.  The CORP (Consistent, Optimally binned, Reproducible, and PAV algorithm based) score decomposition suggested by \cite{CORP} uses this general recipe, where the calibrated forecasts $q_1, \dots, q_n$ are computed by applying nonparametric isotonic regression on the vector $(y_1, \dots, y_n)$ with respect to the order induced by $(p_1, \dots, p_n)$.  The authors argue  that ``the assumption of nondecreasing CEPs is natural, as decreasing estimates are counterintuitive, routinely being dismissed as artifacts by practitioners'' \cite[p.~4]{CORP}.  If we consider, e.g., the conditional event probability over all events where we predicted a positive outcome with probability $0.5$, then we should expect this value to be smaller than the conditional event probability over all events where we predicted a positive outcome with probability $0.6$.  As noted by \citet{Bentzien_Friederichs_2014}, \citet{Siegert}, \citet{LH2021}, and \citet{GLS2023}, and discussed in detail by \citet{Tilmann_Johannes_Calibration}, the recipe extends to scores other than the Brier score and general types of statistical functionals. 

In this paper, we focus on the continuous ranked probability score (\crps; \citealp{Matheson_Winkler_1976}).  The \crps\ is one of the most prominent scoring rules for the evaluation of probabilistic forecasts for real-valued outcomes and is popular across application areas and methodological communities; see, e.g., \citet{Gneiting2005b}, \citet{Hothorn2014}, \citet{Pappenberger2015}, \citet{Rasp2018}, and \citet{Gasthaus2019}.  The \crps\ is defined in terms of any cumulative distribution function (cdf) $F$ on $\R$ and $y \in \R$, and given by 
\begin{align*}
\crps(F,y) = \int_{\R} \big( F(z) - \one \{ y \leq z \} \big)^2 \diff z.
\end{align*}
For a sequence of forecast--observation pairs $(F_1,y_1), \dots, (F_n,y_n)$, comprising a predictive distribution $F_i$ and a corresponding real-valued outcome $y_i$, the mean \crps, 
\begin{equation}  \label{eq:average_CRPS}
\Sbar = \frac{1}{n} \sum_{i=1}^n \crps(F_i,y_i)
\end{equation}
serves to quantify the overall performance of the forecasts.  Possible decompositions of the mean score at \eqref{eq:average_CRPS} have been discussed in the literature, with the most prominent approaches being introduced by \citet{Hersbach_2000} and \citet{Candille_2005}.  These methods offer promising solutions but come with severe limitations.  In a nutshell, the Hersbach decomposition lacks a theoretical background and the desirable property that the components of the decomposition are nonnegative, whereas the decomposition of \citet{Candille_2005} is not practically feasible, as acknowledged by the authors.  Another approach for decomposing the mean \crps\ is by exploiting its representation as an integral over Brier scores, compare \eqref{eq:CRPS_BS}, and then integrating existing decompositions of $\BSbar$.  Similarly, the \crps\ can be expressed as an integral over quantile scores, see \eqref{eq:CRPS_QS}, and existing decompositions for quantile scores can be leveraged to decompose the mean score at \eqref{eq:average_CRPS}.  However, these approaches have the drawback that miscalibration and discrimination ability are not measured with respect to the full probabilistic forecasts but only with respect to individual threshold or quantile levels.  

In this article, we propose a new decomposition of the mean \crps\ based on Isotonic Distributional Regression \citep[IDR;][]{IDR}.  In the case of binary outcomes, \citet{CORP} argue that isotonicity between the predictive probabilities and the calibrated forecasts is a natural constraint, since violations of isotonicity lead to poor predictive performance.  This argument generalizes to the real-valued setting, since it is natural to assume that the conditional law of the outcome, given the forecast, should tend to be small (large) if the predictive distribution is small (large), where notions of small and large are understood with respect to the usual stochastic order.  IDR is a nonparametric distributional regression technique that honors the shape constraint of isotonicity between covariates and responses.  Applying IDR to the data $(F_1,y_1), \dots, (F_n,y_n)$ yields calibrated forecasts, whereas the marginal distribution of the outcomes $y_1, \ldots, y_n$ serves as static reference forecast.  The general recipe from \eqref{eq:Murphy_components} and \eqref{eq:Murphy_decomposition} then yields mean scores for the calibrated forecast and the reference forecast, respectively, and a corresponding exact decomposition, 
\begin{align*}
\Sbar = \MCBbar_\ISO - \DSCbar_\ISO + \UNCbar_0,
\end{align*}
of the mean \crps\ at \eqref{eq:average_CRPS}, to which we refer as the isotonicity-based decomposition.  The isotonicity-based approach guarantees the nonnegativity of the three components, and the miscalibration term admits a persuasive interpretation in terms of calibration. 

While auto-calibration serves as the universal notion of calibration for binary events \citep[Theorem 2.11]{Gneiting_Ranjan_2013}, for real-valued random outcomes, numerous different notions of calibration are found in the literature \citep{Dawid1984, Diebold1998, Strahl_Ziegel_2017, Arnold_Henzi_Ziegel}, as reviewed by \citet{Tilmann_Johannes_Calibration}.  The strongest notion is auto-calibration and, ideally, one would like to measure miscalibration as deviation from auto-calibration, as targeted by the decomposition of \citet{Candille_2005}.  However, the Candille--Talagrand approach yields degenerate empirical decompositions.  Therefore, we quantify miscalibration as the deviation from isotonic calibration, as introduced by \cite{ICL} in a study of the population version of IDR.  Isotonic calibration is closer to auto-calibration than the notions of calibration targeted by the Hersbach decomposition, or by the aforementioned decompositions based on Brier or quantile scores. 

The remainder of the paper is organized as follows.  Section \ref{sec:empirical} reviews the previously proposed  decompositions and their properties.  In Section \ref{sec:ISO}, we develop the empirical version of the new isotonicity-based decomposition, followed by a thorough study of the population versions of the various types of decomposition and their properties in Section \ref{sec:population}, with particular emphasis on calibration.  In Section \ref{sec:case_study}, we apply the proposed isotonicity-based decomposition in case studies from meteorology and machine learning.  The main part of the paper closes with a discussion in Section \ref{sec:discussion}.  Proofs, technical comments, and a series of detailed analytic examples in population settings are available in Appendices \ref{app:BS_QS} through \ref{app:proofs}.

\section{Previously proposed empirical decompositions}  \label{sec:empirical}

\subsection{Preliminaries}  \label{sec:preliminaries}

Throughout the article, we denote by $\PR$ the class of all probability distributions on $\R$ with finite first moment.  We treat its elements interchangeably as probability measures or cumulative distribution functions ({\cdf}s).

Single-valued forecasts for functionals of an unknown quantity should be compared using consistent scoring functions \citep{Gneiting2011a}.  For example, the \textit{quadratic score} ${\rm s}(x,y) = (x-y)^2$, and the piecewise linear \textit{quantile score}
\begin{align}  \label{eq:QS}
\qs(x,y) = (\one \{ y \leq x \} - \alpha) \, (x - y),
\end{align}
where $x, y \in \R$, are consistent scoring functions for the mean functional, and for the quantile at level $\alpha \in (0,1)$, respectively.  In other words, $\int (x-y)^2 \diff F(y)$ is minimal when $x$ is the mean of $F \in \PR$, and $\int \qs(x,y) \diff F(y)$ is minimal when $x$ is a quantile of $F$ at level $\alpha \in (0,1)$.  

Probabilistic forecasts specify a probability measure over all possible values of the outcome, and predictive performance ought to be be compared and evaluated using proper scoring rules \citep{Gneiting2007a}.  A popular proper scoring rule for probability forecasts of a binary outcome is the \textit{Brier score} 
\begin{align}  \label{eq:BS}
\bs(p,y) = (p-y)^2,
\end{align}
where $p \in [0,1]$ and $1 - p$ are the predicted probabilities of the outcomes $y = 1$ and $y = 0$, respectively.  A key example of a proper scoring rule for predictive distributions over $\R$ is the \textit{continuous ranked probability score} (crps), defined for all $F \in \PR$ and $y \in \R$, and given equivalently by
\begin{align}
\crps(F,y) & = \int \bs(F(z), \one \{ y \leq z\}) \diff z \label{eq:CRPS_BS} \\
           & = \int_0^1 \qs(F^{-1}(\alpha), y) \diff \alpha \label{eq:CRPS_QS},
\end{align}
where $\bs$ and $\qs$ are defined at \eqref{eq:BS} and \eqref{eq:QS}, respectively, and where $F^{-1}$ denotes the quantile function defined as $F^{-1}(\alpha) = \inf\{ z \in \R \mid F(z) \geq \alpha \}$ for $\alpha \in (0,1)$.  The representation at \eqref{eq:CRPS_QS} is due to \citet{Laio_Tamea_2007}.

We consider a collection
\begin{equation}  \label{eq:data} 
(F_1, y_1), \dots, (F_n, y_n)
\end{equation}
of tuples that comprise a forecast $F_i \in \PR$ in the form of a \cdf\ and the respective outcome $y_i \in \R$, where $i = 1, \dots, n$.  Our aim is to decompose the empirical mean score, 
\begin{equation}  \label{eq:meanCRPS}
\Sbar = \frac{1}{n} \sum_{i=1}^n \crps(F_i,y_i),
\end{equation}
of the forecast--observation pairs at \eqref{eq:data} into three distinct components, namely, miscalibration ($\MCBbar$), discrimination ($\DSCbar$), and uncertainty ($\UNCbar$).  The following desirable properties are relevant.  
\begin{itemize}  
\item[($E_1$)] The decomposition is exact, i.e.,
\begin{align*}
\Sbar = \MCBbar - \DSCbar + \UNCbar.
\end{align*}
\item[($E_2$)] The components $\MCBbar$, $\DSCbar$, and $\UNCbar$ are nonnegative.
\item[($E_3$)] The decomposition is not degenerate. Here, a decomposition is \textit{degenerate} if $\MCBbar = 0$ whenever $F_1, \dots, F_n$ are pairwise distinct.  
\item[($E_4$)] The $\DSCbar$ component vanishes if $F_1 = \dots = F_n$. 
\item[($E_5$)] The $\UNCbar$ component can be expressed in terms of the outcomes $y_1, \dots, y_n$ only.
\end{itemize}
These conditions do not depend on the use of any specific scoring rule; they are desirable for decompositions of mean scores in general.

An exact decomposition ($E_1$) is desirable, since it allows us to fully decompose the mean score.  A degenerate decomposition is undesirable, as in typical practice, such as in the case studies in Section \ref{sec:case_study}, the issued forecast distributions are pairwise distinct, and then the method is useless.  A static forecast, i.e., $F_1 = \dots = F_n$, has no discrimination ability, hence ($E_4$) is desirable.  Requirement ($E_5$) is natural since intrinsic uncertainty does not depend on the activities of forecasters.

Finally, we argue that there ought to be a population version of the decomposition that applies to any admissible joint distribution $\prob$ of tuples $(F,Y)$.  Furthermore, the population version ought to reduce to the empirical version if $\prob$ is the empirical measure for the data at \eqref{eq:data}.  We study decompositions at the population level in Section \ref{sec:population}.

\subsection{Candille--Talagrand decomposition}  \label{sec:CT} 

\citet{Candille_2005} naturally extend the idea of the Murphy decomposition at \eqref{eq:Murphy_decomposition}.  To describe their approach, let $\delta_y$ denote the Dirac or point measure in $y \in \R$, and let the marginal law $\hat{F}_\mg = \frac{1}{n} \sum_{i=1}^n \delta_{y_i}$ denote the empirical distribution of the outcomes $y_1, \dots, y_n$ in \eqref{eq:data}.  Let $\hat{F}_i$ be the auto-calibrated version of the forecast $F_i$ in \eqref{eq:data}, i.e., let $\hat{F}_i$ be the normalized version of $\sum_{j=1}^n \one \{ F_j = F_i \} \, \delta_{y_j}$ for $i = 1, \dots, n$.  Then
\begin{equation}  \label{eq:Sbar_mg_ac}
\Sbar_\mg = \frac{1}{n} \sum_{i=1}^n \crps(\hat{F}_\mg,y_i)
\eqand
\Sbar_\textrm{ac} = \frac{1}{n} \sum_{i=1}^n \crps(\hat{F}_i,y_i)
\end{equation}
are the mean score of the marginal forecast and the auto-calibrated forecast, respectively.  \citet{Candille_2005} define uncertainty, miscalibration, and discrimination components as 
\begin{equation}  \label{eq:UNC}
\UNCbar_0 = \Sbar_\mg, 
\end{equation}
\begin{equation}  \label{eq:CTcomponents}
\MCBbar_\CT = \Sbar - \Sbar_\textrm{ac}, \qquad \DSCbar_\CT = \Sbar_\mg - \Sbar_\textrm{ac}, 
\end{equation}
respectively, to yield the \textit{Candille--Talagrand} (CT) \textit{decomposition}  
\begin{equation}  \label{eq:CTdecomposition}
\Sbar = \MCBbar_\CT - \DSCbar_\CT + \UNCbar_0. 
\end{equation}
The Candille--Talagrand decomposition tackles the core idea of auto-calibration and satisfies properties ($E_1$), ($E_2$), ($E_4$), and ($E_5$), but fails to satisfy the nondegeneracy condition ($E_3$), which prohibits its practical use.   

To avoid a degenerate decomposition, one might  partition the forecasts into equivalence classes of {\cdf}s that are considered identical when calibrating \citep[p.~2147]{Candille_2005}.  However, the choice of such a partition is challenging and the decomposition depends on its effects, akin to the effects of binning on the classical reliability diagram for probability forecasts of a binary event as described by \citet{CORP} and references therein.

\subsection{Brier score based decomposition}  \label{sec:BS} 

The Brier score based representation of individual \crps\ values at \eqref{eq:CRPS_BS} implies that 
\begin{equation}  \label{eq:Sbar_BS}
\Sbar = \frac{1}{n} \sum_{i=1}^n \crps(F_i,y_i) = \int_{- \infty}^\infty \! \BSbar_z \diff z, 
\end{equation}
where
\begin{align*}
\BSbar_{z} = \frac{1}{n} \sum_{i=1}^n \bs( F_i(z), \one \{ y_i \leq z \}). 
\end{align*}
In this light, a natural way of decomposing $\Sbar$ lies in integrating a given decomposition of the mean Brier score, as proposed and implemented by \citet{Ferro_Fricker_2012}, \citet{Todter_Ahrens_2012}, and \citet{Lauret_et_al_2019}, among other authors.

Specifically, suppose that, for each $z \in \R$, there is a decomposition $\BSbar_{z} = \MCBbar_{\BS,z} - \DSCbar_{\BS,z} + \UNCbar_{\BS,z}$ of the mean Brier score.  Then we can define 
\begin{equation}  \label{eq:BScomponents}
\MCBbar_\BS = \int_{- \infty}^\infty \! \MCBbar_{\BS,z} \diff z, \;
\DSCbar_\BS = \int_{- \infty}^\infty \! \DSCbar_{\BS,z} \diff z, \;
\UNCbar_\BS = \int_{- \infty}^\infty \! \UNCbar_{\BS,z} \diff z. 
\end{equation}
The CORP approach of \citet{CORP} yields a compelling decomposition of the mean Brier score, which does neither require tuning, nor binning of the assessed predictive probabilities, and enforces a natural shape constraint of isotonicity between the predictive probabilities and the calibrated forecasts.  Throughout this article, we decompose the mean Brier score by the CORP approach and refer to the induced decomposition, namely,  
\begin{equation}  \label{eq:BSempirical}
\Sbar = \MCBbar_\BS - \DSCbar_\BS + \UNCbar_\BS, 
\end{equation}
as the \textit{Brier score based} (BS) decomposition of $\Sbar$.  Details of this approach are reviewed in Appendix \ref{app:BS}, where we prove the following result.

\begin{prop}  \label{prop:BS}
For the Brier score based decomposition at \eqref{eq:BSempirical} it holds that\/ $\UNCbar_\BS = \UNCbar_0$, and the decomposition satisies properties ($E_1$), ($E_2$), ($E_3$), ($E_4$), and ($E_5$).
\end{prop}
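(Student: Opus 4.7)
The plan is to reduce the proposition to the pointwise CORP decomposition of the mean Brier score and then integrate over thresholds via the representation \eqref{eq:Sbar_BS} of the \crps. For each $z \in \R$, the CORP construction applied to the probability forecasts $F_1(z), \dots, F_n(z)$ and the binary outcomes $\one\{y_i \leq z\}$ furnishes an exact decomposition $\BSbar_z = \MCBbar_{\BS,z} - \DSCbar_{\BS,z} + \UNCbar_{\BS,z}$ into three nonnegative summands, with constant reference forecast $\hat{F}_\mg(z) = \frac{1}{n}\sum_{i=1}^n \one\{y_i \leq z\}$ and pool-adjacent-violators (PAV) based calibrated forecasts; all of this is collected in Appendix \ref{app:BS}. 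Inserting this identity into \eqref{eq:Sbar_BS} and using the definitions at \eqref{eq:BScomponents} immediately yields ($E_1$), and monotonicity of the integral then gives ($E_2$). Integrability of the three nonnegative integrands is controlled by the pointwise bound $\MCBbar_{\BS,z}, \DSCbar_{\BS,z}, \UNCbar_{\BS,z} \leq \BSbar_z$ combined with $\int_\R \BSbar_z \diff z = \Sbar < \infty$, while their measurability in $z$ follows from right-continuity of the \cdf s $F_i$ and the piecewise-constant dependence of the PAV fit on the ordering of $F_1(z), \dots, F_n(z)$.

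The identification $\UNCbar_\BS = \UNCbar_0$ together with property ($E_5$) is then a short Fubini exercise: substituting the explicit form $\UNCbar_{\BS,z} = \frac{1}{n}\sum_i (\hat{F}_\mg(z) - \one\{y_i \leq z\})^2$ and exchanging sum and integral via \eqref{eq:CRPS_BS} produces $\UNCbar_\BS = \frac{1}{n}\sum_i \crps(\hat{F}_\mg, y_i) = \Sbar_\mg = \UNCbar_0$, which is a function of $y_1, \dots, y_n$ only. Property ($E_4$) is equally direct: if $F_1 = \cdots = F_n$, then for every $z$ the values $F_i(z)$ are constant in $i$, the PAV fit reduces to the marginal frequency $\hat{F}_\mg(z)$, and hence $\DSCbar_{\BS,z} = 0$ pointwise, so $\DSCbar_\BS = 0$ after integration.

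For nondegeneracy ($E_3$) it suffices to exhibit one configuration of pairwise distinct forecasts with $\MCBbar_\BS > 0$; I would take $n = 2$ with $F_1 = \delta_0$, $F_2 = \delta_1$, $y_1 = 1$, and $y_2 = 0$. On $z \in [0,1)$ the predictions $(F_1(z), F_2(z)) = (1,0)$ reverse the indicator outcomes $(0,1)$, so PAV pools both observations to $\tfrac12$, yielding $\BSbar_{c,z} = \tfrac14$ while $\BSbar_z = 1$; hence $\MCBbar_{\BS,z} = \tfrac34 > 0$ on $[0,1)$ and $\MCBbar_\BS \geq \tfrac34 > 0$ after integration. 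The only mildly subtle step in the whole argument is the measurability and integrability check outlined above; everything else reduces to properties of the CORP Brier decomposition that are proved in Appendix \ref{app:BS}, plus the small counterexample just given.
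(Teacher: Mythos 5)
Your proof follows the same route as the paper's: reduce to the pointwise CORP decomposition of $\BSbar_z$, integrate over thresholds, and verify each property term by term; your $(E_3)$ counterexample ($F_1=\delta_0$, $F_2=\delta_1$, $y_1=1$, $y_2=0$, giving $\MCBbar_{\BS}=3/4>0$) is correct and simply plays the role of the paper's Example \ref{app:3}. The one inaccuracy is in your integrability aside: the pointwise bounds $\UNCbar_{\BS,z}\leq\BSbar_z$ and $\DSCbar_{\BS,z}\leq\BSbar_z$ are false in general (for perfect forecasts $\BSbar_z=0$ while $\UNCbar_{\BS,z}=\hat{F}_\mg(z)(1-\hat{F}_\mg(z))>0$), but nothing of substance is affected, since $\int\UNCbar_{\BS,z}\diff z=\UNCbar_0<\infty$ directly and $\DSCbar_{\BS,z}\leq\UNCbar_{\BS,z}$, while $\MCBbar_{\BS,z}\leq\BSbar_z$ does hold.
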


Despite these favorable properties, the Brier score based decomposition is subject to shortcomings and inconsistencies, due to the isolated treatment of probability forecasts at fixed thresholds.  For discussion, we refer the reader to Section \ref{sec:numerical_example} and Appendix \ref{app:BS_QS}. 

\subsection{Quantile score based decomposition}  \label{sec:QS}

In view of the quantile score representation of the \crps\ at \eqref{eq:CRPS_QS}, a natural approach to decomposing the mean score $\Sbar$ leverages decompositions of the mean quantile score at \eqref{eq:QS}.  Specifically, the quantile score representation implies that
\begin{align*}
\Sbar = \frac{1}{n}\sum_{i=1}^n \crps(F_i,y_i) = \int_{-\infty}^\infty \QSbar_\alpha \diff\alpha,
\end{align*}
where
\begin{align*}
\QSbar_\alpha = \frac{1}{n}\sum_{i=1}^n \qs(F_i^{-1}(\alpha),y_i).
\end{align*}
Suppose that for each $\alpha \in (0,1)$, there is a decomposition $\QSbar_\alpha = \MCBbar_{\QS,\alpha} - \DSCbar_{\QS,\alpha} + \UNCbar_{\QS,\alpha}$ of the mean quantile score, and define $\MCBbar_\QS$ as the integral of $\MCBbar_{\QS,\alpha}$ over $\alpha \in (0,1)$, and similarly for the discrimination and uncertainty components.  The CORP score decomposition of \citet{CORP} and its core idea of isotonicity as a shape constraint between issued and calibrated forecasts extend naturally to quantiles, as discussed by \citet[Section 3.3]{Tilmann_Johannes_Calibration} and \citet[Section 3.3]{Gneiting_2023}.  Throughout the article, we decompose the mean quantile score by the CORP approach and refer to the resulting decomposition, namely,
\begin{equation}  \label{eq:QSempirical}
\Sbar = \MCBbar_\QS - \DSCbar_\QS + \UNCbar_\QS, 
\end{equation}
as the \textit{quantile score based} (QS) decomposition of $\Sbar$.  For details, we refer the reader to Appendix \ref{app:QS} where we prove the following result.  

\begin{prop}  \label{prop:QS}
For the quantile score based decomposition at \eqref{eq:QSempirical} it holds that\/ $\UNCbar_\QS = \UNCbar_0$, and the decomposition satisfies properties ($E_1$), ($E_2$), ($E_3$), ($E_4$), and ($E_5$).
\end{prop}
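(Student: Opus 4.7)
The plan is to leverage the pointwise CORP decomposition of the mean quantile score at each level $\alpha \in (0,1)$, reviewed in Appendix \ref{app:QS}, in exact parallel to the argument for Proposition \ref{prop:BS}. The strategy is that properties that hold pointwise in $\alpha$ for the CORP decomposition of $\QSbar_\alpha$ transfer to the integrated quantities $\MCBbar_\QS$, $\DSCbar_\QS$, and $\UNCbar_\QS$ via Fubini and the nonnegativity of the integrands.

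First, I would record that for each $\alpha \in (0,1)$ the CORP decomposition gives $\QSbar_\alpha = \MCBbar_{\QS,\alpha} - \DSCbar_{\QS,\alpha} + \UNCbar_{\QS,\alpha}$ with nonnegative components, and with $\UNCbar_{\QS,\alpha} = \frac{1}{n}\sum_{i=1}^n \qs(\hat{F}_\mg^{-1}(\alpha), y_i)$. Integrating the pointwise identity over $\alpha$ yields ($E_1$), and integrating nonnegative components preserves nonnegativity, giving ($E_2$). For the identity $\UNCbar_\QS = \UNCbar_0$, Fubini together with the quantile-score representation of the \crps\ at \eqref{eq:CRPS_QS} gives
\begin{align*}
\UNCbar_\QS
&= \int_0^1 \UNCbar_{\QS,\alpha} \diff \alpha
 = \frac{1}{n}\sum_{i=1}^n \int_0^1 \qs(\hat{F}_\mg^{-1}(\alpha), y_i) \diff \alpha \\
&= \frac{1}{n}\sum_{i=1}^n \crps(\hat{F}_\mg, y_i) = \Sbar_\mg = \UNCbar_0,
\end{align*}
which simultaneously verifies ($E_5$), since $\hat{F}_\mg$ depends only on $y_1, \dots, y_n$. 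For ($E_4$), if $F_1 = \dots = F_n$ then the quantile values $F_i^{-1}(\alpha)$ coincide across $i$ for every $\alpha$, so the CORP discrimination for a static quantile forecast vanishes pointwise, and $\DSCbar_\QS = 0$ upon integration.

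The main obstacle is nondegeneracy ($E_3$): showing that pairwise distinct forecasts do not force $\MCBbar_\QS = 0$. The subtlety is that pairwise distinct {\cdf}s $F_i$ need not induce pairwise distinct values $F_i^{-1}(\alpha)$ at a particular level $\alpha$, so one cannot simply apply the binary nondegeneracy at a single $\alpha$. I would exhibit a small configuration, e.g.\ two (or three) pairwise distinct predictive {\cdf}s together with outcomes $y_i$ that reverse the stochastic ordering of the forecasts, for which the pointwise CORP miscalibration $\MCBbar_{\QS,\alpha}$ of the mean quantile score is strictly positive for all $\alpha$ in a set of positive Lebesgue measure. Then $\MCBbar_\QS = \int_0^1 \MCBbar_{\QS,\alpha} \diff \alpha > 0$ while the forecasts remain pairwise distinct, which rules out degeneracy. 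This mirrors the corresponding argument for the Brier score based decomposition in Appendix \ref{app:BS}, with thresholds replaced by quantile levels.
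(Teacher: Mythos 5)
Your proposal is correct and follows essentially the same route as the paper: integrate the pointwise CORP decomposition of $\QSbar_\alpha$ over $\alpha$, use Fubini with the representation \eqref{eq:CRPS_QS} to identify $\UNCbar_\QS$ with $\UNCbar_0$, and note that equal forecasts give a constant PAV fit at every level. The only remaining task is to actually write down the nondegeneracy witness for ($E_3$); the paper does exactly what you describe by pointing to its two-atom example (Example \ref{app:3}), where two pairwise distinct {\cdf}s yield $\MCBbar_\QS = 13/16 > 0$.
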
  

The quantile score based decomposition is subject to shortcomings in analogy to the issues with the Brier score based approach, due to the reliance on quantile forecasts at fixed levels; for further discussion see Section \ref{sec:numerical_example} and Appendix \ref{app:BS_QS}. 

\subsection{Hersbach decomposition }  \label{sec:HB}

The decomposition of \citet{Hersbach_2000} applies specifically to ensemble forecasts and operates under the implicit assumption of a continuous outcome.  For the data at \eqref{eq:data}, Hersbach's assumptions imply, without loss of generality, that for $i = 1, \dots, n$ the forecast $F_i$ is the empirical \cdf\ of a fixed number $m$ of values $x_1^i \le \dots \le x_m^i$, with the outcome $y_i \not \in \{ x_1^i, \dots, x_m^i \}$ being distinct from these values.  However, with a view towards a generalization of the Hersbach decomposition, we allow for any real-valued outcome $y_i$.  Figure \ref{fig:HB} in Appendix \ref{app:HB} illustrates in detail how the case $y_i \in \{ x_1^i, \dots, x_m^i \}$ should be handled in the Hersbach decomposition.

In line with the other types of decomposition, \citet{Hersbach_2000} defines the uncertainty component as $\UNCbar_0$ at \eqref{eq:UNC}.  The miscalibration component, which \citet{Hersbach_2000} refers to as reliability, is
\begin{align*}
\MCBbar_\HBO = \sum_{\ell=0}^m \bar{g}_\ell \left( p_\ell - \bar{o}_\ell \right)^2,
\end{align*}
where $p_\ell = \ell/m$ for $\ell = 0, \dots, m$, and $\bar{g}_\ell$ is the average width of bin $i$, i.e.,
\begin{equation}  \label{eq:g_ell}
\bar{g}_\ell = \frac{1}{n} \sum_{i=1}^n (x_{\ell+1}^i - x_\ell^i)
\end{equation}
for $\ell = 1, \dots, m-1$.  The term $\bar{o}_\ell$ approximates the average frequency of an outcome below the midpoint of bin $\ell$; specifically, 
\begin{align*}
\bar{o}_\ell = \bar{f}_\ell - \bar{m}_\ell,
\end{align*}
where 
\begin{align}  \label{eq:f_ell}
\bar{f}_\ell = \frac{1}{n\bar{g}_\ell} \sum_{i=1}^n \one \{ F_i(y_i) \le p_\ell\} \, (x_{\ell+1}^i - x_\ell^i)
\;\;\text{and}\;\;
\bar{m}_\ell = \frac{1}{n\bar{g}_\ell} \sum_{i=1}^n \one \{ x_\ell^i < y_i < x_{\ell+1}^i \} \, (y_i - x_\ell^i)
\end{align}
for $\ell = 1, \dots, m - 1$.  For any $\ell$ with $x_{\ell}^i < x_{\ell+1}^i$ it holds that $F_i(y_i) \le p_l$ if, and only if, $y_i < x_{\ell+1}^i$.  To complete the specification, we let $\bar{o}_0 = (1/n) \sum_{i=1}^n \one \{ y_i < x_1^i \}$ and $\bar{o}_m = (1/n) \sum_{i=1}^n \one \{ x_m^i < y_i \}$, and if these quantities are nonzero then we let $\bar{g}_0 = (1/(n \bar{o}_0)) \sum_{i=1}^n \one \{ y_i < x_1^i \} \, (x_1^i - y_i)$ and $\bar{g}_m = (1/(n \bar{o}_m)) \sum_{i=1}^n \one \{ x_m^i < y_i \} \, (y_i - x_m^i)$.  The miscalibration component thus measures deviations from uniformity for the rank histogram \citep{Hamill_2001, Gneiting2007b}.

\citet{Hersbach_2000} defines the resolution (in our terminology, the discrimination) component $\DSCbar_\HBO = \MCBbar_\HBO + \UNCbar_0 - \Sbar$ as the remainder, to complete the \textit{original Hersbach} (\HBO) \textit{decomposition} 
\begin{equation}  \label{eq:Hersbach} 
\Sbar = \MCBbar_\HBO - \DSCbar_\HBO + \UNCbar_0.  
\end{equation}
Towards a generalization, we introduce a slightly modified miscalibration component, 
\begin{equation}  \label{eq:MCB_HB}
\MCBbar_{\HB} = \sum_{\ell=1}^{m-1} \bar{g}_\ell \left( p_\ell - \bar{f}_\ell \right)^2, 
\end{equation}
and a respectively modified discrimination component, $\DSCbar_{\HB} = \MCBbar_{\HB} + \UNCbar_0 - \Sbar$, to yield the \textit{modified Hersbach}, or simply \textit{Hersbach} (\HB) \textit{decomposition},  
\begin{equation}  \label{eq:Hersbach_mod} 
\Sbar = \MCBbar_\HB - \DSCbar_\HB + \UNCbar_0.  
\end{equation}
The interpretation of the miscalibration component remains unchanged, as $\MCBbar_\HB$ and $\MCBbar_\HBO$ differ only slightly, with $\bar{f}_\ell$ in \eqref{eq:MCB_HB} being the approximate frequency of an outcome below the right endpoint of bin $\ell$.  For a more detailed comparison and the proof of the following result, we refer the reader to Appendix \ref{app:HB}.

\begin{prop}  \label{prop:HB_decomp}
The original and modified Hersbach decompositions at \eqref{eq:Hersbach} and \eqref{eq:Hersbach_mod}, respectively, satisfy properties ($E_1$), ($E_3$), and ($E_5$), while properties ($E_2$) and ($E_4$) fail to hold.
\end{prop}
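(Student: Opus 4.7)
The proof splits cleanly into the three positive claims, which are largely immediate, and the two negative claims, which require counterexamples. For ($E_1$), both $\DSCbar_\HBO$ and $\DSCbar_\HB$ are defined precisely as residuals that force the identity $\Sbar = \MCBbar - \DSCbar + \UNCbar_0$ to hold, so exactness is automatic. Property ($E_5$) follows because $\UNCbar_0 = \Sbar_\mg$ at \eqref{eq:UNC} is a functional of the empirical distribution $\hat{F}_\mg$ of the outcomes alone. For ($E_3$), both $\MCBbar_\HB$ and $\MCBbar_\HBO$ are weighted sums of squared deviations $(p_\ell - \bar{f}_\ell)^2$ or $(p_\ell - \bar{o}_\ell)^2$, so it suffices to exhibit one configuration with pairwise distinct $F_i$ in which some such deviation is nonzero; a minimal choice has $n = 2$ with two distinct two-point empirical {\cdf}s whose supports both lie strictly below $y_1$ and $y_2$, forcing each indicator $\one\{F_i(y_i) \leq p_1\}$ to vanish and thus $\bar{f}_1 = 0 \neq 1/2$.

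The substantive work lies in ($E_2$) and ($E_4$), which require counterexamples. The key structural observation driving both failures is that $\MCBbar_\HB$ only sees the indicators $\one\{F_i(y_i) \leq p_\ell\}$ aggregated across $i$, so opposite-sign miscalibration across forecasts can cancel and produce a small $\MCBbar_\HB$ even when the assessed forecasts are far from the outcomes. Concretely for ($E_2$), take $n = 2$, $m = 2$, $F_1 = \frac{1}{2}(\delta_0 + \delta_1)$, $F_2 = \frac{1}{2}(\delta_2 + \delta_3)$ with swapped outcomes $y_1 = 2.5$, $y_2 = 0.5$. A direct evaluation yields $\bar{f}_1 = 1/2$, so $\MCBbar_\HB = 0$, while $\Sbar$ is much larger than $\UNCbar_0$; the residual $\DSCbar_\HB = \MCBbar_\HB + \UNCbar_0 - \Sbar$ is then strictly negative. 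The same configuration, with the additional boundary contributions $\bar{g}_0 (p_0 - \bar{o}_0)^2$ and $\bar{g}_m (p_m - \bar{o}_m)^2$ computed via the conventions at the end of Section \ref{sec:HB}, leaves $\DSCbar_\HBO$ negative as well.

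For ($E_4$), one uses a static forecast $F_1 = \dots = F_n = F$ and derives the pointwise identity $(F(z) - \hat{F}_\mg(z))^2 + \hat{F}_\mg(z)(1 - \hat{F}_\mg(z)) = \frac{1}{n}\sum_i (F(z) - \one\{y_i \leq z\})^2$, which upon integration gives $\Sbar - \UNCbar_0 = \int (F - \hat{F}_\mg)^2 \diff z$. This integral picks up the full variation of $\hat{F}_\mg$ within each bin $[x_\ell, x_{\ell+1})$, whereas $\MCBbar_\HB$ only samples $\hat{F}_\mg$ at the endpoint $x_{\ell+1}$ via $\bar{f}_\ell$, so the two generally disagree. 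A concrete instance with $F = \frac{1}{2}(\delta_0 + \delta_1)$ and $y_1 = 1/4$, $y_2 = 3/4$ yields $\MCBbar_\HB = 1/4$ but $\Sbar - \UNCbar_0 = 1/8$, so $\DSCbar_\HB = 1/8 \neq 0$; an analogous computation shows $\DSCbar_\HBO \neq 0$ for the same data. The main obstacle is not conceptual but bookkeeping: the boundary bins of $\HBO$ require that $\bar{o}_0, \bar{o}_m$ be nonzero for $\bar{g}_0, \bar{g}_m$ to be defined, so counterexamples must be engineered so these edge cases are handled unambiguously, which is achieved by the choices above.
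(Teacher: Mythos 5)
Your proof is correct and follows essentially the same route as the paper's: $(E_1)$ and $(E_5)$ hold by definition, $(E_2)$ and $(E_4)$ are refuted by small two-member, two-point counterexamples (the paper uses the static forecast $F = (\delta_{-1/2}+\delta_{1/2})/2$ with outcomes $\mp 1/6$, which is the same type of configuration as yours, together with its two-atom example in Appendix \ref{app:3}), and $(E_3)$ is settled by exhibiting a configuration of pairwise distinct forecasts with a nonzero deviation $p_\ell - \bar{f}_\ell$. The only notable difference is that your argument is somewhat more self-contained --- the paper delegates $(E_2)$ and $(E_3)$ for the modified decomposition to Example \ref{app:3} and $(E_3)$ for the original decomposition to numerical examples in Hersbach (2000), whereas you verify each claim directly for both versions --- and your computations check out.
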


As discussed thus far, the Hersbach decomposition requires that the forecasts assume the form of an ensemble.   Further shortcomings have been discussed in the literature \citep{Siegert}; in particular, it has been noted that the discrimination component $\DSCbar_\HBO$ is defined ``somewhat artificially'' \cite[p.~565]{Hersbach_2000} and that it can be negative, thus violating $(E_2)$.  The original Hersbach decomposition has been extended by Lalaurette so that it applies to forecasts with strictly increasing {\cdf}s \citep[Appendix A]{Candille_2005}.  We discuss and generalize Lalaurette's extension in Section \ref{sec:decomposition}, and our analysis demonstrates that the extensions can more naturally be interpreted as extensions of the modified Hersbach decomposition.  In Appendix \ref{app:proofs1} we describe empirical versions that apply in the general case of forecast distributions with finite support, and to mixed discrete-continuous distributions for nonnegative quantities, respectively. 

\subsection{Numerical example and discussion}  \label{sec:numerical_example}

For illustration, we consider forecasts from the case studies in Section \ref{sec:case_study}.  The decompositions from Sections \ref{sec:CT} through \ref{sec:HB} all use the uncertainty component $\UNCbar_0$ at \eqref{eq:UNC}, and they specify the discrimination component as 
\begin{align*}
\DSCbar_\bullet = \Sbar - \MCBbar_\bullet - \UNCbar_0, 
\end{align*}
where $\bullet$ indicates the type of decomposition, namely, the Candille--Talagrand (\CT), the Brier score based (\BS), the quantile score based (\QS), or the modified Hersbach (\HB) decomposition.

\renewcommand{\arraystretch}{1.3}
\begin{table}[t] 
\centering
\caption{Candille--Talagrand (\CT), quantile score based (\QS), Brier score based (\BS), and Hersbach (\HB) decomposition of the mean score $\Sbar$, as applied to the one-day ahead raw ensemble (ENS) forecast of precipitation accumulation at Frankfurt Airport (Section \ref{sec:weather}), and the EasyUQ forecast for the Boston and Wine data, respectively (Section \ref{sec:ML}).}  \label{tab:decomposition}
\bigskip
\small
\begin{tabular}{lccccccc} 
\toprule
Forecast         & $\Sbar$ & $\UNCbar_0$ & $\MCBbar_\CT$ & $\MCBbar_\QS$ & $\MCBbar_\BS$ & $\MCBbar_\HB$  \\
\midrule
ENS              & 0.75    & 1.21        & 0.75          & 0.18          & 0.16          & 0.08 \\
EasyUQ (Boston)  & 1.75    & 4.76        & 1.75          & 0.72          & 0.57          & 0.36 \\
EasyUQ (Wine)    & 0.35    & 0.43        & 0.35          & 0.04          & 0.07          & 0.08 \\
\bottomrule
\end{tabular}
\end{table}

Table \ref{tab:decomposition} displays the mean score $\Sbar$, the uncertainty component $\UNC_0$, and the various $\MCBbar_\bullet$ terms for the ENS forecast of precipitation accumulation at Frankfurt Airport, as studied in our Section \ref{sec:weather} and \citet{IDR}, and the EasyUQ forecasts for the Boston Housing and Wine data, as considered in our Section \ref{sec:ML} and \citet{EasyUQ}.  The ENS forecast is an ensemble forecast with $m = 52$ members and so the Hersbach decomposition at \eqref{eq:Hersbach} applies; for the EasyUQ forecasts, we apply formula \eqref{eq:MCB_HB_finite_support} from Appendix \ref{app:proofs1}.  For the first two examples in the table, it holds that $\Sbar = \MCBbar_\CT > \MCBbar_\QS > \MCBbar_\BS > \MCBbar_\HB$, where the initial equality reflects the degeneracy of the Candille--Talagrand decomposition.  In our experience, the subsequent inequalities hold in many, though not all, empirical examples.  However, as we state in further generality at  \eqref{eq:inequalities_empirical} and in Corollary \ref{cor:order_of_MCB}, it always holds that $\Sbar \geq \MCBbar_\CT \geq \max \{ \MCBbar_\BS, \MCBbar_\QS \}$. 

While the Candille--Talagrand decomposition seems attractive and preferable from theoretical perspectives, the degeneracy prohibits its practical use.  The Hersbach decomposition has been popular in the specific setting of ensemble forecasts, but has serious shortcomings including but not limited to the possibility of a negative discrimination component.  The Brier score and quantile score based decompositions have desirable properties, but they define the components of the decomposition in terms of isolated functionals (probabilities and quantiles, respectively) rather than the entire predictive distributions, which is ``unsatisfactory'' \citep[p.~1958]{Ferro_Fricker_2012} and entails the artifacts described in Remarks \ref{rem:Fhat_i_need_not_be_cdfs} and \ref{rem:Fhat_i_need_not_be_cdfs_quantiles}, respectively.  Furthermore, it is not obvious whether the Brier score based or the quantile score based decomposition ought to be preferred.  In this light, there remains the need for a decomposition that is both practically feasible and theoretically justifiable and appealing. 

\section{Empirical isotonicity-based decomposition}  \label{sec:ISO}

We propose a method that builds on the idea of the Candille--Talagrand decomposition, but replaces auto-calibration with a slightly weaker notion of calibration, namely, isotonic calibration.  The resulting isotonicity-based decomposition, which we develop in this section, can be interpreted as a nondegenerate approximation to the Candille--Talagrand decomposition.

\subsection{Empirical isotonicity-based decomposition}  \label{sec:ISO_standard}

Recall that we denote by $\PR$ the class of the probability distributions on $\R$ with finite first moment.  For cdfs $F, G$, $F$ is stochastically smaller than or equal to $G$, for short $F \st G$, if $F(x)\geq G(x)$ for all $x \in \R$.  The stochastic order defines a partial order on $\PR$ and we refer to \citet{Shaked2007} for a comprehensive study. 

In the spirit of the Candille--Talagrand decomposition, a calibration tool ought to be applied to the assessed forecasts $F_1, \dots, F_n$ from \eqref{eq:data}, and we propose that this tool be isotonic distributional regression (IDR; \citealp{IDR}).  IDR is a nonparametric distributional regression method under the shape constraint of isotonicity between covariates and responses: For training data consisting of covariates $x_1, \dots, x_n$ in a partially ordered set $(\XX, \preceq)$ and real-valued responses $y_1, \dots, y_n$, \citet{IDR} prove that there exists a unique minimizer of the criterion
\begin{equation}  \label{eq:IDR_criterion}
\frac{1}{n} \sum_{i=1}^n \crps(P_i,y_i)
\end{equation}
over all vectors of {\cdf}s $(P_1, \dots, P_n)$ with $P_i \st P_j$ if $x_i \preceq x_j$ for $i, j = 1, \dots, n$, and they refer to this minimizer as the IDR solution.

The constraint of isotonicity between the assessed and the calibrated forecasts is natural, and hence, we apply IDR to the data $(F_1, y_1), \dots, (F_n,y_n)$ at \eqref{eq:data} with the stochastic order serving as the partial order on the covariate space $\PR$.  In a number of practically relevant situations the stochastic order is too strong, since it does not allow for crossings between {\cdf}s, and we discuss modifications that resolve this problem in the latter part of this section.  For now, we assume that there are sufficiently many pairs of cdfs across $F_1, \dots, F_n$ that can be ranked in stochastic order. 

Let $\check{F}_1, \dots, \check{F}_n$ denote the calibrated forecasts that are obtained by using IDR, let
\begin{align*}
\Sbar_\ISO = \frac{1}{n} \sum_{i=1}^n \crps(\check{F}_i,y_i)
\end{align*}
denote the mean score of the calibrated forecasts, let the marginal forecast $\hat{F}_\mg$ and its mean score $\Sbar_\mg$ be defined as at \eqref{eq:Sbar_mg_ac}, and let
\begin{align*}
\MCBbar_\ISO = \Sbar - \Sbar_\ISO, \quad \DSCbar_\ISO = \Sbar_\mg - \Sbar_\ISO.  
\end{align*}
Then the \textit{isotonicity-based} (ISO) decomposition
\begin{equation}  \label{eq:ISOdecomposition}
\Sbar = \MCBbar_\ISO - \DSCbar_\ISO + \UNCbar_0
\end{equation}
differs from the Candille--Talagrand decomposition at \eqref{eq:CTcomponents} by the choice of the calibration method only, as it draws on the slightly weaker notion of isotonic calibration in lieu of auto-calibration.  The isotonicity-based decomposition has desirable and appealing properties, as follows. 

\begin{prop}  \label{prop:ISO}
The isotonicity-based decomposition at \eqref{eq:ISOdecomposition} satisfies ($E_1$), ($E_2$), ($E_3$), ($E_4$), and ($E_5$).  Furthermore, $\MCBbar_\ISO = 0$ if, and only if, $F_i = \check{F}_i$ for\/ $i = 1, \dots, n$, and\/ $\DSCbar_\ISO = 0$ if, and only if, $\check{F}_i = \hat{F}_\mg$ for\/ $i = 1, \dots, n$.
\end{prop}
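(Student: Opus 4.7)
The plan is to dispose of each property in the order $(E_1), (E_5), (E_2)$, then the two ``iff'' characterizations, and finally derive $(E_4)$ and $(E_3)$ as consequences.

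Property $(E_1)$ is pure algebra: substituting the definitions gives $\MCBbar_\ISO - \DSCbar_\ISO + \UNCbar_0 = (\Sbar - \Sbar_\ISO) - (\Sbar_\mg - \Sbar_\ISO) + \Sbar_\mg = \Sbar$. Property $(E_5)$ is immediate, as $\UNCbar_0 = \Sbar_\mg$ depends on $y_1, \dots, y_n$ only through $\hat{F}_\mg$. For $(E_2)$ I would note that the trivial vector $(F_1, \dots, F_n)$ satisfies the isotonicity constraint in \eqref{eq:IDR_criterion} (reflexivity of $\st$), so $\Sbar_\ISO \leq \Sbar$, whence $\MCBbar_\ISO \geq 0$; likewise the constant vector $(\hat{F}_\mg, \dots, \hat{F}_\mg)$ is trivially isotonic, so $\Sbar_\ISO \leq \Sbar_\mg$, whence $\DSCbar_\ISO \geq 0$. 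Nonnegativity of $\UNCbar_0$ is obvious because it is a mean of $\crps$ values.

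The two ``iff'' claims rest on the uniqueness of the IDR solution stated by \citet{IDR}: the minimizer of \eqref{eq:IDR_criterion} over isotonic vectors is unique. Thus $\MCBbar_\ISO = \Sbar - \Sbar_\ISO = 0$ forces $(F_1, \dots, F_n)$ to itself attain the minimum, and uniqueness yields $F_i = \check{F}_i$ for all $i$; the converse is immediate. Similarly, $\DSCbar_\ISO = \Sbar_\mg - \Sbar_\ISO = 0$ forces the feasible constant vector $(\hat{F}_\mg, \dots, \hat{F}_\mg)$ to be optimal, so uniqueness gives $\check{F}_i = \hat{F}_\mg$ for every $i$. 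Property $(E_4)$ then follows from the second characterization: if $F_1 = \dots = F_n$, the isotonic constraint becomes $P_1 = \dots = P_n = P$ for some single $P \in \PR$, and strict propriety of the $\crps$ together with uniqueness of the Fréchet mean of the empirical measure under $\crps$ identifies $P = \hat{F}_\mg$, so $\check{F}_i = \hat{F}_\mg$ and $\DSCbar_\ISO = 0$.

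Finally, $(E_3)$ requires exhibiting pairwise distinct $F_1, \dots, F_n$ with $\MCBbar_\ISO > 0$. By the $\MCBbar_\ISO = 0$ characterization, it suffices to produce distinct $F_i$ that are not themselves the IDR minimizer; for instance, taking $n = 2$, $F_1 = \delta_0$, $F_2 = \delta_1$ and outcomes $y_1 = 1$, $y_2 = 0$ reverses the empirical order relative to the covariate order, so the IDR solution pools the two atoms into $\check{F}_1 = \check{F}_2 = \hat{F}_\mg \neq F_i$, giving $\MCBbar_\ISO > 0$ despite $F_1 \neq F_2$. The main obstacle is the uniqueness assertion for the IDR solution, which I would simply cite from \citet{IDR}; everything else reduces to recognizing that the assessed vector and the constant marginal vector are both feasible points of the IDR program.
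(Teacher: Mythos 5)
Your proposal is correct and follows essentially the same route as the paper's proof: $(E_1)$ and $(E_5)$ by definition, $(E_2)$ and both ``iff'' characterizations from the feasibility of the vectors $(F_1,\dots,F_n)$ and $(\hat{F}_\mg,\dots,\hat{F}_\mg)$ in the IDR program together with uniqueness of its minimizer, and $(E_4)$ from the collapse of the constraint to a constant vector when all covariates coincide. The only divergence is $(E_3)$, where the paper appeals to the empirical case studies while you exhibit an explicit two-point counterexample ($F_1=\delta_0$, $F_2=\delta_1$, $y_1=1$, $y_2=0$, for which pooling gives $\MCBbar_\ISO = 3/4 > 0$); your version is self-contained and arguably preferable.
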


\begin{proof}
By definition, the isotonicity-based decomposition satisfies properties ($E_1$) and ($E_5$).  The IDR solution is the unique minimizer of the criterion \eqref{eq:IDR_criterion} over all vectors of distributions $(P_1, \dots, P_n)$ that are stochastically ordered with the same order relations as the covariates.  Here, the covariates are $F_1, \dots, F_n$ and the partial order on the covariate space is the stochastic order.  Therefore, $(F_1, \dots, F_n)$ is an admissible vector of distributions in the minimization problem, whence $\MCBbar_\ISO \ge 0$.  A further admissible vector in the minimization problem is the constant vector with entries $\hat{F}_\mg$, whence $\DSCbar_\ISO \ge 0$, so ($E_2$) is satisfied.  The examples in the case study in Section \ref{sec:case_study} imply that the isotonicity-based decomposition satisfies ($E_3$).  Assume now that $F_1 = \dots = F_n$.  Then we obtain $\hat{F}_\mg$ as the IDR solution, whence $\DSCbar_\ISO = 0$, so ($E_4$) is satisfied.  Finally, if $\MCBbar_\ISO = 0$ then $F_i = \check{F}_i$, since IDR is the unique minimizer of the criterion at \eqref{eq:IDR_criterion}, and analogously, if $\DSCbar_\ISO = 0$ then $\check{F}_i = \hat{F}_\mg$ for $i = 1, \dots, n$. 
\end{proof}

Generally, the determination of the pairwise stochastic order relations between the distributions $F_1, \ldots, F_n$ requires $\cO(n^2)$ operations.  As IDR can be implemented in at most $\cO(n^2)$ operations \citep{IDR, Henzi2022}, the computation of the isotonicity-based decomposition is of complexity ${\cal O}(n^2)$.  In contrast, the Brier score based and quantile score based decompositions require $\cO(n)$ or more distinct determinations of pairwise stochastic order relations (cf.~Appendices \ref{app:BS} and \ref{app:QS}) and, hence, the implementation is of complexity at least $\cO(n^2 \log n)$.  The computation of the Hersbach decomposition for an ensemble forecast of size $m$ requires $\cO(mn)$ operations.  
 
In its present form, the isotonicity-based decomposition is fully automated in the sense that it does not involve any tuning parameter.  For the examples in Table \ref{tab:decomposition}, $\MCBbar_\ISO$ equals $0.34$, $0.80$, and $0.072$, respectively, and so $\MCBbar_\ISO$ is larger than $\MCBbar_\BS$ (which equals 0.068 in the third example) and $\MCBbar_\QS$ and smaller than the essentially useless $\MCBbar_\CT = \Sbar$ term.  As we demonstrate in Section \ref{sec:properties}, it is always true that 
\begin{equation}  \label{eq:inequalities_empirical}
\Sbar \geq \MCBbar_\CT \geq \MCBbar_\ISO \geq \max \{ \MCBbar_\BS, \MCBbar_\QS \}.     
\end{equation}
In view of these theoretical guarantees in concert with its non-degeneracy and generality, we contend that the isotonicity-based method is more compelling than the Brier score or quantile score based decompositions. 

\subsection{Computational implementation}  \label{sec:ISO_approximate} 

When the predictive distributions are empirical distributions, stochastic order relations can be found by comparing the {\cdf}s at a finite number of real numbers, namely, the respective jump points.  If the predictive distributions are parametric, analytical results in terms of the parameters may be available; see, e.g., \citet{Shaked2007} and the proof of Proposition 1 in \citet{Gneiting2022}.

In relevant applications, the stochastic order may be to strong, since it allows for no crossings of the forecasts.  For example, for Gaussian forecasts $F = \mathcal{N}(\mu, \sigma^2)$ and $G = \mathcal{N}(\nu, \tau^2)$, $F$ and $G$ only order with respect to the stochastic order in case of $\sigma = \tau$, a condition which is rarely satisfied if parameters are estimated from data.  Generally, if $F$ and $G$ are members of a location-scale family, they are stochastically ordered if, and only if, they have equal scale parameter, subject to minimal conditions.  If only very few forecasts in the dataset are comparable with respect to the stochastic order, applying IDR results in calibrated forecast that are close to Dirac measures of the corresponding observations.  Hence, in principle, the isotonicity-based decomposition faces the same problem as the Candille--Talagrand decomposition in this setting.  However, we argue that there is a convincing remedy to the issue.

Consider settings where only few of the predictive distributions $F_i$ in the collection at \eqref{eq:data} are comparable with respect to the stochastic order.  Frequently, predictive distributions fail to order due to crossings of the {\cdf}s in a far tail.  Recent work by \citet{Brehmer2019} and \citet{Taillardat2023} casts doubt on the ability of the average \crps\ to distinguish tail behaviour of the forecast distribution, which provides support for the evaluation of the forecasts on a bounded interval only.  Motivated by these findings, instead of decomposing the original mean score $\Sbar$ as given in \eqref{eq:meanCRPS}, we decompose
\begin{equation}  \label{eq:crps_ab}
\Sbar^\ab = \frac{1}{n} \sum_{i=1}^n \crps(\tilde{F}_i^\ab, y_i),
\end{equation}
where for lower and upper threshold values $a \leq \min \{ y_1, \dots, y_n \}$ and $b \geq \max \{ y_1, \dots, y_n \}$, respectively, 
\begin{equation}  \label{eq:F_ab}
F_i^\ab(x) = 
\begin{cases}
0,      & x < a, \\
F_i(x), & x \in [a,b), \\
1,      & x \ge b, \\
\end{cases}
\end{equation}
for $i = 1, \ldots, n$.  Given an error tolerance $\epsilon > 0$, we determine the thresholds $a$ and $b$ such that the condition
\begin{equation}  \label{eq:bound}
\left\vert \, \Sbar - \Sbar^\ab \right\vert = \Sbar - \Sbar^\ab < \epsilon
\end{equation}
is satisfied, where the equality holds since $\Sbar \geq \Sbar^\ab$.  Condition \eqref{eq:bound} is equivalent to
\begin{align*}
I(a,b) = \frac{1}{n} \sum_{i=1}^n \left( \int_{-\infty}^a F_i(x)^2 \diff x + \int_b^\infty (1-F_i(x))^2 \diff x \right) < \epsilon.
\end{align*}
A simple method for determining the thresholds $a$ and $b$ to be used in \eqref{eq:F_ab} is described in Algorithm \ref{alg:iso_error_tol}.  If the support of the predictive distributions is bounded from above or below (e.g., in the case of precipitation accumulations, which are necessarily nonnegative), it is natural to set $a$ or $b$ equal to the respective bound (e.g., $a = 0$ for precipitation accumulations). 

\begin{algorithm}  
\caption{Thresholds $a, b$  \label{alg:iso_error_tol}}
\begin{algorithmic}[1]
    \State $\epsilon = \Sbar/1000$
    \State $a = \min \{ y_1, \dots, y_n \}$ and $b = \max \{ y_1, \dots, y_n \}$ 
    \If {$I(a,b) \geq \epsilon$}
        \State $\delta = (b - a)/100$
        \While {$I(a,b) \geq \epsilon$}
            \State $a = a - \delta$ and $b = b + \delta$
        \EndWhile
    \EndIf
    \State \textbf{return} $a, b$
\end{algorithmic}
\end{algorithm}

The computation of this modified isotonicity-based decomposition remains of complexity ${\cal O}(n^2)$.  Furthermore, the following result shows that, even with the approximation, theoretical guarantees from \eqref{eq:inequalities_empirical} continue to hold. 

\begin{prop}  \label{prop:inequalities}
Let\/ $\Sbar = \MCBbar_\ISO - \DSCbar_\ISO + \UNCbar_0 = \MCBbar_\BS - \DSCbar_\BS + \UNCbar_0$ denote decompositions for data $(F_1,y_1), \dots, (F_n,y_n)$, and let\/ 
\begin{align*}
\Sbar^\ab = \MCBbar_\ISO^\ab - \DSCbar_\ISO^\ab + \UNCbar_0 = \MCBbar_\BS^\ab - \DSCbar_\BS^\ab + \UNCbar_0
\end{align*}
denote the respective decompositions for modified data\/ $(F_1^\ab,y_1), \dots, (F_n^\ab,y_n)$, where\/ $F_1^\ab, \dots, F_n^\ab$ derive from\/ $F_1, \dots, F_n$ as in \eqref{eq:F_ab}.  Then\/ $I(a,b) = \Sbar - \Sbar^\ab < \epsilon$ implies that 
\begin{equation}  \label{eq:inequalities_approx}
\MCBbar_\ISO \ge \MCBbar_\ISO^\ab \geq \MCBbar_\BS^\ab > \MCBbar_\BS - \epsilon.
\end{equation}
\end{prop}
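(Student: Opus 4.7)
The plan is to treat the three inequalities in Proposition \ref{prop:inequalities} separately, since each relies on a different mechanism. A useful preliminary computation is the identity $\Sbar-\Sbar^\ab=I(a,b)$: splitting the CRPS integral at \eqref{eq:CRPS_BS} over $(-\infty,a)$, $[a,b)$, and $[b,\infty)$, the truncated and original forecasts agree on $[a,b)$; outside $[a,b]$, $F_i^\ab(z)$ and $\one\{y_i\le z\}$ collapse to the same value ($0$ below $a$, $1$ above $b$), so the pointwise difference of the Brier integrands equals $F_i(z)^2$ for $z<a$ and $(1-F_i(z))^2$ for $z\ge b$, and these sum to $I(a,b)$ after averaging over $i$.

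For the first inequality $\MCBbar_\ISO\ge\MCBbar_\ISO^\ab$, I would show that $\Sbar_\ISO\le\Sbar_\ISO^\ab$ and combine with $\Sbar\ge\Sbar^\ab$. The central observation is that truncation enlarges the set of stochastic-order relations among the covariates: if $F_i\st F_j$, then $F_i(z)\ge F_j(z)$ on $[a,b]$, so $F_i^\ab\st F_j^\ab$; and moreover two cdfs whose crossings all lie outside $[a,b]$ become ordered after truncation. Consequently the IDR feasible set for the truncated problem is contained in the feasible set for the original problem, and minimising the same CRPS objective $\frac{1}{n}\sum_i\crps(P_i,y_i)$ over a smaller set yields a larger value, i.e.\ $\Sbar_\ISO\le\Sbar_\ISO^\ab$. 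Then
\begin{align*}
\MCBbar_\ISO-\MCBbar_\ISO^\ab
=(\Sbar-\Sbar^\ab)+(\Sbar_\ISO^\ab-\Sbar_\ISO)
=I(a,b)+(\Sbar_\ISO^\ab-\Sbar_\ISO)\ge 0,
\end{align*}
giving in fact the stronger bound $\MCBbar_\ISO\ge \MCBbar_\ISO^\ab+I(a,b)$.

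The middle inequality $\MCBbar_\ISO^\ab\ge\MCBbar_\BS^\ab$ is a direct application of the general ordering \eqref{eq:inequalities_empirical} to the modified data $(F_1^\ab,y_1),\dots,(F_n^\ab,y_n)$, so nothing new is required here. For the third inequality, I would analyse $\BSbar_z$ pointwise in $z$: for $z\notin[a,b]$ the truncated predictive probabilities $F_i^\ab(z)$ and the binary outcomes $\one\{y_i\le z\}$ coincide (both $0$ or both $1$), so $\BSbar_z^\ab=0$ and every nonnegative CORP component vanishes; for $z\in[a,b]$, $F_i^\ab(z)=F_i(z)$, so the CORP miscalibration density $\MCBbar_{\BS,z}$ is unchanged. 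Thus
\begin{align*}
\MCBbar_\BS-\MCBbar_\BS^\ab
=\int_{\R\setminus[a,b]}\MCBbar_{\BS,z}\diff z
\le\int_{\R\setminus[a,b]}\BSbar_z\diff z
=I(a,b)<\epsilon,
\end{align*}
using that CORP components are nonnegative and bounded above by $\BSbar_z$ (Proposition \ref{prop:BS}), and that the tail integral of $\BSbar_z$ equals $I(a,b)$ by the same pointwise argument used in the preliminary identity.

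The main obstacle is the monotonicity argument at the core of step one: recognising that truncation \emph{adds} comparabilities in the stochastic order, so the IDR feasible region shrinks rather than grows. Everything else is bookkeeping using the Brier representation of the \crps\ and properties of the CORP decomposition already established in the paper.
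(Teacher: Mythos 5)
Your proposal is correct and follows essentially the same route as the paper: the chain $\Sbar_\ISO \le \Sbar_\ISO^\ab \le \Sbar^\ab \le \Sbar$ for the first inequality (you usefully spell out why truncation only adds stochastic-order comparabilities, hence shrinks the IDR feasible set), the application of \eqref{eq:inequalities_empirical} to the modified data for the middle inequality, and the observation that $a \leq \min\{y_1,\dots,y_n\}$ and $b \geq \max\{y_1,\dots,y_n\}$ force the tail contributions to control $\MCBbar_\BS - \MCBbar_\BS^\ab$ by $I(a,b)$ for the last. The only cosmetic difference is that the paper notes the last step as the equality $\MCBbar_\BS - \MCBbar_\BS^\ab = I(a,b)$ (since the PAV fit is exactly $0$ or $1$ outside $[a,b]$), whereas you settle for the inequality $\le I(a,b)$, which suffices.
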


\begin{proof} 
The properties of the IDR solution imply $\Sbar_\ISO \le \Sbar_\ISO^\ab \le \Sbar^\ab \le \Sbar$, and we conclude that 
\begin{align*}
\MCBbar_\ISO = \Sbar - \Sbar_\ISO \ge \Sbar^\ab - \Sbar_\ISO^\ab = \MCBbar_\ISO^\ab.  
\end{align*}
To complete the proof, we apply the inequality \eqref{eq:inequalities_empirical} to the modified data to yield $\MCBbar_\ISO^\ab \ge \MCBbar_\BS^\ab$, and we note that $a \leq \min \{ y_1, \dots, y_n \}$ and $b \geq \max \{ y_1, \dots, y_n \}$, whence $\MCBbar_\BS - \MCBbar_\BS^\ab = I(a,b) < \epsilon$.
\end{proof} 

Assume that the predictive {\cdf}s belong to a location-scale family with full support, i.e., there exists a distribution $F_0 \in \PR$ with full support on $\R$ such that for $i = 1, \dots, n$ and $x \in \R$, $F_i(x) = F_0((x-\mu_i)/\sigma_i)$ for some location $\mu_i \in \R$ and scale $\sigma_i > 0$.  Then for any $i, j = 1, \dots, n$, the stochastic order relations between the modified distributions can be obtained based on the parameters \citep[proof of Proposition 1]{Gneiting2022}, in that 
\begin{align*}
F_i^\ab \st F_j^\ab
\end{align*}
if, and only if, $\mu_i \leq \mu_j$ and either $\sigma_i = \sigma_j$ or $(\mu_i \sigma_j - \mu_j \sigma_i)/(\sigma_j - \sigma_i) \notin [a,b]$.  In more complex but not uncommon situations, e.g., when the predictive distributions are mixtures of Gaussians, it may be hard to decide analytically whether or not there is a stochastic dominance relation between any two such distributions. A remedy is then to numerically evaluate and compare the {\cdf}s on a suitably chosen grid of threshold values.  As a default we suggest and use an equidistant grid from $a$ to $b$ of size 5000.  As long as the grid is sufficiently dense, order relations hardly ever change with the size of the grid, as experimental experience demonstrates. 

In order to increase the number of comparable pairs amongst $F_1, \dots, F_n$, it may appear natural to exchange the stochastic order with a weaker partial order on $\PR$, rather than restricting the support of the predictive distributions to a bounded interval $[a,b] \subseteq \R$.  However, we show in Appendix \ref{app:ISO} that isotonic calibration is generally only compatible with the stochastic order.  Therefore, the stochastic order is the only valid choice of a partial order if IDR is applied to generate a calibrated forecast for an isotonicity-based approach in the spirit of the Candille--Talagrand decomposition.  

\section{Population level analysis}  \label{sec:population}

In this section, we present population level versions of all decompositions which we have discussed so far, and we analyse their relations to notions of calibration.  The population quantity to be decomposed is the expected score 
\begin{equation}  \label{eq:Sbar_pop} 
\E \, \crps(F,Y),
\end{equation} 
where the expectation is with respect to the joint law $\prob$ of the random tuple $(F,Y)$ on a probability space $(\Omega, \F, \prob)$, where $F$ is a \cdf-valued random quantity, which we interpret as the forecast, and the random variable $Y$ is the real-valued outcome.  For subsequent use, we assume the existence of a standard uniform variable $U$ on $(\Omega, \F, \prob)$, which is independent of $(F,Y)$.  Evidently, if $\prob$ is the empirical distribution for the data at \eqref{eq:data} the expectation at \eqref{eq:Sbar_pop} reduces to the mean score $\Sbar$ from \eqref{eq:meanCRPS}.  
   
In all types of decompositions the population version of the uncertainty component is the expected score
\begin{equation}  \label{eq:theoretical_UNC}
\UNC_0 = \E \, \crps (F_\mg,Y)
\end{equation}
of the marginal law $F_\mg$ of $Y$.  Again, the expectation is with respect to $\prob$, and if $\prob$ is the empirical distribution of the data at \eqref{eq:data} then \eqref{eq:theoretical_UNC} reduces to \eqref{eq:UNC}.  In this light, the decompositions at the population level read 
\begin{align*}
\E \, \crps(F,Y) = \MCB_\bullet - \DSC_\bullet + \UNC_0,
\end{align*}
where $\bullet$ indicates the type, namely, \CT, \BS, \QS, \HB, or our new \ISO.  Therefore, it suffices to specify the miscalibration component $\MCB_\bullet$; the discrimination component is deduced as $\DSC_\bullet = \MCB_\bullet + \UNC_0 - \E \, \crps(F,Y)$.

\subsection{Desiderata for decompositions at the population level}  \label{sec:desiderata} 

We adapt the desirable properties ($E_1$) through ($E_5$) for decompositions of a mean score from Section \ref{sec:empirical} to the population setting, as follows. 
\begin{itemize} 
\item[($P_1$)] The decomposition is exact. 
\item[($P_2$)] The components $\MCB$, $\DSC$, and $\UNC$ are nonnegative.
\item[($P_3$)] The $\MCB$ component vanishes if, and only if, the forecast is calibrated in a well defined sense. 
\item[($P_4$)] The $\DSC$ component vanishes if the forecast is static, i.e., there is an $F_0 \in \PR$ such that $F = F_0$ almost surely.
\item[($P_5$)] The $\UNC$ component only depends on the unconditional distribution $F_\mg$ of the outcome.  
\end{itemize}

Concerning $(P_3)$, a notion of forecast calibration has to be specified.  In the special case of a binary outcome, there is a unique, clear-cut notion of calibration \citep[Theorem 2.11]{Gneiting_Ranjan_2013}.  Here, we consider the case of a real-valued outcome, for which numerous notions of calibration exist \citep{Tilmann_Johannes_Calibration}.  Auto-calibration is the strongest such notion, but typically cannot be used in practice.  Indeed, it turns out that $(E_3)$ and $(P_3)$ are competing requirements in the sense that if a decomposition satisfies $(P_3)$ with respect to auto-calibration, then $(E_3)$ is violated and the decomposition becomes degenerate.  If a weaker notion of calibration is requested for $(P_3)$, then $(E_3)$ can be satisfied for the empirical counterpart of the decomposition.  Requirement $(P_4)$ is natural, since a static forecast has no discrimination ability at all.  Finally, property $(P_5)$ is motivated by the observation that intrinsic uncertainty does not depend on the forecast; evidently, the criterion is satisfied by $\UNC_0$ at \eqref{eq:theoretical_UNC}.
 
\subsection{Isotonic conditional expectations and laws}  \label{sec:isotonic}

The population versions of the isotonicity-based, Brier score based, and quantile score based decompositions rely on conditional expectations given $\s$-lattices and isotonic conditional laws.  We give a short overview of the necessary concepts and refer to \cite{ICL} for further details.  Readers not familiar with measure theory might skip the current subsection and intuitively think of the conditional expectation and the conditional law of a random variable $Y$ given a $\s$-lattice $\A$, which we denote $\E( Y \mid \A)$ and $P_{Y \mid \A}$, respectively, as classical conditional expectations and laws under the constraint of isotonicity.  

Consider the probability space $(\Omega, \F, \prob)$.  A subset $\A \subseteq \F$ is a \textit{$\s$-lattice} if it is closed under countable unions and intersections and $\Omega,\emptyset \in \A$.  Let $\A \subseteq \F$ be a $\s$-lattice and let $X$ and $Z$ be integrable random variables defined on $(\Omega, \F, \prob)$.  We call $X$ \textit{$\A$-measurable} if $\{ X > x \} \in \A$ for all $x \in \R$ and define the \textit{$\s$-lattice generated by $X$}, denoted by $\LL(X)$, as the smallest $\s$-lattice which contains $\{ X > x \}$ for all $x\in \R$.  We call an $\A$-measurable random variable $\tilde{X}$ a \textit{conditional expectation of $X$ given $\A$}, for short $\E(X \mid \A)$, if $\E(X \one_A) \leq \E(\Tilde{X} \one_A)$ for all $A \in \A$ and $\E(X \one_B) = \E(\tilde{X} \one_B)$ for all $B \in \s(\tilde{X})$, where $\s(\tilde{X})$ denotes the $\s$-algebra generated by $\tilde{X}$.  \cite{Brunk_1965} showed that $\E(X \mid \A)$ is almost surely unique and coincides with the classical conditional expectations if $\A$ is a $\s$-algebra.  Conditional expectations given $\s$-lattices are closely connected to isotonicity as illustrated in \cite{ICL}.  In particular, for any integrable random variable $X$ and random variable $Z$, there exists an increasing Borel measurable function $f : \R \to \R$ such that $\E(X \mid \LL(Z)) = f(Z)$.  This result is analogous to the well-known factorization result for classical conditional expectations given $\s$-algebras, with the difference that, additionally, $f$ has to be increasing. 

Isotonic conditional laws can be defined in analogy to classical conditional laws.  Specifically, the isotonic conditional law (ICL) of the random variable $Y$ given $\A$, denoted $P_{Y \mid \A}$, is a Markov kernel from $(\Omega, \F)$ to $(\R, \BR)$ such that $\omega \mapsto P_{Y \mid \A}(\omega,(y, \infty))$ is a version of $\prob(Y > y \mid \A) = \E(\one \{Y > y\} \mid \A)$ for any $y \in \R$.  \cite{ICL} show the existence and uniqueness of ICL.  Equivalently, ICL emerges as the minimizer of an expected score, where the scoring rule may be taken from a large class of proper scoring rules that includes the \crps.  

We are particularly interested in ICL with respect to the $\s$-lattice generated by the forecast $F$.  We call $B\subseteq \PR$ an upper set if $P \in B$ and $P \st Q$ implies $Q \in B$ for $Q \in \PR$, and we denote by $\mathcal{U}$ the family of all upper sets in $\PR$.  For the forecast $F$, we define the $\s$-lattice generated by $F$ as the family of all preimages of measurable upper sets under $F$, i.e., $\LL(F) = \big \{ F^{-1}(B) \mid B \in \BPR \cap \mathcal{U} \} \subseteq \F$, where $\BPR$ denotes the $\s$-algebra on $\PR$ with respect to the weak topology.  For details, we refer the reader to Definition 3.1 of \citet{ICL}.

In a nutshell, $P_{Y \mid \LL(F)}$ arises as the best available prediction for the distribution of $Y$, given all information in the forecast $F$, under the assumption that smaller (greater) values of $F$ correspond to smaller (greater) values of the conditional law with respect to the stochastic order.  

\subsection{Calibration}  \label{sec:calibration}

A strong notion of calibration is auto-calibration, which formalizes the idea that the outcome is indistinguishable from a random draw from the posited distribution $F$.  Specifically, the random forecast $F$ is \textit{auto-calibrated} \citep{Tsyplakov2013} if $P_{Y \mid F} = F$, or equivalently
\begin{equation}  \label{defn_auto_calibration}
F(x) = \prob(Y \leq x \mid F) \quad \textrm{almost surely for all } x \in \R.
\end{equation}
For any threshold value $x \in \R$, we may condition on the random variable $F(x)$ instead of the random distribution $F$ in \eqref{defn_auto_calibration}, to obtain the weaker notion of threshold calibration.  Specifically, the forecast $F$ is called \textit{threshold calibrated} \citep{IDR} if
\begin{align*}
F(x) = \prob(Y \leq x \mid F(x)) \quad \textrm{almost surely for all } x \in \R.
\end{align*}
Essentially, for a threshold calibrated forecast $F$, we can take $F(x)$ at face value for any $x \in \R$.  In a slight adaptation of the definition in \cite{Tilmann_Johannes_Calibration}, we call the forecast $F$ \textit{quantile calibrated} if
\begin{align*}
F^{-1}(\alpha) = q_\alpha(Y \mid F^{-1}(\alpha)) \quad \textrm{almost surely for all } \alpha \in (0,1), 
\end{align*}
where for any $\alpha \in (0,1)$, $q_\alpha(Y \mid F^{-1}(\alpha))$ denotes the lower-$\alpha$-quantile of the conditional law of $Y$ given $F^{-1}(\alpha)$.  Equivalently, one can think of $q_\alpha(Y \mid F^{-1}(\alpha))$ as a $\s(F^{-1}(\alpha))$-measurable random variable which minimizes $\E \, \qs(G,Y)$ over all $\s(F^{-1}(\alpha))$-measurable random variables $G$; see \cite{Armerin}.

The forecast $F$ is called \textit{isotonically calibrated} if $F$ is almost surely equal to the isotonic conditional law of $Y$ given $\LL(F)$, i.e., $F = P_{Y \mid \LL(F)}$ almost surely.  By Proposition 5.3 of \citet{ICL}, auto-calibration implies isotonic calibration, and isotonic calibration implies threshold calibration and quantile calibration.  

\begin{figure}[t]
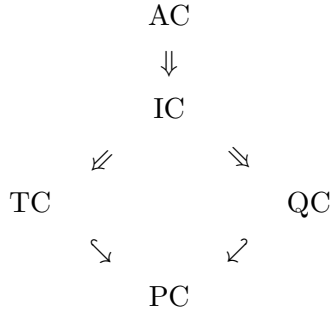

\centering
\begin{tabular}{ccccc}
   &            & AC           &             & \\ 
   &            & $\Downarrow$ &             & \\
   &            & IC           &             & \\
   & $\SWarrow$ &              & $\SEarrow$  & \\
TC &            &              &             & QC \\
   & $\SEArrow$ &              & $\NEArrows$ &   \\
   &            & PC           &             &
 \end{tabular}
\caption{Implications between auto-calibration (AC), isotonic calibration (IC), threshold calibration (TC), and quantile calibration (QC).  Implications with respect to probabilistic calibration (PC) are indicated by hooked arrows and hold under Assumption 2.15 of \citet{Tilmann_Johannes_Calibration}.  \label{fig:calibration}}
\end{figure}

The probability integral transform (PIT) of the \cdf-valued random quantity $F$ is the random variable $Z_F = F(Y-) + U (F(Y)-F(Y-))$, where $F(y-) = \lim_{x \uparrow y} F(x)$ denotes the left-hand limit of $F$ at $y \in \R$, with a random variable $U$ that is standard uniform and independent of $F$ and $Y$.  The PIT of a continuous \cdf\ $F$ simplifies to $Z_F = F(Y)$.  The forecast $F$ is \textit{probabilistically calibrated} if $Z_F$ is uniformly distributed on the unit interval \citep{Gneiting_Ranjan_2013}.  Originally suggested by \cite{Dawid1984}, checks for probabilistic calibration, and for the uniformity of the closely related rank histogram, constitute a cornerstone of forecast evaluation \citep{Diebold1998, Hamill_2001, Gneiting2007b}.  Under regularity conditions, a threshold calibrated or quantile calibrated forecast is probabilistically calibrated; details and a direct implication from isotonic calibration to a weak form of probabilistic calibration are available in \citet[Section 3.3]{Tilmann_Johannes_Calibration} and \citet[Appendix D]{ICL}, respectively.  Figure \ref{fig:calibration} summarizes relationships between the notions of calibration discussed in this section. 

\subsection{Population level decompositions}  \label{sec:decomposition}

We now give generalizations of the empirical decompositions discussed in Sections \ref{sec:empirical} and \ref{sec:ISO} that apply at the population level.  Recall that we consider the joint law $\prob$ of the random tuple $(F,Y)$.  As before, we let $\PR$ denote the class of the Borel probability measures on $\R$ that have a finite first moment.  In the current and the subsequent subsection, we generally operate under the following regularity conditions.  For proofs, we refer the reader to Appendix \ref{app:proofs1}. 

\begin{assump}  \label{assump:population}
Let the marginal law $F_\mg$ of $Y$ be such that $F_\mg \in \PR$, and suppose that
\begin{equation}  \label{eq:EEXdF}
\E \int \lvert x \rvert \diff F(x) < \infty.
\end{equation}
\end{assump}

In view of the kernel score representation of the \crps\ \citep[eq.~(21)]{Gneiting2007a}, Assumption \ref{assump:population} implies that
\begin{align*}
\E \, \crps(F,Y)
& = \E \, \E (\crps(F,Y) \mid F ) \\
& = \E \left( \E_F (\lvert X - Y \rvert \mid F) - \frac{1}{2} \E_F (\lvert X - X' \rvert \mid F) \right) \\
& \leq \E \, \E_F \lvert X \rvert + \E \, \lvert Y \rvert < \infty,
\end{align*}
where $X$ and $X'$ are independent random variables with law $F$.  Similarly, it follows that $\E \, \crps (F_\mg,Y) < \infty$.  Furthermore, the properties of isotonic and standard conditional laws imply that $\E \, \crps (P_{Y \mid \LL(F)},Y) \leq  \E \, \crps (F,Y)$ and $\E \, \crps (P_{Y \mid F},Y) \leq  \E \, \crps (F,Y)$, respectively.  In this light,  Assumption \ref{assump:population} ensures that $\E \, \crps(F,Y)$, $\E \, \crps (F_\mg,Y)$, $\E \, \crps (P_{Y \mid \LL(F)},Y)$, and $\E \, \crps (P_{Y \mid F},Y)$ are finite.

The population version of the Candille--Talagrand decomposition at \eqref{eq:CTdecomposition} is
\begin{equation}  \label{eq:CT}
\E \, \crps(F,Y) = \MCB_\CT - \DSC_\CT + \UNC_0,
\end{equation}
where $\UNC_0$ is defined at \eqref{eq:theoretical_UNC}, and
\begin{align*}
\MCB_\CT = \E \, \crps(F, Y) - \E \, \crps( P_{\, Y \mid F}, Y).
\end{align*}
Similarly, the population version of the isotonicity-based decomposition at \eqref{eq:ISOdecomposition} is
\begin{equation}  \label{eq:ISO}
\E \, \crps(F,Y) = \MCB_\ISO - \DSC_\ISO + \UNC_0,
\end{equation}
where 
\begin{equation*}
\MCB_\ISO = \E \, \crps(F,Y) - \E \, \crps( P_{Y \mid \LL(F)}, Y).
\end{equation*}
The decomposition at \eqref{eq:ISO} is analogous to the theoretically preferred Candille--Talagrand decomposition at \eqref{eq:CT}, except that the performance of the forecast $F$ is compared with the isotonic conditional law $P_{Y \mid \LL(F)}$ rather than the conditional law $P_{Y \mid F}$.  
The general decompositions at \eqref{eq:CT} and \eqref{eq:ISO} reduce to \eqref{eq:CTdecomposition} and \eqref{eq:ISOdecomposition}, respectively, when $\prob$ is the empirical distribution of the data in \eqref{eq:data}.   
 
The population version of the Brier score based decomposition at \eqref{eq:BSempirical} is 
\begin{equation}  \label{eq:BSpopulation}
\E \, \crps(F,Y) = \MCB_\BS - \DSC_\BS + \UNC_0,
\end{equation}
where
\begin{align*}
\MCB_\BS = \E \, \crps(F,Y)  - \E \int \big( \prob(Y \leq z \mid \LL(F(z))) - \one \{ Y \leq z \} \big)^2 \diff z.
\end{align*}
Similarly, the population version of the quantile based based decomposition at \eqref{eq:QSempirical} is  
\begin{equation}  \label{eq:QSpopulation}
\E \, \crps(F,Y) = \MCB_\QS - \DSC_\QS + \UNC_0,
\end{equation}
where
\begin{align*}
\MCB_\QS = \E \, \crps(F,Y) - \E \int_0^1 \qs \big( q_\alpha(Y \mid \LL(F^{-1}(\alpha))), Y \big) \diff \alpha.
\end{align*}
The properties of isotonic conditional expectations and isotonic conditional quantiles imply that $ \E \int ( \prob(Y \leq z \mid \LL(F(z))) - \one \{ Y \leq z \} )^2 \diff z\leq \E \, \crps(F,Y)  < \infty$ and $ \E \int_0^1 \qs ( q_\alpha(Y \mid \LL(F^{-1}(\alpha))), Y ) \diff \alpha \leq \E \, \crps(F,Y)  < \infty$.  The decompositions at \eqref{eq:BSpopulation} and \eqref{eq:QSpopulation} reduce to \eqref{eq:BSempirical} and  \eqref{eq:QSempirical}, respectively, when $\prob$ is the empirical distribution of the data in \eqref{eq:data}.

Finally, we consider the Hersbach decomposition.  To this end, let $\nu_F$ be the image of the Lebesgue measure $\lambda$ under $F$, i.e., $\nu_F(A) = \lambda(F^{-1}(A))$, and define the measures given by  
\begin{equation}  \label{eq:mu}
\mu(A) = \E \left( \nu_F(A) \right)
\end{equation}
and 
\begin{equation}  \label{eq:tau}
\tau(A) = \E \left( \int_A \one \{ F(Y) \leq p \} \diff \nu_F(p) \right) \! ,  
\end{equation}
respectively, where $A \in \BU$ is any Borel set.  We are now ready to state a population version of the Hersbach decomposition from Section \ref{sec:HB}.

\begin{prop}  \label{prop:HB_general}
Let Assumption \ref{assump:population} hold, and let\/ $\mu$ and\/ $\tau$ be the measures defined at \eqref{eq:mu} and \eqref{eq:tau}, respectively.  Then\/ $\tau$ is absolutely continuous with respect to\/ $\mu$; let\/ $f$ denote the respective Radon--Nikodym derivative.  It holds that 
\begin{equation}  \label{eq:HB}
\E \, \crps(F,Y) = \MCB_\HB - \DSC_\HB + \UNC_0,
\end{equation}
where\/ $\UNC_0$ is given at \eqref{eq:theoretical_UNC}, 
\begin{align*}
\MCB_\HB = \int_0^1 (p-f(p))^2 \, \diff \mu(p), \quad \DSC_\HB = \UNC_0 - \int_0^1 f(p) (1-f(p)) \diff \mu(u) - \MS,
\end{align*}
and
\begin{equation}  \label{eq:MS_Hersbach}
\MS = \E \big[ \one \{ F(Y) = 0 \} \, (F^{-1}(0+) - Y) + \one \{ F(Y) > 0 \} \, (2F(Y) -1)(Y - F^{-1}(F(Y))) \big]. 
\end{equation}
\end{prop}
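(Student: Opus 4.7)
The plan is a three-step reduction. First, absolute continuity of $\tau$ with respect to $\mu$ will follow at once from the pointwise bound $\one\{F(Y) \leq p\} \leq 1$: for every Borel set $A \subseteq [0,1]$ we have $\tau(A) \leq \E[\nu_F(A)] = \mu(A)$, hence $\tau \ll \mu$, and the Radon--Nikodym theorem supplies a $[0,1]$-valued density $f$.

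Second, I would rewrite the Brier score form \eqref{eq:CRPS_BS} of the individual \crps\ as
$$\crps(F,y) = \int_{-\infty}^y F(z)^2 \diff z + \int_y^\infty (1 - F(z))^2 \diff z,$$
take expectations, and invoke Fubini (justified under Assumption \ref{assump:population}) to swap the order of integration. For each realization I would then partition the $z$-range according to the relative size of $F(z)$ and $F(Y)$: on the regions $\{F(z) < F(Y)\}$ and $\{F(z) > F(Y)\}$ the image-measure identity $\int_\R g(F(z)) \diff z = \int_0^1 g(p) \diff\nu_F(p)$ converts the $z$-integrals into integrals against $\mu$ and, after taking expectations and identifying the indicator $\one\{F(Y) \leq p\}$, against $\tau$. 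The remaining region $\{F(z) = F(Y)\}$, together with the contribution when $Y$ lies strictly below the start of $\supp F$, will produce the $\MS$ term at \eqref{eq:MS_Hersbach}.

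Third, substituting $f = \diff\tau/\diff\mu$ and using the elementary identity
$$p^2 - 2 p f(p) + f(p) = (p - f(p))^2 + f(p)(1 - f(p))$$
will separate the resulting $\mu$-integral into the miscalibration term $\int_0^1 (p - f)^2 \diff\mu = \MCB_\HB$ and a remainder $\int_0^1 f(1 - f) \diff\mu$; rearranging with $\UNC_0$ as defined at \eqref{eq:theoretical_UNC} then gives $-\DSC_\HB + \UNC_0$ for the remaining pieces, thereby establishing the exact decomposition \eqref{eq:HB}.

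The main obstacle I anticipate is the bookkeeping in step two. Since $\nu_F$ is blind to intervals on which $F$ is constant but places point masses at heights $p$ for which $F^{-1}(\{p\})$ has positive Lebesgue measure, the analysis for a general $F$ (without continuity or strict monotonicity) requires a careful case split. The two summands in \eqref{eq:MS_Hersbach} should correspond exactly to two boundary phenomena: the case $F(Y) = 0$, where $Y$ lies strictly below $F^{-1}(0+)$ and contributes $F^{-1}(0+) - Y$, and the case $F^{-1}(F(Y)) < Y$, where $Y$ lies strictly above the left endpoint of the flat of $F$ at height $F(Y)$ and contributes $(2F(Y) - 1)(Y - F^{-1}(F(Y)))$ after combining the pieces from $\int_{-\infty}^Y F^2 \diff z$ and $\int_Y^\infty (1-F)^2 \diff z$ on that flat. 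Once these cases are handled cleanly, matching the output of step two with the target expression is a direct computation.
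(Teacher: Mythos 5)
Your proposal is correct and follows essentially the same route as the paper's proof: absolute continuity from the pointwise bound $\tau(A) \leq \mu(A)$, the change of variable $z \mapsto p = F(z)$ via the image measure $\nu_F$, identification of $\MS$ as the correction arising on the flat of $F$ at height $F(Y)$ (equivalently, where $\one\{F(Y) \leq F(z)\}$ and $\one\{Y \leq z\}$ disagree) together with the case $F(Y)=0$, and the completion of the square $p^2 - (2p-1)f(p) = (p-f(p))^2 + f(p)(1-f(p))$. The paper organizes the bookkeeping you anticipate by first inserting the exact pointwise identity $(F(z)-\one\{Y\le z\})^2 = (F(z)-\one\{F(Y)\le F(z)\})^2 + (2F(z)-1)\,\one\{Y>z,\,F(Y)=F(z)\}$ and restricting to the set $S$ where the original integrand is positive (to keep the integrals finite), but the substance is identical.
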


The \MS\ component can only be nonzero when $Y$ lies outside the support of $F$ with positive probability; hence, we write MS for misspecified support.  Note that \MS\ can be negative, e.g., if $F = (\delta_0 + 3 \, \delta_2)/4$ and $Y = 1$ almost surely then $\MS = - 1/2$.

The following result is a special case of the more general statement in Corollary \ref{cor:HB_for_step_functions} in Appendix \ref{app:proofs1}.  It shows that the population decomposition nests the modified empirical Hersbach decomposition.  

\begin{cor}  \label{cor:HB_2}
If\/ $\prob$ is the empirical measure of a collection of forecast--observation pairs $(F_1,y_1), \dots, (F_n,y_n)$, where each $F_i$ is the empirical cdf of a sample of size $m$, then the population decomposition at \eqref{eq:HB} reduces to the modified empirical Hersbach decomposition at \eqref{eq:Hersbach_mod}. 
\end{cor}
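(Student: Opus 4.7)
The plan is to carry out a direct computation under the empirical specification $\prob = (1/n)\sum_{i=1}^n \delta_{(F_i,y_i)}$, identifying the measures $\mu$ and $\tau$ explicitly, extracting the Radon--Nikodym derivative $f$, and substituting into the miscalibration formula of Proposition \ref{prop:HB_general}. The equalities $\E \, \crps(F,Y) = \Sbar$ and $\UNC_0 = \UNCbar_0$ are immediate from the empirical setup, and since both decompositions are exact, it suffices to verify $\MCB_\HB = \MCBbar_\HB$, from which $\DSC_\HB = \DSCbar_\HB$ follows automatically.

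The core step is the explicit identification of $\nu_{F_i}$. Since $F_i$ is the empirical cdf of $x_1^i \le \dots \le x_m^i$, it takes only the values $p_\ell = \ell/m$ for $\ell = 0, \dots, m$, and its preimage $F_i^{-1}(\{p_\ell\})$ equals the half-open interval $[x_\ell^i, x_{\ell+1}^i)$ for $\ell = 1, \dots, m-1$. Hence $\nu_{F_i}$ restricted to $(0,1)$ is atomic on $\{p_1, \dots, p_{m-1}\}$ with $\nu_{F_i}(\{p_\ell\}) = x_{\ell+1}^i - x_\ell^i$. Averaging over $i$ recovers $\mu(\{p_\ell\}) = \bar{g}_\ell$ from \eqref{eq:g_ell}, and the analogous computation gives $\tau(\{p_\ell\}) = (1/n)\sum_{i=1}^n \one\{F_i(y_i) \le p_\ell\}(x_{\ell+1}^i - x_\ell^i) = \bar{g}_\ell \bar{f}_\ell$, with $\bar{f}_\ell$ as at \eqref{eq:f_ell}, so the Radon--Nikodym derivative satisfies $f(p_\ell) = \bar{f}_\ell$. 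Substituting collapses the integral $\int_0^1 (p-f(p))^2 \diff \mu(p)$ to the finite sum $\sum_{\ell=1}^{m-1} \bar{g}_\ell (p_\ell - \bar{f}_\ell)^2 = \MCBbar_\HB$ from \eqref{eq:MCB_HB}.

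The main obstacle lies in the boundary atoms: for empirical cdfs with bounded support, the preimages $(-\infty, x_1^i)$ and $[x_m^i, \infty)$ have infinite Lebesgue measure, so $\mu$ places infinite mass on $\{0\}$ and $\{1\}$ and $f$ is not literally well-defined there. I would handle this by reading $\int_0^1$ as integration over the open interval $(0,1)$ and isolating the events $F(Y) = 0$ and $F(Y) = 1$, which correspond to the outcome lying outside the support of the forecast and are precisely the situations that the $\MS$ term at \eqref{eq:MS_Hersbach} was designed to encode through $F^{-1}(0+)$ and $F^{-1}(F(Y))$. Since the statement is announced as a special case of the more general Corollary \ref{cor:HB_for_step_functions}, the cleanest execution is to derive the present result as its specialization to $m$ equally-weighted jumps, which inherits the correct boundary bookkeeping automatically.
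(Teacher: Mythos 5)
Your proposal is correct and follows essentially the same route as the paper: the paper proves this corollary precisely by specializing Corollary \ref{cor:HB_for_step_functions}, whose proof is the same explicit computation of $\nu_{F_i}$, $\mu$, and $\tau$ as atomic measures on $\{p_1,\dots,p_{m-1}\}$ with the Radon--Nikodym derivative $f(p_\ell)=\bar f_\ell$ that you carry out. Your handling of the boundary atoms (working on the open interval $(0,1)$ and letting the $\MS$ term absorb the events $F(Y)=0$) also matches the paper's setup, since $\mu$ and $\tau$ are defined on $\BU$.
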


The next result demonstrates that Proposition \ref{prop:HB_general} subsumes the Hersbach--Lalaurette decomposition for strictly increasing forecast cdfs as given in Appendix A of \citet{Candille_2005}.

\begin{cor}  \label{cor:HB_1}
Let Assumption \ref{assump:population} hold, and suppose that\/ $F^{-1}$ is almost surely absolutely continuous.  Then\/ $\MS = 0$ and the measure\/ $\mu$ at \eqref{eq:mu} has density 
\begin{equation}  \label{eq:gamma}
\gamma(p) = \E \left( \frac{\diff}{\diff p} F^{-1}(p) \right)
\end{equation}
with respect to the Lebesgue measure on the unit interval.  Furthermore,  the measure\/ $\tau$ at \eqref{eq:tau} has Radon--Nikodym derivative defined by 
\begin{equation}  \label{eq:f}
f(p) = \frac{1}{ \gamma(p)} \, \E \left( \one \{ F(Y) \leq p \} \, \frac{\diff}{\diff p} F^{-1}(p) \right)
\end{equation}
if\/ $\gamma(p) > 0$, and\/ $f(p) = 0$ otherwise, with respect to $\mu$.
\end{cor}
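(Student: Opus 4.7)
The plan is to verify the three claims in sequence, leaning on the change-of-variables formula for absolutely continuous monotone functions together with Fubini--Tonelli. First, for $\MS = 0$: almost sure absolute continuity of $F^{-1}$ implies that $F$ almost surely has no atoms and no flat parts on the interior of its support, so $F^{-1}(F(y)) = y$ whenever $F(y) \in (0,1)$. Consequently, the second indicator in \eqref{eq:MS_Hersbach} multiplies a factor that vanishes almost surely, while the first indicator picks out the event $\{F(Y) = 0\}$, which can only occur when $Y$ lies outside the interior of the support of $F$. In the intended setting, where $F$ is a predictive cdf for $Y$ (concretely, in the Lalaurette regime where $F$ is strictly increasing on all of $\R$), this event has probability zero, and hence $\MS = 0$.

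Second, to obtain the density $\gamma$ of $\mu$, I would compute $\nu_F(A)$ first for intervals $A = (a,b) \subset (0,1)$, where strict monotonicity and continuity of $F$ give $F^{-1}(A) = (F^{-1}(a), F^{-1}(b))$ up to a Lebesgue-null set, so the fundamental theorem of calculus for absolutely continuous functions yields $\nu_F(A) = F^{-1}(b) - F^{-1}(a) = \int_A (\diff/\diff p) F^{-1}(p) \, \diff p$. A standard monotone-class argument extends this identity to arbitrary Borel $A \subseteq (0,1)$. Taking expectations and interchanging with the Lebesgue integral via Fubini--Tonelli then yields $\mu(A) = \int_A \gamma(p) \, \diff p$. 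The interchange is legitimate because $\mu$ is $\sigma$-finite on $(0,1)$: Assumption \ref{assump:population} combined with the identity $\int_0^1 |F^{-1}(p)| \, \diff p = \int |x| \, \diff F(x)$ and a Markov-type bound gives $\E \, |F^{-1}(1-\epsilon)|, \E \, |F^{-1}(\epsilon)| < \infty$, and hence $\mu((\epsilon, 1-\epsilon)) < \infty$ for every $\epsilon > 0$.

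Third, for the Radon--Nikodym derivative $f$, the identical change of variables inside the expectation together with Fubini--Tonelli yields
\[
\tau(A) = \E \int_A \one\{F(Y) \leq p\} \, \frac{\diff}{\diff p} F^{-1}(p) \, \diff p = \int_A \E\!\left( \one\{F(Y) \leq p\} \, \frac{\diff}{\diff p} F^{-1}(p) \right) \diff p,
\]
so $\tau$ is absolutely continuous with respect to Lebesgue measure on $(0,1)$ with the displayed Lebesgue density. Since $\mu$ has Lebesgue density $\gamma$, the Radon--Nikodym derivative of $\tau$ with respect to $\mu$ equals the ratio of the two Lebesgue densities on $\{\gamma > 0\}$, giving precisely the expression in \eqref{eq:f}. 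On $\{\gamma(p) = 0\}$ the density of $\tau$ also vanishes, since $(\diff/\diff p) F^{-1}(p) \geq 0$ almost surely forces the integrand to be zero almost surely there; setting $f(p) = 0$ is thus consistent. The main technical obstacle throughout is careful bookkeeping of $\sigma$-finiteness for Fubini--Tonelli and the absolutely continuous change of variables; qualitatively, the subtlest point is the first step ($\MS = 0$), which relies on the tacit assumption that $Y$ lies in the interior support of $F$ almost surely.
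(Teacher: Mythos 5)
Your proof is correct and follows essentially the same route as the paper's: the change of variables $\nu_F([a,b)) = F^{-1}(b) - F^{-1}(a) = \int_a^b (\diff/\diff p)\,F^{-1}(p)\diff p$ via absolute continuity, followed by Fubini--Tonelli to identify the Lebesgue densities of $\mu$ and $\tau$ and hence their ratio; you are also right to flag that $\MS = 0$ rests on the tacit assumption that $Y$ almost surely lies between $F^{-1}(0+)$ and $F^{-1}(1)$, a point the paper's own proof does not address at all. One small correction: absolute continuity of $F^{-1}$ rules out flat stretches of $F$ in the interior (these are jumps of $F^{-1}$) but \emph{not} atoms of $F$ (these are flat stretches of $F^{-1}$, perfectly compatible with absolute continuity); fortunately only the former is needed to get $F^{-1}(F(y)) = y$ on $\{0 < F(y) < 1\}$, so your argument survives, though you should also note that the second term of $\MS$ can be nonzero when $F(Y) = 1$, another case excluded only by the same tacit full-support assumption.
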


Considering a practically relevant case, we derive in Example \ref{ex:prec_forecasts} in Appendix \ref{app:proofs1} the empirical Hersbach decomposition for probabilistic forecasts of a nonnegative quantity, assuming that the forecast distributions are mixtures of a point mass at zero and a strictly positive density on the positive halfline.

\subsection{Properties of the decompositions}  \label{sec:properties} 
 
The population versions of the Candille--Talagrand, isotonicity-based, Brier score based, and quantile score based decompositions satisfy properties ($P_1$), ($P_2$), ($P_4$), and ($P_5$), and property ($P_3$) with auto-calibration, isotonic calibration, threshold calibration, and quantile calibration, respectively.  The following theorem and its proof summarize and elaborate on property ($P_3$) and lend theoretical support to the use of the isotonicity-based decomposition.  While in principle one would like to quantify miscalibration in terms of deviations from auto-calibration, as done by the Candille--Talagrand decomposition, the empirical version thereof is degenerate.  By imposing the natural shape constraint of isotonicity between the assessed and the calibrated forecasts, a practically useful decomposition is obtained that does not rely on implementation choices, save for a possible choice of threshold values $a$ and $b$ in the modified cdfs $F^\ab$ at \eqref{eq:F_ab}.  The isotonicity-based decomposition quantifies miscalibration as deviation from isotonic calibration, which is closer to auto-calibration than threshold or quantile calibration as illustrated in Figure \ref{fig:calibration}.

All proofs for this section are deferred to Appendix \ref{app:proofs2}. 

\begin{thm}  \label{thm:calibration}
Under Assumption \ref{assump:population} the following statements hold.
\begin{itemize}
\item[(a)]  
The Candille--Talagrand decomposition at \eqref{eq:CT} is exact and satisfies 
\begin{itemize}
\item $\MCB_\CT \geq 0$ with equality if, and only if, $F$ is auto-calibrated;
\item $\DSC_\CT \geq 0$ with equality if, and only if, $P_{Y \mid F} = F_\mg$ almost surely.
\end{itemize}
\item[(b)]  
The isotonicity-based decomposition at \eqref{eq:ISO} is exact and satisfies 
\begin{itemize}
\item $\MCB_\ISO \geq 0$ with equality if, and only if, $F$ is isotonically calibrated;
\item $\DSC_\ISO \geq 0$ with equality if, and only if, $P_{Y \mid \LL(F)} = F_\mg$ almost surely.
\end{itemize}
\item[(c)]  
The Brier score based decomposition at \eqref{eq:BSpopulation} is exact and satisfies 
\begin{itemize}
\item $\MCB_\BS \geq 0$ with equality if, and only if, $F$ is threshold calibrated;
\item $\DSC_\BS \geq 0$ with equality if, and only if, for all\/ $z\in \R$, $\prob( Y \leq z \mid \LL(F(z)))$ $= \prob(Y \leq z)$ almost surely. 
\end{itemize}
\item[(d)]  
The quantile score based decomposition at \eqref{eq:QSpopulation} is exact and satisfies 
\begin{itemize}
\item $\MCB_\QS \geq 0$ with equality if\/ $F$ is quantile calibrated; conversely, if the random element\/ $(Y,F^{-1}(\alpha))$ satisfies Assumption 6.1 in \cite{ICL} for all\/ $\alpha \in (0,1)$ then\/ $\MCB_\QS = 0$ implies quantile calibration of\/ $F$; 
\item $\DSC_\QS \geq 0$ with equality if, and only if, for all\/ $\alpha \in (0,1)$, $q_\alpha(Y \mid \LL(F^{-1}(\alpha)))$ $= q_\alpha(Y)$ almost surely. 
\end{itemize}
\end{itemize}
\end{thm}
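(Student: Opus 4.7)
The plan is to prove all four parts from a single template: exactness of each decomposition follows by rearranging definitions, and both the nonnegativity of $\MCB_\bullet, \DSC_\bullet$ and the associated calibration characterizations reduce to the fact that the calibrated surrogate distribution is the almost-sure minimizer of an expected proper score over a suitably chosen class of competitors. Assumption \ref{assump:population} ensures that all expected scores in play are finite, as noted below the assumption.

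For part (a), exactness is immediate since $\DSC_\CT$ is defined as $\MCB_\CT + \UNC_0 - \E \, \crps(F,Y)$. For $\MCB_\CT \geq 0$, I invoke the standard fact that, for the strictly proper \crps, the conditional law $P_{Y \mid F}$ is the almost-sure unique minimizer of $\E \, \crps(G,Y)$ over all $\s(F)$-measurable random distributions $G$; taking $G = F$ (which is $\s(F)$-measurable) yields $\E \, \crps(P_{Y \mid F},Y) \le \E \, \crps(F,Y)$ with equality iff $F = P_{Y \mid F}$ a.s., i.e.\ auto-calibration. Taking $G = F_\mg$ (a constant, hence $\s(F)$-measurable) gives $\DSC_\CT \ge 0$ with equality characterized by $P_{Y \mid F} = F_\mg$ a.s. Part (b) is structurally identical, with the classical conditional law replaced by the isotonic conditional law. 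The one fact imported from \cite{ICL} is the variational characterization: $P_{Y \mid \LL(F)}$ uniquely minimizes $\E \, \crps(G,Y)$ over $\LL(F)$-measurable random distributions $G$. Both $F$ (by the very definition of $\LL(F)$ as the $\s$-lattice generated by $F$) and the constant $F_\mg$ are $\LL(F)$-measurable, so the same two comparisons yield $\MCB_\ISO \ge 0$ with equality iff $F$ is isotonically calibrated, and $\DSC_\ISO \ge 0$ with equality iff $P_{Y \mid \LL(F)} = F_\mg$ a.s.

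For parts (c) and (d), I first apply Fubini's theorem to interchange the outer expectation with the integral over thresholds $z$ (respectively quantile levels $\alpha$), whose validity follows from Assumption \ref{assump:population} together with \eqref{eq:CRPS_BS} and \eqref{eq:CRPS_QS}. After this interchange, the claims reduce to threshold-wise (respectively level-wise) applications of the same argument used in (b), but now with $\bs$ (respectively $\qs$) in place of $\crps$. At each fixed $z$, the isotonic conditional expectation $\prob(Y \leq z \mid \LL(F(z)))$ uniquely minimizes the expected Brier score among $\LL(F(z))$-measurable competitors; comparing with $F(z)$ yields $\MCB_\BS \ge 0$, with equality iff $F(z) = \prob(Y \leq z \mid \LL(F(z)))$ a.s.\ for almost every $z$, which is precisely threshold calibration. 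Comparing instead with the constant competitor $\prob(Y \leq z)$ gives the $\DSC_\BS$ characterization. Part (d) follows the same pattern with the isotonic conditional quantile at each level.

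The main technical subtlety, which I expect to be the principal obstacle, is that quantile minimizers are not generally unique, so one direction of the calibration equivalence in (d) requires Assumption 6.1 of \cite{ICL} at each $\alpha \in (0,1)$ to promote the inequality $\MCB_\QS = 0$ into the pointwise identity that encodes quantile calibration. The reverse implication, and the corresponding $\DSC_\QS$ statement, do not need this extra regularity. Beyond this caveat, the only loose end is verifying that the Fubini swap and the a.e.-in-$z$ (or a.e.-in-$\alpha$) interpretation of the equality cases in (c) and (d) yield the globally stated calibration conditions, which follows from the right-continuity of cdfs and the monotonicity of $F^{-1}$.
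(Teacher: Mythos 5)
Your proposal follows essentially the same route as the paper: exactness by definition, and nonnegativity together with the calibration characterizations via the variational (minimizer) characterization of the relevant conditional object, applied threshold-wise or level-wise after a Fubini swap. The only structural difference is cosmetic: the paper proves (a) and (b) by the same Fubini-plus-Brier-score reduction you reserve for (c), i.e.\ it writes $\MCB_\CT$ and $\MCB_\ISO$ as integrals over $z$ of differences of expected squared errors and uses that $\prob(Y\le z\mid F)$, respectively $\prob(Y>z\mid\LL(F))$, is the a.s.\ unique minimizer over $\s(F)$-, respectively $\LL(F)$-measurable competitors, rather than invoking the global CRPS-minimizing property of $P_{Y\mid F}$ and $P_{Y\mid\LL(F)}$ directly. (A small care point the paper takes in (b): it is $1-F(z)$, not $F(z)$, that is $\LL(F)$-measurable in the upper-set sense, which is why the paper phrases the comparison in terms of survival functions.)

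There are two places where your justification, as stated, would not carry the argument. First, in (c), the equality case of the minimization delivers $F(z)=\prob(Y\le z\mid\LL(F(z)))$ for all $z$, whereas threshold calibration is defined via the classical conditional probability given the $\s$-algebra $\s(F(z))$; identifying the two requires Lemma 5.4 of \citet{ICL}, which you do not mention. Second, and more substantively, in (d) the upgrade from ``$q_\alpha(Y\mid\LL(F^{-1}(\alpha)))=F^{-1}(\alpha)$ for almost every $\alpha$'' to ``for every $\alpha$'' does not follow from ``monotonicity of $F^{-1}$'' as you claim. The paper's argument takes $\alpha_n\uparrow\alpha_0$ outside the exceptional null set, uses left-continuity of the lower quantile function together with dominated convergence to pass the optimality $\E\,\mathrm{qs}_{\alpha_n}(F^{-1}(\alpha_n),Y)\le\E\,\mathrm{qs}_{\alpha_n}(X,Y)$ to the limit, concludes that $F^{-1}(\alpha_0)$ is a minimizer at level $\alpha_0$, and only then invokes the construction in Section 6 of \citet{ICL} (under Assumption 6.1 there), by which the isotonic conditional quantile is the \emph{smallest} minimizer, to force $q_{\alpha_0}(Y\mid\LL(F^{-1}(\alpha_0)))=F^{-1}(\alpha_0)$ and hence emptiness of the null set. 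You correctly flag non-uniqueness of quantile minimizers and the role of Assumption 6.1, but the actual mechanism is this limiting-plus-minimality argument, not a direct consequence of monotonicity.
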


In view of known relationships between notions of calibration \citep[Sections 2.2 and 2.3]{Tilmann_Johannes_Calibration} the following implications hold.

\begin{cor}  \label{cor:auto_MCB}
Under Assumption \ref{assump:population}, an auto-calibrated forecast yields\/ $\MCB_\CT = \MCB_\ISO = \MCB_\BS = \MCB_\QS = 0$. 
\end{cor}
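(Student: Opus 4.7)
The plan is to combine the four equivalences collected in Theorem \ref{thm:calibration} with the hierarchy of calibration notions summarized in Figure \ref{fig:calibration}. Under the assumption that $F$ is auto-calibrated, Theorem \ref{thm:calibration}(a) immediately gives $\MCB_\CT = 0$, so the first equality in the chain is nothing more than the definition of the Candille--Talagrand miscalibration component read backwards.

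For the remaining three equalities, I would invoke Proposition 5.3 of \citet{ICL} (reviewed in Section \ref{sec:calibration}), which establishes that auto-calibration implies isotonic calibration, and that isotonic calibration in turn implies both threshold calibration and quantile calibration. Feeding each of these implications into the corresponding part of Theorem \ref{thm:calibration}, I would conclude $\MCB_\ISO = 0$ from part (b), $\MCB_\BS = 0$ from part (c), and $\MCB_\QS = 0$ from the forward direction of part (d). The latter is particularly convenient here, since only the implication ``quantile calibration $\Rightarrow \MCB_\QS = 0$'' is required, and this direction does not rely on the additional regularity hypothesis on $(Y, F^{-1}(\alpha))$ that would be needed to reverse it.

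There is essentially no obstacle: the argument is a short chain of implications AC $\Rightarrow$ IC $\Rightarrow$ TC and AC $\Rightarrow$ IC $\Rightarrow$ QC, each link of which has already been established elsewhere and is simply being quoted. The only point worth flagging is that appealing to the forward direction of Theorem \ref{thm:calibration}(d) avoids imposing any hypothesis beyond Assumption \ref{assump:population}, so the corollary holds in precisely the generality in which the surrounding theorem is stated.
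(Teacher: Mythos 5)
Your proposal is correct and matches the paper's argument: the corollary is deduced by combining Theorem \ref{thm:calibration} with the implications auto-calibration $\Rightarrow$ isotonic calibration $\Rightarrow$ threshold/quantile calibration from Proposition 5.3 of \citet{ICL}, exactly as the paper indicates. Your observation that only the forward direction of part (d) is needed, so no hypothesis beyond Assumption \ref{assump:population} is required, is a correct and worthwhile clarification.
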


\begin{cor}  \label{cor:order_of_MCB}
Under Assumption \ref{assump:population}, it holds that 
\begin{equation}  \label{eq:inequalities}
\E \, \crps(F,Y) \geq \MCB_\CT \geq \MCB_\ISO \geq \max \{ \MCB_\BS, \MCB_\QS \}.
\end{equation}
\end{cor}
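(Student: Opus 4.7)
The plan is to prove the chain from left to right by viewing each $\MCB$ term as the gap between $\E\,\crps(F,Y)$ and the minimum of an expected proper score over a class of forecasts that shrinks as we move rightward. The leftmost bound $\E\,\crps(F,Y) \geq \MCB_\CT$ is immediate from $\E\,\crps(P_{Y\mid F},Y) \geq 0$.

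For $\MCB_\CT \geq \MCB_\ISO$, I would observe that the classical conditional law $P_{Y\mid F}$ is the unconstrained $\s(F)$-measurable minimizer of $\E\,\crps(Q,Y)$, whereas $P_{Y\mid \LL(F)}$ minimizes the same objective over the subclass of Markov kernels that are isotonic in $F$ with respect to the stochastic order $\st$. Enlarging the admissible class lowers the optimum, so $\E\,\crps(P_{Y\mid F},Y) \leq \E\,\crps(P_{Y\mid \LL(F)},Y)$, and subtracting both sides from $\E\,\crps(F,Y)$ gives the inequality.

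The main work is $\MCB_\ISO \geq \max\{\MCB_\BS, \MCB_\QS\}$, which I treat in parallel using the representations \eqref{eq:CRPS_BS} and \eqref{eq:CRPS_QS}. For the Brier side, Fubini yields $\E\,\crps(P_{Y\mid \LL(F)},Y) = \int \E(P_{Y\mid \LL(F)}(z) - \one\{Y \leq z\})^2 \diff z$, so it suffices to establish a pointwise-in-$z$ bound. From $P_{Y\mid \LL(F)}(z) = 1 - \E(\one\{Y > z\} \mid \LL(F))$ and the variational characterization of isotonic conditional expectations, $P_{Y\mid \LL(F)}(z)$ minimizes $\E(X - \one\{Y \leq z\})^2$ over all $\s(F)$-measurable $X$ that are antitonic in $F$ with respect to $\st$. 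Every increasing function $\psi(F(z))$ of the scalar $F(z)$ lies in this class, because $F \st F'$ implies $F(z) \geq F'(z)$ and hence $\psi(F(z)) \geq \psi(F'(z))$; in particular, the factorization result for isotonic conditional expectations yields $\prob(Y \leq z \mid \LL(F(z))) = \psi(F(z))$ for some increasing $\psi$, placing it in this subclass. Therefore
\begin{equation*}
\E \bigl( P_{Y \mid \LL(F)}(z) - \one\{Y \leq z\} \bigr)^2 \leq \E \bigl( \prob(Y \leq z \mid \LL(F(z))) - \one\{Y \leq z\} \bigr)^2,
\end{equation*}
and integrating over $z$ gives $\MCB_\ISO \geq \MCB_\BS$. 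The bound $\MCB_\ISO \geq \MCB_\QS$ follows dually: $P_{Y\mid \LL(F)}^{-1}(\alpha)$ is isotonic in $F$ with respect to $\st$ and minimizes $\E\,\qs(g(F),Y)$ over all such isotonic $g$, whereas $q_\alpha(Y \mid \LL(F^{-1}(\alpha)))$ is an increasing function of the scalar $F^{-1}(\alpha)$, itself isotonic in $F$ with respect to $\st$, so the same inclusion of admissible classes delivers the pointwise-in-$\alpha$ bound that integrates to the desired inequality.

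The main obstacle is the pointwise-in-$z$ (and analogously pointwise-in-$\alpha$) identification of $P_{Y\mid \LL(F)}(z)$ as the minimizer over antitonic functions of $F$. ICL is defined by a joint CRPS optimization over Markov kernels, so one must verify that the family of pointwise minimizers is itself a valid cdf (monotone in $z$ with limits $0$ and $1$), whereupon uniqueness of ICL yields the identification. Both ingredients—monotonicity of isotonic conditional expectations given $\s$-lattices and a monotone/dominated convergence argument as $z \to \pm \infty$—are available in the framework of \citet{ICL}, but need to be invoked explicitly and carefully coupled with the sign flip between isotonicity of the Markov kernel $P_{Y\mid \LL(F)}$ and antitonicity of its cdf values.
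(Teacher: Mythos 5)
Your proof is correct and follows essentially the same route as the paper: the leftmost inequality from nonnegativity of the expected score of the conditional law, and the remaining ones via Fubini plus a pointwise-in-$z$ (resp.\ pointwise-in-$\alpha$) comparison of expected Brier (resp.\ quantile) scores over nested classes of admissible predictors. The paper phrases the nesting through $\s$-lattice inclusions ($\LL(F)$-measurable implies $\s(F)$-measurable, $\LL(F(z)) \subseteq \overline{\LL(F)}$, and $\LL(F^{-1}(\alpha)) \subseteq \LL(F)$, citing \citet{ICL}), which is exactly the isotonicity/antitonicity factorization argument you give, and the pointwise characterization of $P_{Y\mid\LL(F)}$ that you flag as the main obstacle is the paper's very definition of ICL, so no gap remains.
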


Importantly, while formulated at the population level, the above results apply to the empirical versions of the decompositions, by identifying the joint distribution $\prob$ of the tuple $(F,Y)$ with the empirical law of the data at \eqref{eq:data}.  In particular, the relations in \eqref{eq:inequalities} nest the respective inequalities \eqref{eq:inequalities_empirical} for the empirical decompositions.  For the isotonicity-based decomposition, if modified {\cdf}s $F^\ab$ are used the results apply to the latter, and we refer to \eqref{eq:inequalities_approx} for relationships to the respective components computed on the original {\cdf}s.  

Finally, we consider the Hersbach decomposition from Proposition \ref{prop:HB_general}, which struggles to satisfy the desirable properties from Section \ref{sec:desiderata}.  By definition, properties ($P_{1}$) and ($P_{5}$) hold.  The miscalibration component is clearly nonnegative.  However, $\DSC_\HB$ may be negative as in Example \ref{app:3}, i.e., property ($P_{2}$) is violated.  Moreover, the example in the proof of Proposition \ref{prop:HB_decomp} shows that the Hersbach decomposition fails to satisfy ($P_4$).   Concerning ($P_3$), \cite{Hersbach_2000} and \cite{Candille_2005} argue that the Hersbach reliability component is closely related to the rank histogram and hence one might expect that $\MCB_\HB = 0$ if, and only if, $F$ is probabilistically calibrated.  However, the examples in Appendices \ref{app:4} and \ref{app:5} show that probabilistic calibration is neither sufficient nor necessary for $\MCB_\HB = 0$.  The following proposition collects calibration properties in relation to the Hersbach decomposition.

\begin{prop}  \label{prop:properties_Hersbach_decomp} 
Let Assumption \ref{assump:population} hold and consider the population version of the Hersbach decomposition at \eqref{eq:HB}.
\begin{itemize}
\item[(a)] If\/ $Y \in \supp(F)$ almost surely, then\/ $\MS = 0$, where\/ $\MS$ is defined at \eqref{eq:MS_Hersbach}.
\item[(b)] For an auto-calibrated forecast, it holds that\/ $\MS = \MCB_\HB = 0$.
\item[(c)] Suppose that\/ $F$ belongs to a location family, i.e., for all\/ $x \in \R$, $F(x) = F_0(x - \mu)$ for some\/ $F_0 \in \PR$ and random location\/ $\mu$.  Suppose furthermore that\/ $F_0$ has no jumps and\/ $F_0^{-1}$ is absolutely continuous.  Then\/ $\MCB_\HB = 0$ if\/ $F$ is probabilistically calibrated.
\end{itemize}
\end{prop}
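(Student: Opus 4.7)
The plan is to prove the three parts in order: (a) directly from the formula at \eqref{eq:MS_Hersbach}, (b) by combining (a) with an explicit computation of the Radon--Nikodym derivative $f$ in \eqref{eq:tau}, and (c) via the explicit formulas in Corollary~\ref{cor:HB_1}. For part (a), I would split $\MS$ into its two summands. On the event $\{F(Y) = 0\} \cap \{Y \in \supp(F)\}$, the point $Y$ must coincide with the left endpoint of the support, which equals $F^{-1}(0+)$, so the first summand vanishes. On $\{F(Y) > 0\} \cap \{Y \in \supp(F)\}$, one shows $F^{-1}(F(Y)) = Y$: since $F^{-1}(F(Y)) \leq Y$ always, a strict inequality would force $F$ to be constant on a nontrivial interval $[F^{-1}(F(Y)), Y]$ with $Y$ as its right endpoint, which is incompatible with $Y$ lying in $\supp(F)$ outside a negligible set.

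For part (b), auto-calibration gives $Y \mid F \sim F$ almost surely, so conditionally $Y$ lies in $\supp(F)$ with probability one; hence $Y \in \supp(F)$ almost surely and part (a) yields $\MS = 0$. To prove $\MCB_\HB = 0$, it suffices to show that $f(p) = p$ for $\mu$-almost every $p$. Conditioning on $F$ in \eqref{eq:tau} and using that $\nu_F$ is determined by $F$,
\begin{align*}
\tau(A) = \E\left[\int_A \prob(F(Y) \leq p \mid F)\, d\nu_F(p)\right].
\end{align*}
Under auto-calibration, a direct right-continuity argument shows $\prob(F(Y) \leq p \mid F) = p$ for every $p$ in the random range of $F$; since $\nu_F$ is concentrated on this range, the identity holds $\nu_F$-almost surely. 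Fubini--Tonelli then yields $\tau(A) = \int_A p\, d\mu(p)$, so $f(p) = p$ for $\mu$-almost every $p$ and $\MCB_\HB = 0$.

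For part (c), the location family structure $F(x) = F_0(x - \mu)$ with $F_0$ continuous and $F_0^{-1}$ absolutely continuous implies that $F^{-1}(p) = F_0^{-1}(p) + \mu$ is almost surely absolutely continuous with derivative $\gamma_0(p) := (F_0^{-1})'(p)$, which is deterministic and independent of the random location. Corollary~\ref{cor:HB_1} then supplies $\MS = 0$, the Lebesgue density $\gamma(p) = \gamma_0(p)$ of the measure defined in \eqref{eq:mu}, and the relation $f(p)\gamma_0(p) = \gamma_0(p)\,\prob(F(Y) \leq p)$ on $\{\gamma_0 > 0\}$. Since $F_0$ has no jumps, $F$ is almost surely continuous and the PIT reduces to $Z_F = F(Y)$; probabilistic calibration then gives $\prob(F(Y) \leq p) = p$, whence $f(p) = p$ on $\{\gamma_0 > 0\}$ and $\MCB_\HB = 0$. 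The main obstacle lies in part (a), where atoms and flat regions of $F$ require careful bookkeeping to ensure the identities $F^{-1}(F(Y)) = Y$ and $F^{-1}(0+) = Y$ hold almost surely on the relevant events; parts (b) and (c) then follow cleanly from the conditioning argument above and Corollary~\ref{cor:HB_1}, respectively.
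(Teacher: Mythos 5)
Your treatments of parts (b) and (c) are correct and essentially identical to the paper's: for (b) the paper also conditions on $F$ in \eqref{eq:tau} and argues that $F(F^{-1}(p))=p$ holds $\nu_F$-almost everywhere (your ``$p$ in the range of $F$'' formulation is the same observation), and for (c) the paper likewise reduces to Corollary \ref{cor:HB_1} with $\frac{\diff}{\diff p}F^{-1}(p)$ deterministic, obtaining $f(p)=\prob(F(Y)\le p)=p$.

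The genuine gap is in part (a), and you have correctly located it but not resolved it. The claim that $F$ being constant on a nontrivial interval $[F^{-1}(F(Y)),Y]$ is ``incompatible with $Y$ lying in $\supp(F)$'' is false: $Y$ can be the left endpoint of a component of $\supp(F)$ that follows a gap, in which case $F$ is flat on a left neighbourhood of $Y$ while every neighbourhood of $Y$ still carries positive $F$-mass. The hedge ``outside a negligible set'' does not repair this, because although such points form a countable set, nothing in the hypotheses prevents the law of $Y$ from charging them. Indeed, the statement of (a) itself fails there: take $F$ with density $\tfrac14\one_{[0,1]}+\tfrac38\one_{[2,4]}$ and $Y=2$ almost surely; then $Y\in\supp(F)$, yet $F(Y)=\tfrac14$, $F^{-1}(F(Y))=1$, and $\MS=(2\cdot\tfrac14-1)(2-1)=-\tfrac12$. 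So (a) requires the additional (implicit) assumption that $Y$ almost surely avoids the right endpoints of gaps in $\supp(F)$ — e.g., connected support, or no atoms of the law of $Y$ at those countably many points. The paper offers no argument for (a) beyond ``follows from the definition,'' so your difficulty reflects an imprecision in the statement rather than a missing idea on your side; note also that your use of (a) inside part (b) remains safe, since under auto-calibration any such endpoint either carries an atom of $F$ (whence $F^{-1}(F(Y))=Y$ there after all) or has conditional probability zero. Your argument for the first summand of $\MS$ (that $F(Y)=0$ and $Y\in\supp(F)$ force $Y=F^{-1}(0+)$) is correct.
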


\begin{figure}[t]
\centering
\includegraphics[width = 0.60\textwidth]{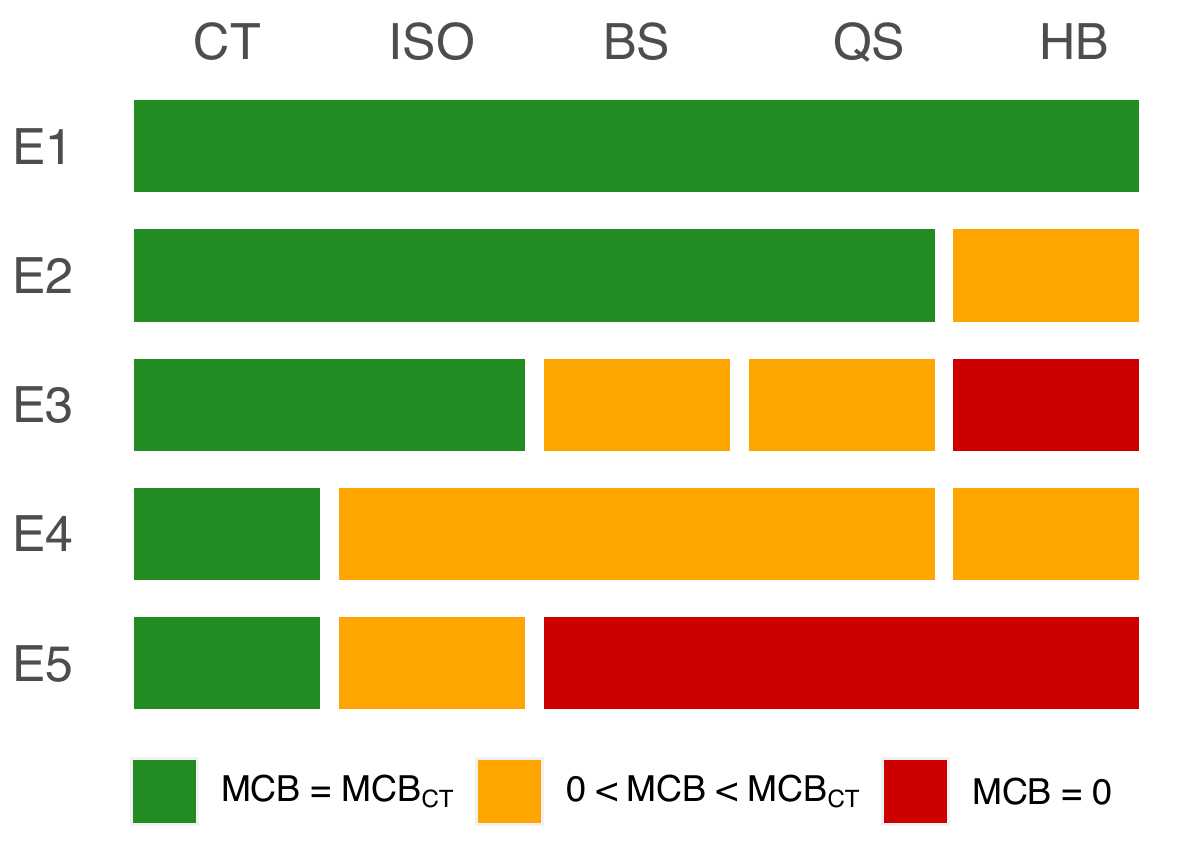}
\caption{The graphic indicates for the population level examples E1, \ldots, E5 in Appendix \ref{app:examples} whether the $\MCB_\bullet$ term, where $\bullet$ stands for \CT, \ISO, \BS, \QS, or \HB, respectively, agrees with the theoretically preferred quantity $\MCB_\CT$ (green), is smaller than $\MCB_\CT$ but remains positive (orange), or deceptively equals zero (red).  Connected segments indicate equality of corresponding terms.  For analytic results, see Table \ref{tab:examples}.  \label{fig:examples}}
\end{figure}

In Appendix \ref{app:examples} we compare the different types of decompositions in a number of analytic examples at the population level.  Figure \ref{fig:examples} summarizes how the respective miscalibration terms relate to the theoretically preferred $\MCB_\CT$ component.

\section{Case studies}  \label{sec:case_study}

We now illustrate the use of the isotonicity-based decomposition from Section \ref{sec:ISO} in case studies on weather forecasts and benchmark regression tasks from machine learning, respectively.  For simplicity, we use an abbreviated notation for the components of the mean score $\Sbar$ throughout this section, namely, $\MCBbar = \MCBbar_\ISO$, $\DSCbar = \DSCbar_\ISO$, and $\UNCbar = \UNCbar_0$, respectively.  Note the opposite orientation of $\MCBbar$ and $\DSCbar$, in that higher $\DSCbar$ corresponds to better discrimination ability, whereas lower $\MCBbar$ indicates better calibration. 

When one seeks to simultaneously compare $\Sbar$, $\MCBbar$, and $\DSCbar$ between larger numbers of forecast methods, tables get cumbersome.  Therefore, we suggest a graphical display, namely, the $\MCBbar$--$\DSCbar$ plot, which is motivated by similar displays in \citet{Dimitriadis_2023} and \citet{Gneiting_2023}.  In this type of graphic, $\MCBbar$ is plotted against $\DSCbar$, and isolines correspond to specific values of the mean score $\Sbar$, which is constant along parallel lines.  The uncertainty component $\UNCbar$ is independent of the forecast method, and we display it in the upper left or upper right corner of the plot.

\subsection{Probabilistic quantitative precipitation forecasts}  \label{sec:weather}

Ensemble prediction systems have tremendously improved weather forecasts over the past decades \citep{Bauer2015}.  However, ensemble forecasts remain subject to biases and dispersion errors, and hence require some form of statistical postprocessing \citep{Gneiting2005a, Vannitsem2018}.   Here we consider the case study in \cite{IDR}, which compares the performance of raw and postprocessed ensemble forecasts for 24-hour accumulated precipitation in terms of the mean score $\Sbar$, which we decompose into $\MCBbar$, $\DSCbar$, and $\UNCbar$, respectively.  

Following \cite{IDR}, we consider forecasts and observations for 24-hour accumulated precipitation from 6 January 2007 to 1 January 2017 at Brussels, Frankfurt, London, and Zurich in millimeters.  The 52 member raw ensemble (ENS) forecast operated by the European Centre for Medium-Range Weather Forecasts comprises a high resolution member, a control member at lower resolution, and 50 perturbed members at the same lower resolution but with perturbed initial conditions \citep{Molteni1996}.  We use data from 2007 to 2014 to train the postprocessing techniques Bayesian model averaging \citep[BMA;][]{Sloughter2007}, ensemble model output statistics \citep[EMOS;][]{Scheuerer2014}, heteroscedastic censored logistic regression \citep[HCLR;][]{Messner2014} and two versions, IDR$_\textrm{cw}$ and IDR$_\textrm{st}$, of isotonic distributional regression \citep[IDR;][]{IDR}, where IDR$_\textrm{cw}$ is documented in \cite{IDR} and IDR$_\textrm{st}$ uses the stochastic order on the ensemble {\cdf}s.  For further implementation details we refer the reader to \citet{IDR}.  The years 2015 and 2016 form the evaluation period.  

\begin{figure}[p]
\centering
\includegraphics[width = 0.49\textwidth]{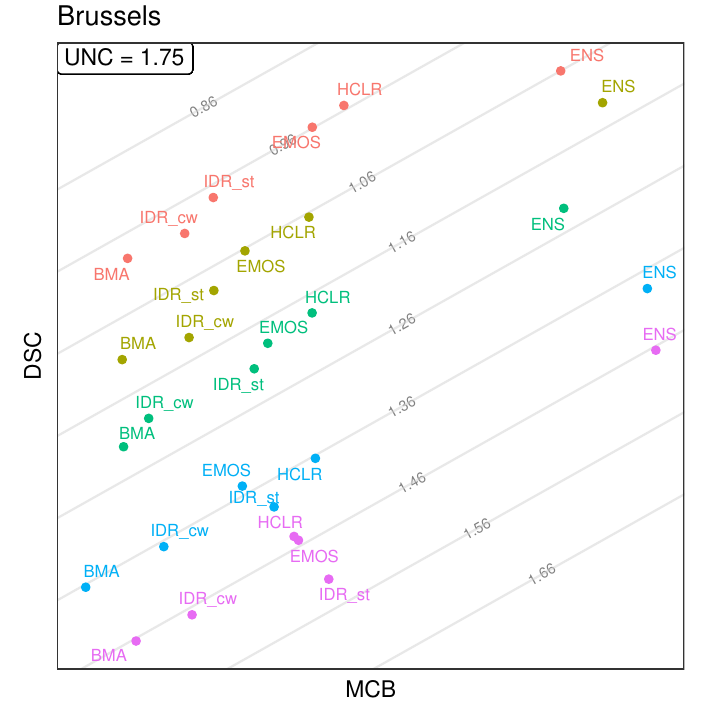}
\includegraphics[width = 0.49\textwidth]{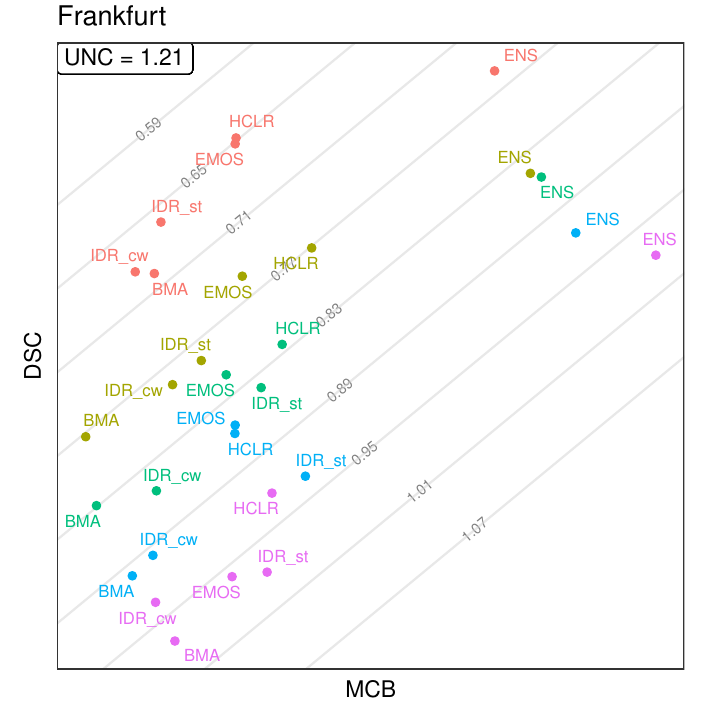}
\includegraphics[width = 0.49\textwidth]{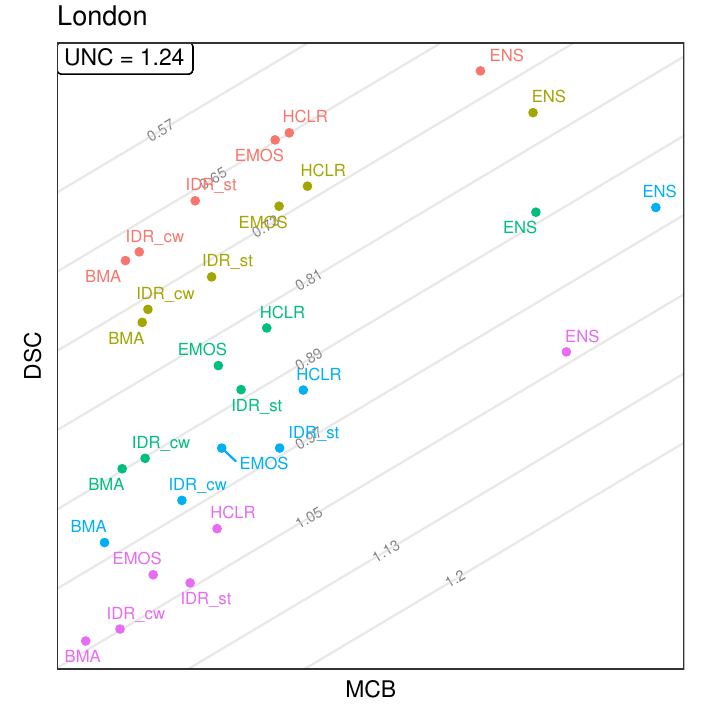}
\includegraphics[width = 0.49\textwidth]{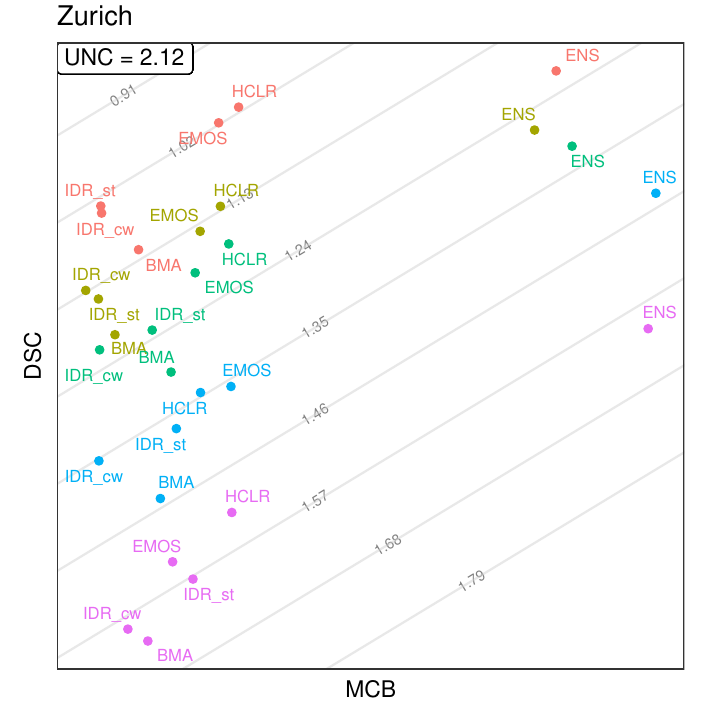}
\includegraphics[width = 0.49\textwidth]{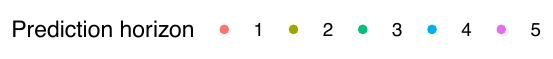}
\caption{$\MCBbar$--$\DSCbar$ plots for forecasts of 24-hour accumulated precipitation at Brussels, Frankfurt, London, and Zurich, at prediction horizons of one to five days ahead.  The mean score $\Sbar$ is constant along the parallel lines and shown in the unit of millimeters.  Acronyms are defined in the text, and details of the forecast methods are documented in \citet[Section 5]{IDR}.  \label{fig:weather}}
\end{figure}

The ENS and IDR forecast distributions have finite support and we apply the isotonicity-based decomposition of $\Sbar$ in its pure form from Section \ref{sec:ISO_standard}.  For the other forecasts, which employ mixtures of a point mass at zero (for no precipitation) and a density at positive accumulations as predictive distributions, we fix $a = 0$ and use Algorithm \ref{alg:iso_error_tol} to determine the upper bound $b$, which generally is identical to, or very slightly higher than, the highest accumulation observed in the test data; then we compute stochastic order relations on an equidistant grid of size $5000$ over $[a,b]$ and apply the isotonicity-based decomposition in its approximate form from Section \ref{sec:ISO_approximate}. 

The respective $\MCBbar$--$\DSCbar$ plots for Brussels, Frankfurt, London, and Zurich are shown in Figure \ref{fig:weather}.  We note an increase of the mean score $\Sbar$ values with the prediction horizon, which is due to a decrease in discrimination ability.  The raw ensemble (ENS) forecasts discriminate very well, but are poorly calibrated.  The postprocessing methods yield considerable improvement in $\Sbar$, subject to a trade-off between $\MCBbar$ and $\DSCbar$.  The EMOS and HCLR techniques, which employ inflexible parametric densities with fixed shape, excel in terms of discrimination, but lack in calibration.  In contrast, the BMA and IDR techniques, which are much more flexible, are better calibrated, but inferior in terms of discrimination ability.

\subsection{Benchmark regression problems from machine learning}  \label{sec:ML}  

A sizable strand of recent literature in machine learning is concerned with methods for uncertainty quantification for neural networks, where the task is the transformation of single-valued neural network output into predictive distributions \citep{Gawlikowski2023}.  In this literature, performance is typically evaluated in terms of the mean logarithmic score \citep[Section 4.1]{Gneiting2007a} which, in sharp contrast to the crps, can only be applied to methods that generate predictive densities.  Furthermore, extant measures for the assessment of calibration and discrimination ability tend to be ad hoc. In this section, we demonstrate the use of the mean score $\Sbar$ and its isotonicity-based decomposition into $\MCBbar$, $\DSCbar$, and $\UNCbar$ in this context. 

We adopt the benchmark regression tasks setting originally proposed by \cite{HernandezLobato2015} and consider the datasets and methods from the middle block of Table 6 in \citet{EasyUQ}, except that we skip results for the Naval and Year datasets, for which there are missing entries.  The experimental setting is based on single-valued output from a neural network, which learns a regression function based on a collection of covariates or features.  In this setting, \cite{Walz2022} compare competing methods for uncertainty quantification, including the popular Monte Carlo Dropout approach \citep[MC Dropout;][]{Gal2016} and a scalable Laplace approximation based technique \citep[Laplace;][]{Immer2021, Ritter2018} that operate within the neural network learning pipeline.  Their competitors include output-based methods that learn on training data of previous single-valued model output and outcomes only, without accessing feature values, namely, the Single Gaussian technique, conformal prediction \citep[CP;][]{Vovk2020}, and the EasyUQ technique \citep{Walz2022}, which is based on IDR \citep{IDR}.  Furthermore, we consider smoothed versions of the discrete CP and EasyUQ distributions, termed Smooth CP and Smooth EasyUQ, respectively.  For implementation details, we refer the reader to \citet{Walz2022}.  

\begin{sidewaysfigure}[p]
\centering
\includegraphics[width = 0.245\textwidth]{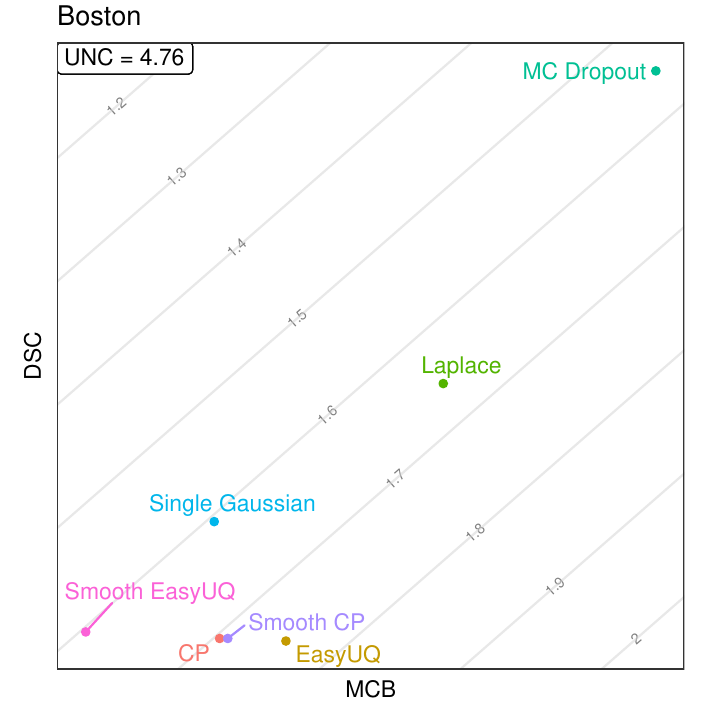}
\includegraphics[width = 0.245\textwidth]{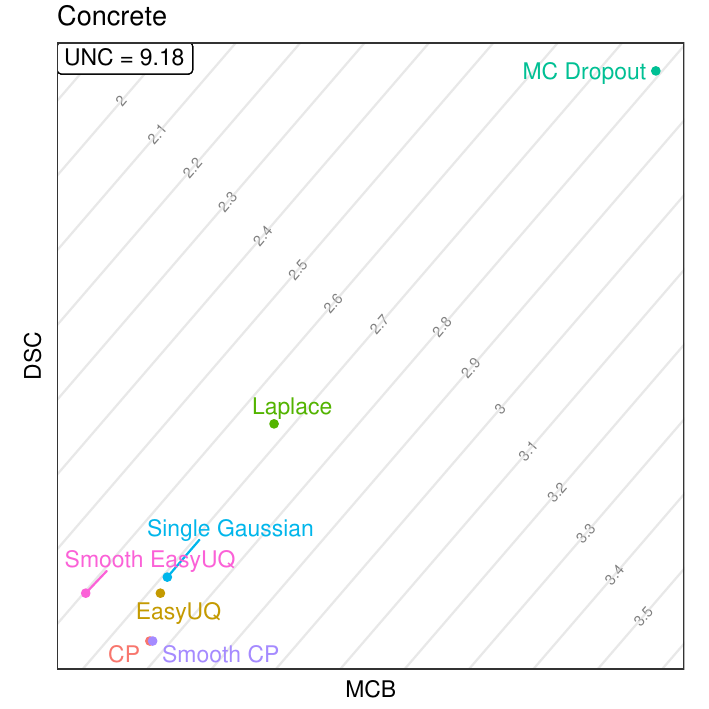}
\includegraphics[width = 0.245\textwidth]{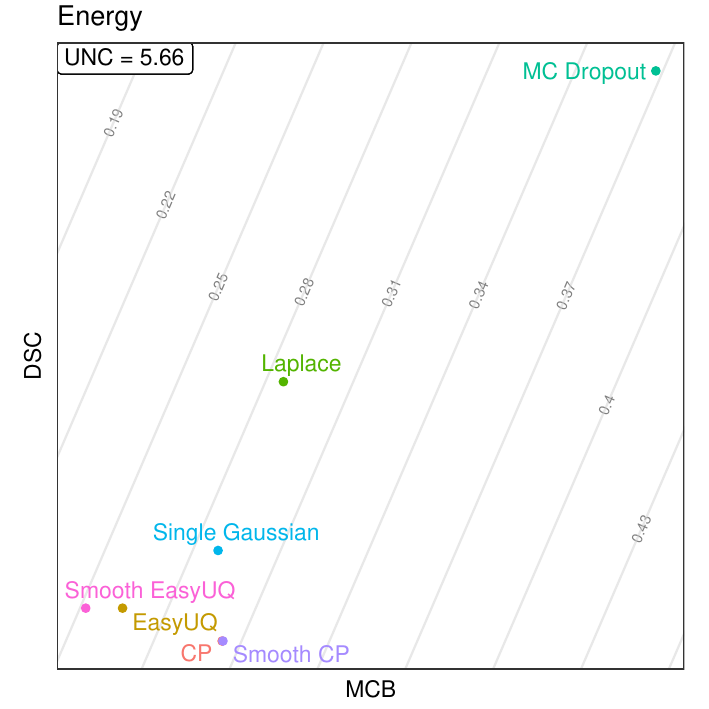}
\includegraphics[width = 0.245\textwidth]{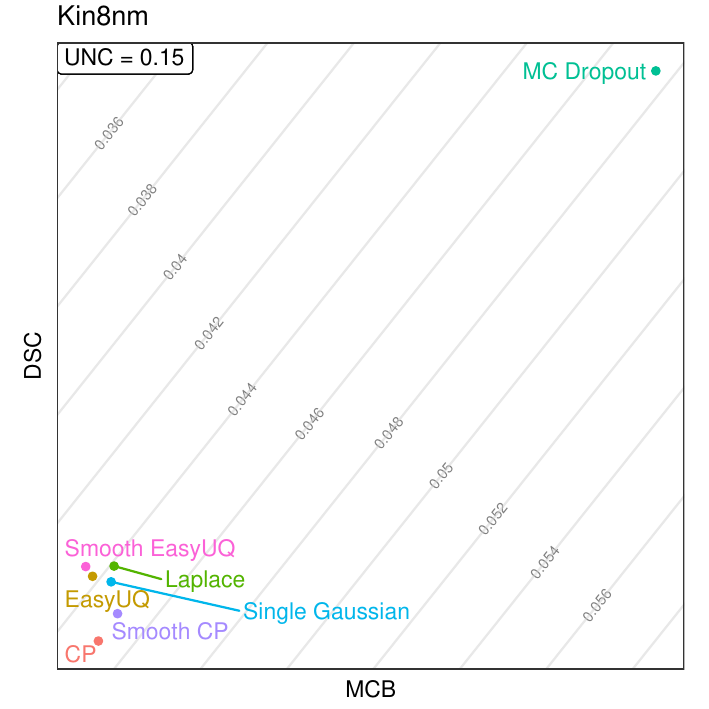}
\includegraphics[width = 0.245\textwidth]{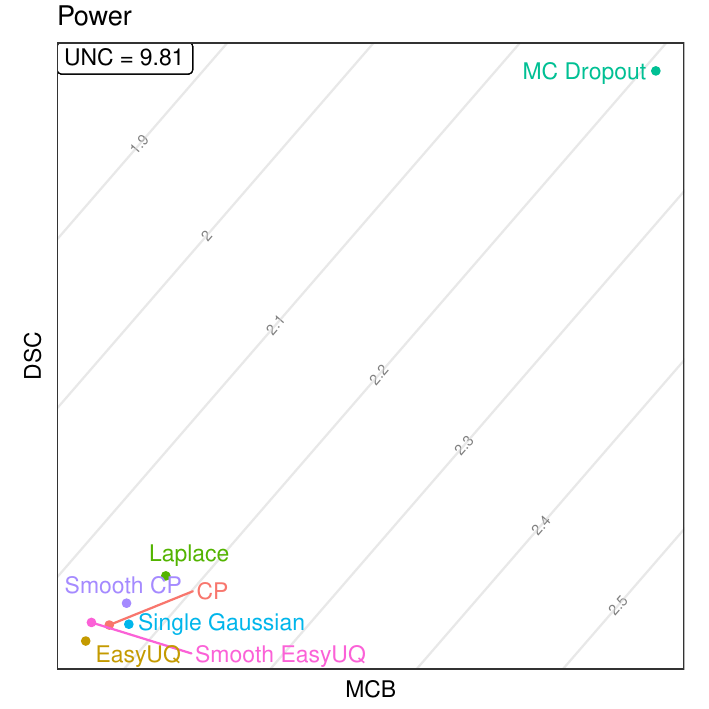}
\includegraphics[width = 0.245\textwidth]{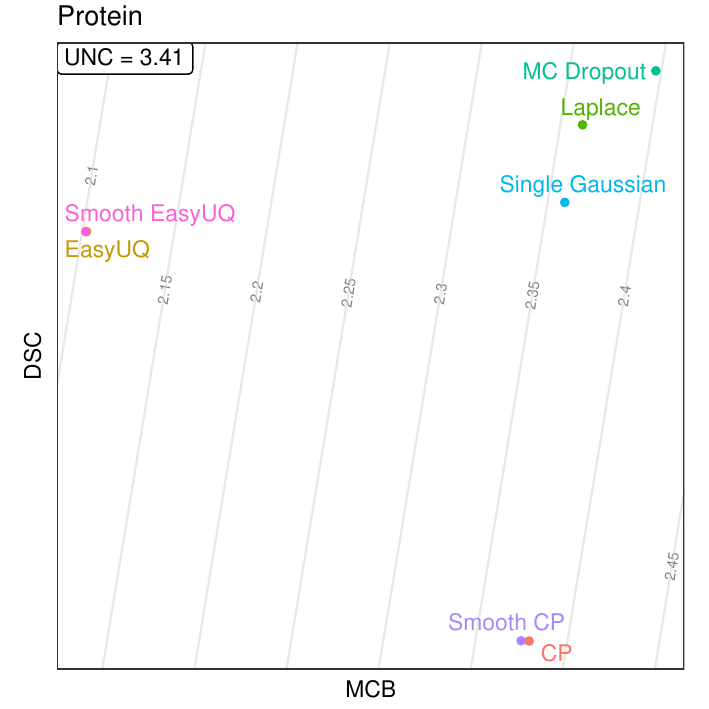}
\includegraphics[width = 0.245\textwidth]{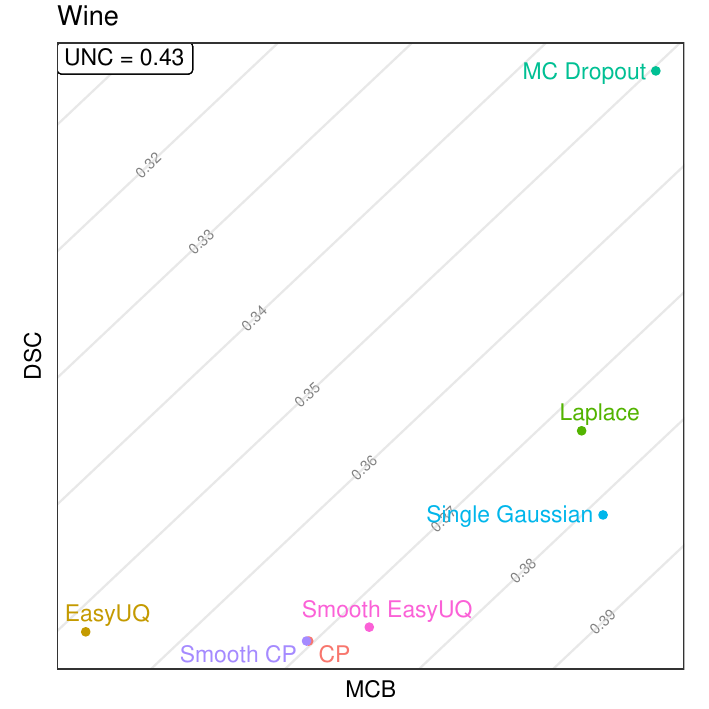} 
\includegraphics[width = 0.245\textwidth]{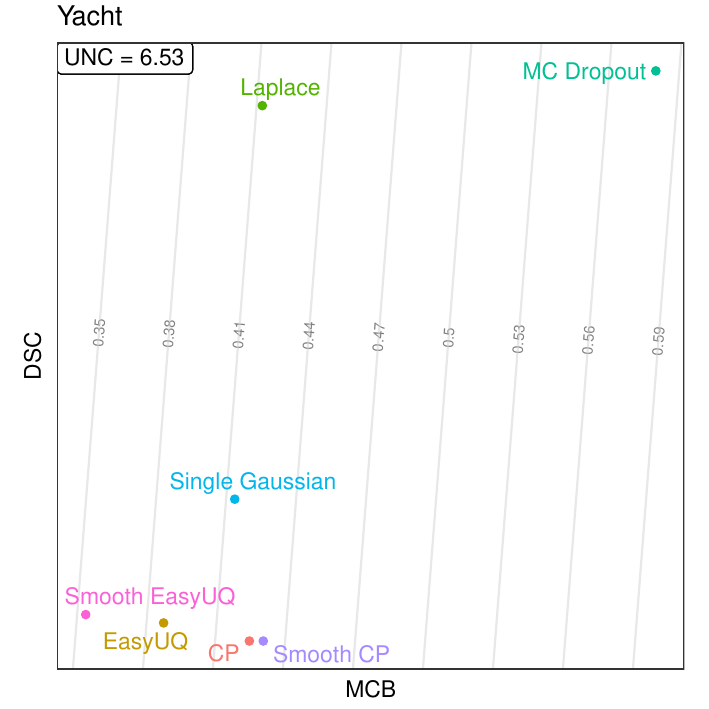}
\caption{$\MCBbar$--$\DSCbar$ plots for methods of uncertainty quantification for neural network based regression from the middle block in Table 6 of \citet{EasyUQ}.  The mean score $\Sbar$ is constant along the parallel lines.  \label{fig:ML}}
\end{sidewaysfigure}

The CP and EasyUQ distributions have finite support, and the Single Gaussian incurs normal distribution with a fixed variance, but varying mean.  For these three methods, we use the isotonicity-based decomposition of $\Sbar$ in the standard form from Section \ref{sec:ISO_standard}.  The Laplace method also employs normal distributions, but with varying mean and variances.  The MC Dropout technique yields mixtures of normal distributions, and the Smooth CP and Smooth EasyUQ distributions are mixtures of Student-$t$ distributions (or normal distributions as a limit case).  For these methods, we use the approximations described in Section \ref{sec:ISO_approximate}.  

The $\MCBbar$--$\DSCbar$ plots in Figure \ref{fig:ML} illustrate the mean score $\Sbar$ and the $\MCBbar$, $\DSCbar$, and $\UNCbar$ components for the eight datasets and seven methods, respectively.  The MC Dropout technique yields predictive distributions that are poorly calibrated, a finding that is well documented in the machine learning literature \citep{Gawlikowski2023}, though with high discrimination ability.  The predictive distributions generated by the Laplace method trade better calibration for diminished discrimination ability.  The simplistic Single Gaussian technique performs surprisingly well, typically with both the $\MCBbar$ and the $\DSCbar$ component being small relative to the competitors.  The EasyUQ and CP distributions generally are well calibrated, with low $\MCBbar$ components throughout, and often superior overall performance.  Smoothing of the discrete EasyUQ and CP distributions has only  small effects.  The only exception is for the EasyUQ forecast for the Wine dataset, which has only ten unique outcomes that correspond to quality levels, thus favoring the discrete basic EasyUQ distributions, which place all probability mass on this small set of outcomes.

\section{Discussion}  \label{sec:discussion} 

In line with the general idea of the CORP approach of \citet{Dimitriadis_2023} and \citet{Tilmann_Johannes_Calibration}, we have developed an isotonicity-based decomposition of the mean score $\Sbar$.  Both theoretically and computationally, the isotonicity-based decomposition serves as an attractive alternative to the Candille--Talagrand decomposition, which is of theoretical appeal, but yields degenerate decompositions in practice.  Remarkably, Proposition \ref{prop:inequalities} ensures that theoretical guarantees for the standard implementation from Section \ref{sec:ISO_standard} very nearly carry over to the approximate implementation described in Section \ref{sec:ISO_approximate}.   Code in \textsf{R} \citep{R} for the computation of the isotonicity-based decomposition and replication materials are available at \url{https://github.com/evwalz/isodisregSD} and \url{https://github.com/evwalz/paper_isocrpsdeco}, respectively.

Due to its linear computational complexity, the Hersbach decomposition is a viable option for decomposing $\Sbar$ for ensemble forecasts with a moderate number $m$ of members, even when the size $n$ of the evaluation set at \eqref{eq:data} is very large and the isotonicity-based approach with its quadratic complexity is not feasible.  We recommend that it be used in the modified form described in Section \ref{sec:HB}, which allows for extensions beyond the case of ensemble forecasts, as described in Appendix \ref{app:proofs1}.  A useful facet of the Hersbach decomposition is that it applies to general (nonnegatively) weighted sums (rather than simple averages only) of \crps\ scores \citep{Hersbach_2000}.  The isotonicity-based decomposition generalizes to weighted sums as well, as the theoretical guarantees for IDR \citep{IDR} continue to apply in weighted case, and software developed by Alexander Henzi (\url{https://github.com/AlexanderHenzi/isodistrreg}) handles the extension.  We leave details to future work.  

As noted, the desirable properties $(E_1)$, \dots, $(E_5)$ in the empirical case and $(P_1)$, \dots, $(P_5)$ in the population case remain valid for decomposition of the mean score under proper scoring rules other than the \crps.  For instance, in various applications a certain region of the potential range of the outcome is of particular interest, and predictive performance might then be assessed with emphasis on these regions.  In such settings, one may use versions of the \crps\ as proposed by \citet{Gneiting_Ranjan_2013}, namely, 
\begin{align*}
\crps_w(F,y) = \int_{-\infty}^\infty w(x) \, \bs(F(x), \one \{ y \leq x \}) \diff x
\end{align*}
and 
\begin{align*}
\crps_v(F,y) & = \int_0^1 v(\alpha) \, \qs(F^{-1}(\alpha),y) \diff \alpha, 
\end{align*}
where $w$ and $v$, respectively, are nonnegative weight functions.  In view of the universality property of IDR \citep[Theorem 2]{IDR}, the isotonicity-based decomposition extends naturally to means of these types of scores, while preserving its desirable properties. 

However, the isotonicity-based approach fails if a mean of logarithmic scores \citep[Section 4.1]{Gneiting2007a} is sought to be decomposed, for the logarithmic score, which allows for the comparison of density forecasts only, cannot be applied to the discrete IDR distributions.  While in principle isotonic recalibration by IDR, on which isotonicity-based decompositions are based, could be replaced by recalibration with other methods, it is not at all evident what type of technique ought to be used, and we are unaware of any such method that would share the optimality properties of IDR that underlie the theoretical guarantees enjoyed by the isotonicity-based approach.

Various authors have pondered the use of the \crps, which is favored by the meteorological and renewable energy literatures, as opposed to the logarithmic score, which is of particular popularity in econometrics and machine learning, with the choice arising both in the context of estimation via empirical score minimization and in the evaluation of predictive performance \citep{Gneiting2007a}.  For example, \citet[Appendix B]{DIsanto2018} argue that in neural network learning empirical score minimization in terms of the mean \crps\ is preferable to optimization of the logarithmic score.  In the evaluation of predictive performance, the availability of the theoretically supported and practically feasible isotonicity-based decomposition, in concert with the applicability of the score to discrete forecast distributions, strengthens arguments in favor of the \crps.  
\section*{Acknowledgements} 

We thank Tim Hewson, Kai Polsterer, and Johannes Resin for comments and discussion.  Tilmann Gneiting is grateful for  support by the Klaus Tschira Foundation.  The work of Eva-Maria Walz was funded by the German Research Foundation (DFG) through grant number 257899354.  Sebastian Arnold and Johanna Ziegel gratefully acknowledge financial support from the Swiss National Science Foundation.  Computations for the weather case study have been performed on UBELIX (\url{https://ubelix.unibe.ch/}), the HPC cluster of the University of Bern.

\bibliographystyle{abbrvnat}
\bibliography{biblio}

\appendix

\section{Technical details for the Brier score and quantile score based decompositions}  \label{app:BS_QS}

In this appendix we describe the Brier score (BS) and quantile score (QS) based decompositions from Sections \ref{sec:BS} and \ref{sec:QS} for the mean score $\Sbar$ of the forecast--observation pairs $(F_1,y_1), \dots, (F_n,y_n)$ at \eqref{eq:data}.  Both decompositions build on a general version of the pool-adjacent-violators (PAV) algorithm for nonparametric isotonic regression \citep{Ayer1955}.  While historically work on the PAV algorithm has focused on the mean functional \citep{4B, Robertson_1988, PAVA_in_R}, the algorithm yields optimal isotonic fits under any identifiable functional; see, e.g., \citet{Optimal} and \citet[Section 3.1]{Tilmann_Johannes_Calibration}. 

\subsection{Brier score based decomposition}  \label{app:BS}

For each threshold value $z \in \R$, we interpret $F_1(z), \dots, F_n(z)$ as probability forecasts for the binary event $\xi_i(z) = \one \{ y_i \leq z \}$, where $i = 1, \dots, n$.  We obtain calibrated forecasts $\cF_1(z), \dots, \cF_n(z)$ by applying the PAV algorithm for the mean functional on $\xi_1(z), \dots, \xi_n(z)$ with respect to the order induced by $F_1(z), \dots, F_n(z)$.  This yields the CORP decomposition of the mean Brier score 
\begin{align*}
\BSbar_{F(z)} = \frac{1}{n} \sum_{i=1}^n \bs \big( F_i(z), \xi_i(z) \big)
\end{align*}
as proposed by \cite{CORP}, namely,  
\begin{align*}
\BSbar_{F(z)} = \underbrace{\left( \BSbar_{F(z)} - \BSbar_{\cF(z)} \right)}_{\MCBbar_{\BS,z}} 
- \underbrace{\left( \BSbar_{\cF(z)} - \BSbar_{\hat{F}_\mg(z)} \right)}_{\DSCbar_{\BS,z}} 
+ \, \underbrace{\BSbar_{\hat{F}_\mg(z)}}_{\UNCbar_{\BS,z}},
\end{align*}
where $\hat{F}_\mg(z) = \frac{1}{n} \sum_{i=1}^n \xi_i(z)$ for $z \in \R$,
\begin{align*}
\BSbar_{\cF(z)} = \frac{1}{n} \sum_{i=1}^n \bs \big( \cF_i(z), \xi_i(z) \big)
\eqand
\BSbar_{\hat{F}_\mg(z)} = \frac{1}{n} \sum_{i=1}^n \bs \big( \hat{F}_\mg(z), \xi_i(z) \big).
\end{align*}
Integration of the $\MCBbar_{\BS,z}, \DSCbar_{\BS,z}$ and $\UNCbar_{\BS,z}$ components over $z\in \R$ yields the Brier score based score components and decomposition at \eqref{eq:BScomponents} and \eqref{eq:BSempirical}, respectively.  

Computationally, it suffices to run the PAV algorithm at $z \in \{ y_1, \dots, y_n \}$ and at the crossing points of the {\cdf}s $F_1, \dots, F_n$.  

\begin{proof}[Proof of Proposition \ref{prop:BS}]
We note that
\begin{align*}
\UNCbar_\BS & = \int \BSbar_{\hat{F}_\mg(z)} \diff z = \int \frac{1}{n} \sum_{i=1}^n \bs \big( \hat{F}_\mg(z), \xi_i(z) \big) \diff z \\
& = \frac{1}{n} \sum_{i=1}^n \int \big( \hat{F}_\mg(z) - \xi_i(z) \big)^2 \diff z = \frac{1}{n} \sum_{i=1}^n \crps(\hat{F}_\mg, y_i) = \UNCbar_0,
\end{align*}
which implies that ($E_5$) is satisfied.  Property ($E_1)$ is immediate.  \cite{CORP} show that $\MCBbar_{\BS,z}$ and $\DSCbar_{\BS,z}$ are nonnegative for all $z\in \R$ and thus ($E_2$) is satisfied.  Example \ref{app:3} implies that the decomposition is not degenerate, so ($E_3$) is satisfied.  Finally, suppose that $F_1 = \dots = F_n$.  Then for each $z \in \R$, the PAV algorithm for the mean functional on $\xi_1(z), \dots, \xi_n(z)$ with respect to the order induced by $F_1(z) = \dots = F_n(z)$ yields the constant calibrated forecast $\hat{F}_\mg(z)$.  Hence $\DSCbar_{\BS} = 0$, so that ($E_4)$ is satisfied.
\end{proof}

\begin{rem}  \label{rem:Fhat_i_need_not_be_cdfs}
The functions $\cF_1, \dots, \cF_n$ are not necessarily increasing and hence they generally fail to be {\cdf}s.  For instance, let $n = 2$ and $z < z'$.  If $F_1(z) < F_2(z)$, $F_1(z') = F_2(z')$ and $y_2 \leq z < z'< y_1$, then $\cF_2(z) = 1 > 1/2 = \cF_2(z')$, so $\cF_2$ is not increasing. 
\end{rem}

\subsection{Quantile score based decomposition}  \label{app:QS}

For each level $\alpha \in (0,1)$, we consider $F_1^{-1}(\alpha), \dots, F_n^{-1}(\alpha)$ as point forecasts in the form of the $\alpha$-quantile.  We apply the PAV algorithm for the $\alpha$-quantile functional on $y_1, \dots, y_n$ with respect to the order induced by $F^{-1}_1(\alpha), \dots, F^{-1}_n(\alpha)$ to yield calibrated $\alpha$-quantile forecasts $\cFinv^{-1}_1(\alpha), \dots, \cFinv^{-1}_n(\alpha)$.  This induces the CORP decomposition of the mean quantile score 
\begin{align*}
\QSbar_{F^{-1}(\alpha)} = \frac{1}{n} \sum_{i=1}^n \qs \big( F_i^{-1}(\alpha), y_i \big)
\end{align*}
as described by \citet[Section 3.3]{Tilmann_Johannes_Calibration} and \citet[Section 3.3]{Gneiting_2023}, namely,  
\begin{align*}
\QSbar_{F^{-1}(\alpha)}
= \underbrace{\big( \QSbar_{F^{-1}(\alpha)} - \QSbar_{\cFinv^{-1}(\alpha)}\big)}_{\MCBbar_{\QS,\alpha}} - \underbrace{\big( \QSbar_{\cFinv^{-1}(\alpha)} - \QSbar_{\hat{F}_\mg^{-1}(\alpha)} \big)}_{\DSCbar_{\QS,\alpha}} + \, \underbrace{\QSbar_{\hat{F}_\mg^{-1}(\alpha)}}_{\UNCbar_{\QS,\alpha}},
\end{align*}
where $\hat{F}_\mg^{-1}(\alpha)$ is the quantile function of the marginal empirical law of the outcomes $y_1, \dots, y_n$, 
\begin{align*}
\QSbar_{\cFinv^{-1}(\alpha)} = \frac{1}{n} \sum_{i=1}^n \qs \big( \cFinv^{-1}_i(\alpha), y_i \big),
\qquad
\QSbar_{\hat{F}_\mg^{-1}(\alpha)} = \frac{1}{n} \sum_{i=1}^n \qs \big( \hat{F}_\mg^{-1}(\alpha), y_i \big).  
\end{align*}
Integration of the $\MCBbar_{\QS,\alpha}, \DSCbar_{\QS,\alpha}$ and $\UNCbar_{\QS,\alpha}$ components over $\alpha\in (0,1)$ yields the quantile score based decomposition at \eqref{eq:QSempirical}.  

For an exact computation, the PAV algorithm needs to be run at all quantile levels $l/k$, where $k = 1, \dots, n$ and $l = 1, \dots, k - 1$, and at all crossing points of the quantile functions $F_1^{-1}, \dots, F_n^{-1}$.  In practice, it suffices to apply the PAV algorithm on a fine grid of quantile levels. 

\begin{proof}[Proof of Proposition \ref{prop:QS}]
In analogy to the proof of Proposition \ref{prop:BS}, we find that
\begin{align*}
\UNCbar_\QS & = \int_0^1 \QSbar_{\hat{F}_\mg^{-1}(\alpha)} \diff\alpha = \int_0^1 \frac{1}{n} \sum_{i=1}^n \qs \big( \hat{F}_\mg^{-1}(\alpha),y_i \big) \diff \alpha \\
& = \frac{1}{n} \sum_{i=1}^n \int_0^1 \qs \big( \hat{F}_\mg^{-1}(\alpha),y_i \big) \diff \alpha 
= \frac{1}{n} \sum_{i=1}^n \crps(\hat{F}_\mg, y_i) = \UNCbar_0,
\end{align*}
and hence ($E_5$) is satisfied.  Property ($E_1$) is clear by definition.  Theorem 3.3 of \citet{Tilmann_Johannes_Calibration} implies that $\MCBbar_{\QS,\alpha}$ and $\DSCbar_{\QS,\alpha}$ are nonnegative for all $\alpha \in (0,1)$ and thus ($E_2$) is satisfied.  Example \ref{app:3} shows that the decomposition is not degenerate, i.e., ($E_3$) is satisfied.  Finally, suppose that $F_1 = \dots = F_n$.  Then for each $\alpha \in (0,1)$, applying the PAV algorithm on $y_1, \dots, y_n$ with respect to the order induced by $F^{-1}_1(\alpha) = \dots = F^{-1}_n(\alpha)$ yields the constant calibrated forecast $\cFinv^{-1}(\alpha) = \hat{F}_\mg^{-1}(\alpha)$ and hence $\DSCbar_\QS = 0$, i.e., ($E_4)$ is satisfied.
\end{proof}

\begin{rem}  \label{rem:Fhat_i_need_not_be_cdfs_quantiles}
In analogy to the statements in Remark \ref{rem:Fhat_i_need_not_be_cdfs}, the functions $\cFinv^{-1}_1, \dots, \cFinv^{-1}_n$ are not necessarily increasing and hence may not be quantile functions.  For example, let $n = 2$ and $\alpha < \alpha' < 1/2$, and suppose that $y_1 < y_2$, $F_1^{-1}(\alpha) < F_2^{-1}(\alpha)$, and $F_1^{-1}(\alpha') = F_2^{-1}(\alpha')$.  Then $\cFinv^{-1}_2(\alpha) = y_2 > y_1 = \cFinv^{-1}_2(\alpha')$ whence $\cFinv^{-1}_2$ is not increasing. 
\end{rem}

\section{Technical details for the original and modified Hersbach decompositions}  \label{app:HB}

As in Section \ref{sec:HB}, we consider a collection of the form at \eqref{eq:data} of forecast--outcome pairs $(F_1,y_1), \dots, (F_n,y_n)$, where for $i = 1, \dots, n$, the forecast $F_i$ is the empirical \cdf\ of a fixed number $m$ of numbers $x_1^i \le  \dots \le x_m^i$.  \cite{Hersbach_2000} implicitly assumes that $y_i \notin \{ x_1^i, \dots, x_m^i \}$ for $i = 1, \dots, n$.  If this condition is not satisfied, the extension of the original Hersbach decomposition at \eqref{eq:Hersbach}, which is implemented in the \textsf{R} function \texttt{crpsDecomposition} from the verification package (\url{https://rdrr.io/cran/verification/}), is problematic.  Our suggested modified Hersbach decomposition at \eqref{eq:Hersbach_mod} resolves this issue, as illustrated graphically in Figure \ref{fig:HB}.

\begin{figure}[t]
\caption{Adaptation of Figure 2 from \cite{Hersbach_2000} with the empirical \cdf\ of $x_1 < \dots < x_5$ and outcome $y$.  \citet{Hersbach_2000} assumes that $y \notin \{ x_1, \dots, x_5 \}$ and divides the quantity $x_{\ell+1} - x_\ell$ for $\ell = 1, \dots, m-1$ into $\alpha_\ell$ and $\beta_\ell$, as illustrated in the left panel.  When $y = x_3$ the 
original decomposition sets $\alpha_2 = \beta_3 = 0$.  However, according to display (26) in \cite{Hersbach_2000}, if $y \uparrow x_3$ then $\alpha_2 \to x_3 - x_2$, $\beta_2 \to 0$, and $\beta_3 = x_4 - x_3$, and if $y \downarrow x_3$ then $\alpha_2 = x_3 - x_2$, $\alpha_3 \to 0$, and $\beta_3 \to x_4 - x_3$.  This suggests that $\alpha_2 = x_3 - x_2$, $\alpha_3 = 0$, $\beta_2 = 0$, and $\beta_3 = x_4 - x_3$ when $y = x_3$, as indicated in the right panel and in accordance with the quantity $\bar{f}_3$ in the modified Hersbach decomposition.   
\label{fig:HB}}
\bigskip
\includegraphics[width = \textwidth]{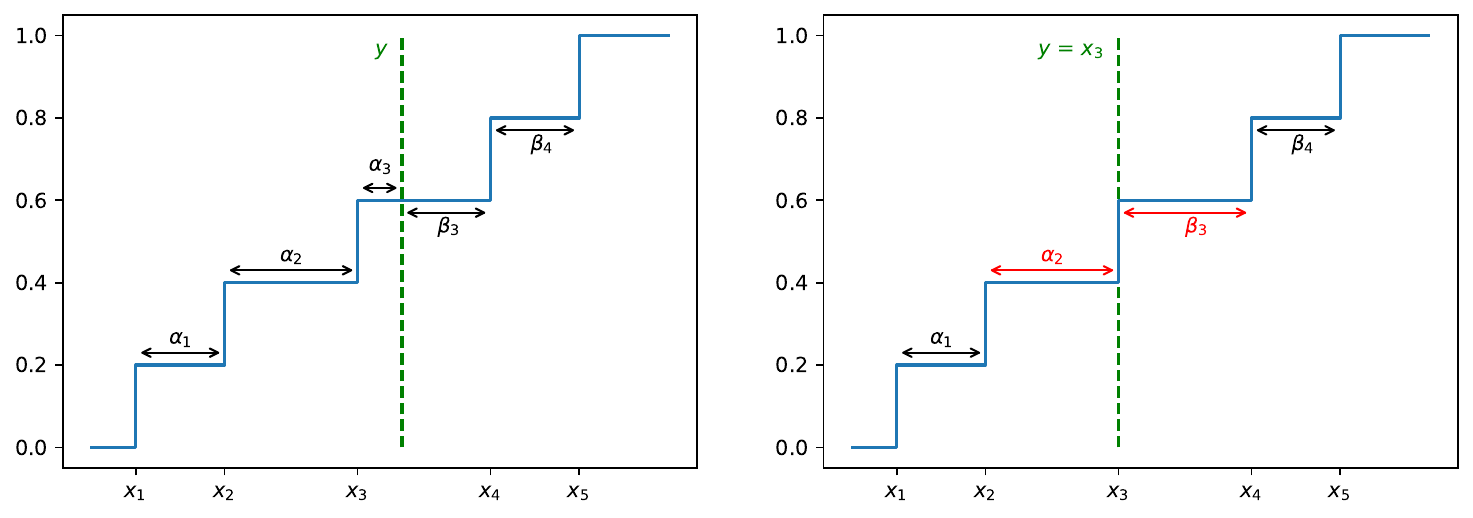}
\end{figure}

We proceed to a comparison of the orginal with the modified Hersbach decomposition.  For $i = 1, \dots, n$, \cite{Hersbach_2000} defines the quantities
\begin{align*}
\alpha_\ell^i &
= (x^i_{\ell+1} - x^i_\ell) \, \one \{ y_i > x_{\ell+1} \} + (y_i-x_\ell) \, \one \{ x^i_\ell < y_i < x^i_{\ell+1} \}, \\
\beta^i_\ell & = (x^i_{\ell+1} - x^i_\ell) \, \one \{ y_i < x^i_{\ell} \} + (x^i_{\ell+1} - y_i) \, \one \{ x^i_\ell < y_i < x^i_{\ell+1} \},
\end{align*}
for $\ell = 1, \dots, m - 1$, and 
\begin{align*}
\alpha_m^i = (y_i-x_{m}^i) \, \one \{ y_i > x_m^i \} \eqand \beta_0^i = (x_1^i - y_i) \, \one \{ y_i < x_1^i \}. 
\end{align*}
For $\ell = 1, \dots, m - 1$, let $\bar{\alpha}_\ell = (1/n) \sum_{i=1}^n \alpha_\ell^i$, $\bar{\beta}_\ell = (1/n) \sum_{i=1}^n \beta_\ell^i$, $\bar{g}_\ell = \bar{\alpha}_\ell + \bar{\beta}_\ell$, and $\bar{o}_\ell = \bar{\beta}_\ell / \bar{g}_\ell$.  To complete the specification, let $\bar{o}_0 = (1/n) \sum_{i=1}^n \one \{ y_i < x_1^i \}$, $\bar{g}_0 = \one \{ \bar{o}_0 \neq 0 \} \bar{\beta}_0 /\bar{o}_0$, $\bar{o}_m = (1/n) \sum_{i=1}^n \one \{x_m^i < y_i\}$, and $\bar{g}_m = \one \{ \bar{o}_m \neq 0 \} \bar{\alpha}_m / (1 - \bar{o}_m)$, where $\bar{\beta}_0 = (1/n) \sum_{i=1}^n \beta_0^i$ and ${\alpha}_m = (1/n) \sum_{i=1}^n \alpha_m^i$. 

As before, let $p_\ell = \ell/m$ for $\ell = 0, \dots, m$.  \cite{Hersbach_2000} defines the miscalibration component as
\begin{align*}
\MCBbar_\HBO = \sum_{\ell=0}^m \bar{g}_\ell \left( p_\ell - \bar{o}_\ell \right)^2.
\end{align*}
In contrast, we let
\begin{align*}
\MCBbar_{\HB} = \sum_{\ell=1}^{m-1} \bar{g}_\ell \left(p_\ell - \bar{f}_\ell \right)^2,
\end{align*}
where $\bar{f}_\ell = (1/n) \sum_{i=1}^n \bar{f}^i_\ell $ with $\bar{f}^i_\ell = (1/{\bar{g}_\ell}) \, \one \{ y_i < x_{\ell+1}^i \, \} (\alpha_\ell^i + \beta_\ell^i)$ for $i = 1, \dots, n$ and $\ell = 1, \dots, m - 1$.  In other words, \cite{Hersbach_2000} includes terms for $l = 0$ and $l = m$ in the miscalibration component and compares the nominal level $p_\ell$ with the quantity $\bar{o}_\ell$, which approximates the frequency of an outcome below the midpoint of bin $l$.  In contrast, we omit the outer terms and compare $p_\ell$ with $\bar{f}_\ell$, which approximates the frequency of an outcome below the right endpoint of bin $l$. 

\begin{proof}[Proof of Proposition \ref{prop:HB_decomp}]
By definition, both decompositions are exact and the uncertainty component $\UNCbar_0$ depends only on the outcomes, i.e., ($E_1$) and ($E_5$) are satisfied.  Example \ref{app:3} shows that ($E_3$) is satisfied, and that ($E_2$) fails to hold for the modified Hersbach decomposition.  Consider the sample $(F,y_1), (F,y_2)$ with $F = (\delta_{-1/2} + \delta_{1/2})/2$, $y_1 = - 1/6$ and $y_2 = 1/6$.  Then $\Sbar = 1/4$ and $\UNCbar_0 = 1/12$.  Moreover, $\bar{g}_1 = 1$, $\bar{g}_0 = \bar{g}_2 = 0$, $\bar{o}_1 = 1/2$, $\bar{o}_0 = \bar{o}_2 = 0$, and $\bar{f}_1 = 1$.  Thus $\MCBbar_\HBO = 0$, $\MCBbar_{\HB} = 1/4$, $\DSCbar_\HBO = - 1/6$, and $\DSCbar_\HB = 1/12$.  This demonstrates that the original Hersbach decomposition does not satisfy ($E_2$) and ($E_4$) and that ($E_4$) fails to hold for the modified decomposition as well.  Numerical examples in \cite{Hersbach_2000} show that ($E_3$) is satisfied for the original Hersbach decomposition. 
\end{proof}

\section{Relaxations of the stochastic order}  \label{app:ISO}

Consider any partial order $\leq'$ on $\PR$, which is weaker than the stochastic order in the sense that $G \st H$ implies $G \leq' H$ for $G, H \in \PR$.  Possible choices include the almost-first-stochastic-dominance order proposed by \cite{Leshno_Levy_2002} or stochastic dominance of order $(1 + \gamma)$ as proposed by \cite{Muller}.  If there are only few forecasts in a sample $(F_1,y_1), \dots, (F_n,y_n) \in \PR \times \R$ that are comparable with respect to $\st$, one could think of applying IDR with respect to $\leq'$ instead of $\st$ in order to obtain more comparable forecasts.  In this appendix, we explain why such an approach is bound to fail.

Let $Y$ be a random variable and $F$ be a random forecast defined on the same probability space.  Recall from Section \ref{sec:isotonic} that ICL forms the population version of IDR \citep[Proposition 4.1]{ICL}.  In analogy to Definition 3.1 of \cite{ICL}, one could define the $\s$-lattice generated by $F$ with respect to the weaker order $\leq'$ as $\LL'(F) = \{F^{-1}(B) \mid B \in \BPR \cap \mathcal{U}' \}$, where $\mathcal{U}'$ denotes the family of all upper sets in $\PR$ with respect to $\leq'$.  However, if the space $\PR$ equipped with the partial order $\leq'$ and the topology of weak convergence satisfies Assumption C.1 of \cite{ICL}, the corresponding notion of isotonic calibration, namely, $P_{Y \mid \LL'(F)} = F$, fails to be intuitive for two reasons.  First, auto-calibration does not imply the respective notion of calibration.  Second, $G \leq' H$ already implies $G \st H$ for all $G$ and $H$ in the support of $F$ by Theorem 3.3 of \cite{ICL}. Clearly,  this implication may only hold if $\leq' $ equals $\st$ on the support of $F$, which is violated for any $\leq'$ that is strictly weaker than $\st$, contrary to the scope of a relaxation.  Moreover, there is no theoretical guarantee that the corresponding miscalibration term $\MCB_{\ISO'} = \E \, \crps(F,Y) - \E \, \crps(P_{Y\mid \LL'(F)},Y)$ is nonnegative. 

\section{Proofs for Section \ref{sec:population}}  \label{app:proofs}

\subsection{Proofs for Section \ref{sec:decomposition} and extensions}   \label{app:proofs1}

\begin{proof}[Proof of Proposition \ref{prop:HB_general}]
Following Appendix A in \cite{Candille_2005}, we apply the change of variable $z \mapsto p = F(z)$ to demonstrate that $\E \, \crps(F,Y)$ can be represented as
\begin{align*}
\E \int_S (F(z) - \one \{ F(Y) \leq F(z) \})^2 \diff z + \E \int_S (2 F(z) - 1)(\one \{ F(Y) \leq F(z) \} - \one \{ Y \leq z \} ) \diff z,
\end{align*}
where $S = \{ z \in \R \mid (F(z) - \one \{ Y \leq z \})^2 > 0 \}$.  The indicator is essential, since if $F(Y) = 0$ then $\one \{ F(Y) \leq F(z) \} = 1$ and the integrals may not exist.  We decompose $S$ into the disjoint sets $S_1 = S \cap \{z \in \R \mid F(z) > 0 \}$ and $S_2 = S \cap \{z \in \R \mid F(z) = 0 \} = \{ z \in \R \mid Y \leq z, F(z) = 0 \}$, and use the equivalence $\one \{ F(Y) \leq F(z) \} - \one \{Y \leq z \} = \one \{ Y > z, F(Y) = F(z) \}$ to show that
\begin{align*}
\E \, \crps(F,Y)  
& =  \E \int_{S_1} (F(z) - \one \{ F(Y) \leq F(z) \})^2 \diff z + \E \int_{S_2} \one \{ Y \leq z, F(z) = 0 \} \diff z \\
& \quad + \E \int_S (2F(Y) - 1) \, \one \{ Y > z, F(Y) = F(z) \} \diff z \\
& =  \E \int_{S_1} (F(z) - \one \{ F(Y) \leq F(z) \})^2 \diff z + \MS,
\end{align*}
where $\MS$ is given at \eqref{eq:MS_Hersbach}.

We have $\tau(A) \leq \E \int_A 1 \diff \nu_F(u) = \E (\nu_F(A)) = \mu (A)$ for $A \in \BU$, i.e., $\tau$ is absolutely continuous with respect to $\mu$.  Hence $\tau$ has a density $f$ with respect to $\mu$, and we find that
\begin{align*}
\E \, \crps (F,Y) 
& = \E \int_S (F(z) - \one \{ F(Y) \leq F(z) \})^2 \diff z + \MS \nonumber \\
& = \E \int_0^1 (p - \one \{ F(Y) \leq p \})^2 \diff \nu_F(p) + \MS \nonumber \\
& = \int_0^1 p^2 \diff \mu(p) - \int_0^1 (2p-1) \, \diff \tau(p) + \MS \nonumber \\
& = \int_0^1 p^2 \diff \mu(p) - \int_0^1 (2p-1) \, f(p) \diff \mu(p) + \MS\nonumber \\
& = \int_0^1 (p-f(p))^2 \diff \mu(p) + \int_0^1 f(p) \, (1-f(p)) \diff \mu(p) + \MS,
\end{align*}
which yields the claimed decomposition.
\end{proof}

In the following corollary to Proposition \ref{prop:HB_general}, which is a more general result than Corollary \ref{cor:HB_2}, we consider forecast--observation pairs $(F_1,y_1), \dots, (F_n,y_n)$, where for each $i = 1, \dots, n$, $F_i$ is a distribution with a finite number $m_i$ of support points $x^i_1 < \dots < x^i_{m_i}$ and (cumulative) probability values $p^i_1 < \dots < p^i_{m_i}$, so that $F_i(x^i_\ell) = p^i_\ell$ for $\ell = 1, \dots, m_i$.  Let $0 < \hat{p}_1 < \ldots < \hat{p}_M = 1$ be the unique probability values from the set $\{ p^i_\ell \mid i = 1, \dots, n; \, \ell = 1, \dots, m_i \}$.  For $i = 1, \dots, n$ and $j = 1, \dots, M - 1$, we define
\begin{align*}
\sigma^i_j = \begin{cases}
\ell & \textrm{if} \quad \hat{p}_j = p^i_\ell, \\
0    & \textrm{if} \quad \hat{p}_j \notin \{ p_1^i, \dots, p_{m_i}^i \}.
\end{cases}
\end{align*}

\begin{cor}  \label{cor:HB_for_step_functions}
Assume that\/ $\prob$ is the empirical measure of forecast--observation pairs\/ $(F_1,y_1), \dots, (F_n,y_n)$, where each\/ $F_i$ is a distribution with finite support as described above.  Then 
\begin{equation}  \label{eq:MCB_HB_finite_support}
\MCB_\HB = \sum_{j=1}^{M-1} \hat{g}_j (\hat{p}_j - \hat{f}_j)^2
\end{equation}
where, for $j = 1, \dots, M-1$,
\begin{align} 
\hat{g}_j & = \frac{1}{n} \sum_{i=1}^n \one \{ \sigma_j^i \neq 0 \} \left( x_{\sigma_j^i+1}^i - x^i_{\sigma_j^i} \right) \! , \label{eq:g_j} \\
\hat{f}_j & = \frac{1}{n \hat{g}_j} \sum_{i=1}^n \one \{ F_i(y_i) \leq \hat{p}_j \} \one \{ \sigma_j^i \neq 0 \} \left( x_{\sigma_j^i+1}^i - x^i_{\sigma_j^i} \right) \! .  \label{eq:f_j}
\end{align}
\end{cor}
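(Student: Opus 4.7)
The strategy is to specialize the population formula
$\MCB_\HB = \int_0^1 (p - f(p))^2 \diff \mu(p)$
from Proposition \ref{prop:HB_general} to the empirical law $\prob = \frac{1}{n} \sum_{i=1}^n \delta_{(F_i, y_i)}$, exploiting that every $F_i$ is a step function. The first step is to identify the measures $\mu$ and $\tau$ on the open interval $(0,1)$. Since $F_i$ is constant equal to $p^i_\ell$ on the bin $[x^i_\ell, x^i_{\ell+1})$ for $\ell = 1, \dots, m_i - 1$, the restriction of the pushforward $\nu_{F_i}$ to $(0,1)$ is purely atomic with mass $x^i_{\ell+1} - x^i_\ell$ at $p^i_\ell$. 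Averaging over $i$ and regrouping by the pooled unique values $\hat{p}_1, \dots, \hat{p}_{M-1}$ via the index $\sigma^i_j$ yields
\[ \mu \big|_{(0,1)} = \sum_{j=1}^{M-1} \hat{g}_j \, \delta_{\hat{p}_j}, \]
with $\hat{g}_j$ exactly as in \eqref{eq:g_j}.

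The same atomic structure, combined with $\tau(\{\hat{p}_j\}) = \E(\one\{F(Y) \leq \hat{p}_j\} \, \nu_F(\{\hat{p}_j\}))$, immediately gives $\tau(\{\hat{p}_j\}) = \hat{g}_j \hat{f}_j$ with $\hat{f}_j$ as in \eqref{eq:f_j}. Hence the Radon--Nikodym derivative satisfies $f(\hat{p}_j) = \hat{f}_j$ on each atom of $\mu$ in $(0,1)$ (note that $\hat{g}_j > 0$ for every $j$ because support points are strictly ordered), and off-atom values of $f$ are irrelevant for the integral. Summing the atomic contributions over $j = 1, \dots, M - 1$ delivers the desired expression $\sum_{j=1}^{M-1} \hat{g}_j (\hat{p}_j - \hat{f}_j)^2$ in \eqref{eq:MCB_HB_finite_support}.

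The main obstacle is justifying that the boundary atoms contribute nothing, since $\mu$ can carry infinite mass at $\{0\}$ and $\{1\}$: for instance, $\nu_{F_i}(\{1\}) = \lambda([x^i_{m_i}, \infty)) = \infty$. At $p = 1$ this resolves cleanly: $\one\{F(Y) \leq 1\} = 1$ almost surely forces $\tau(\{1\}) = \mu(\{1\})$, so $f(1) = 1$ and the integrand $(1 - f(1))^2$ vanishes, contributing $0 \cdot \infty = 0$ by convention. The point $p = 0$ is more delicate, because if $y_i < x^i_1$ for some $i$ then both $\tau(\{0\})$ and $\mu(\{0\})$ are infinite and the ratio is indeterminate. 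Here I would revisit the derivation of Proposition \ref{prop:HB_general}: the bulk integral that produces $\MCB_\HB$ is the pushforward of Lebesgue measure restricted to $S_1 \subseteq \{F(z) > 0\}$, so the atom at $p = 0$ lies outside its support, and the mass carried by outcomes below the left endpoint of $\supp(F)$ is precisely what the $\MS$ term absorbs. With the boundary atoms thus neutralised, only the sum over $j = 1, \dots, M-1$ survives, completing the identification.
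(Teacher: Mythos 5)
Your proposal is correct and follows essentially the same route as the paper: identify $\nu_{F_i}$ restricted to $(0,1)$ as the purely atomic measure with mass $x^i_{\sigma^i_j+1}-x^i_{\sigma^i_j}$ at $\hat p_j$, average over $i$ to get $\mu=\sum_{j=1}^{M-1}\hat g_j\,\delta_{\hat p_j}$, compute $\tau(\{\hat p_j\})=\hat f_j\hat g_j$, and read off the Radon--Nikodym derivative at the atoms. Your extra discussion of the boundary atoms is not needed in the paper's framework, since $\mu$ and $\tau$ are defined only on $\BU=\mathcal{B}(0,1)$, the Borel sets of the \emph{open} unit interval, so the points $0$ and $1$ (where $\nu_F$ would indeed carry infinite mass) never enter the integral, and $\hat p_M=1$ is correspondingly excluded from the sum.
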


\begin{proof}
For $i = 1, \dots, n$, let $\nu_i$ be the image measure of $F_i$ with respect to the Lebesgue measure, i.e., 
\begin{align*}
\nu_i = \sum_{j=1}^{M-1} \delta_{\hat{p}_j} \one \{ \sigma_j^i \neq 0 \} \left( x_{\sigma_j^i+1}^i - x^i_{\sigma_j^i} \right) \! ,
\end{align*}
and thus, $\mu = \sum_{j=1}^{M-1} \delta_{\hat{p}_j} \hat{g}_j$, where $\hat{g}_j$ is given at \eqref{eq:g_j}.  Therefore, for any $A\in \mathcal{B}(0,1)$, we have
\begin{align*}
\tau(A) 
& = \E \int_A \one \{ F(Y) \leq u \} \diff \nu_F (u) \\
& = \frac{1}{n} \sum_{i=1}^n \sum_{j=1}^{M-1} \delta_{\hat{p}_j}(A) \one \{ F_i(y_i) \leq \hat{p}_j \} \one \{ \sigma_j^i \neq 0 \} \left( x_{\sigma_j^i+1}^i - x^i_{\sigma_j^i} \right) \\
& = \sum_{j=1}^{M-1} \delta_{\hat{p}_j}(A) \frac{1}{n} \sum_{i=1}^n \one \{ F_i(y_i) \leq \hat{p}_j \} \one \{ \sigma_j^i \neq 0 \} \left( x_{\sigma_j^i+1}^i - x^i_{\sigma_j^i} \right) = \sum_{j=1}^{M-1} \delta_{\hat{p}_j}(A) \hat{f}_j \hat{g}_j.
\end{align*}
We conclude that the Radon--Nikodym derivative of $\tau$ with respect to $\mu$ is $f(\hat{p}_j) = \hat{f}_j$ for $j = 1, \dots, M-1$, where $\hat{f}_j$ is given at \eqref{eq:f_j}.
\end{proof}

To specialize Corollary \ref{cor:HB_for_step_functions} to the ensemble setting of Corollary \ref{cor:HB_2}, let $m_i = m$ and $p_\ell^i = \ell/m$ for $i = 1, \dots, n$ and $\ell = 1, \dots, m - 1$.  Then $M = m$, $\hat{p}_j = j/m$, and the quantities in \eqref{eq:g_ell} and \eqref{eq:g_j} coincide, as do the first quantity in \eqref{eq:f_ell} and that in \eqref{eq:f_j}.  

\begin{proof}[Proof of Corollary \ref{cor:HB_1}]
Since $F^{-1}$ is almost surely absolutely continuous, for any $0 < a < b < 1$, we have almost surely
\begin{align*}
\nu_F([a,b)) = \lambda (F^{-1}([a,b))) = F^{-1}(b) - F^{-1}(a) = \int_{F^{-1}(a)}^{F^{-1}(b)} \diff p = \int_a^b \frac{\diff}{\diff p} F^{-1}(p) \diff p.
\end{align*}
That is, the random measure $\nu_F$ almost surely possesses a density $(\diff/\diff p) \, F^{-1}(p)$ with respect to the Lebesgue measure, and it follows that the measure $\mu$ has density $\gamma$ at \eqref{eq:gamma} with respect to the Lebesgue measure.  Since for $A \in \BU$, 
\begin{align*}
\tau(A) & = \E \int_A \one \{F(Y) \leq p \} \diff \nu_F(p) 
          = \int_A \E \left( \one \{ F(Y) \leq p \} \, \frac{\diff}{\diff p} F^{-1}(p) \right) \diff p,
\end{align*}
the density $f$ of the measure $\tau$ with respect to $\mu$ is given as stated at \eqref{eq:f}. 
\end{proof}

The following example relates to the case study on probabilistic quantitative precipitation forecasts in Section \ref{sec:weather}, where it applies to the BMA, EMOS, and HCLR forecasts, respectively.

\begin{example}  \label{ex:prec_forecasts} 
Let $(F_1,y_1), \dots, (F_n,y_n)$ be forecast--observation pairs for a nonnegative (possibly, censored) quantity, so that $y_i \geq 0$ for $i = 1, \dots, n$.  Suppose that, for $i=1, \dots, n$, 
\begin{align*}
F_i(x) = \begin{cases}
0                               & \textrm{for} \quad x < 0, \\
p_0^i + \int_0^x f_i(t) \diff t & \textrm{for} \quad x \ge 0,
\end{cases}
\end{align*}
for some $0 \leq p_0^i < 1$ and a strictly positive continuous function $f_i: (0,\infty) \to \R_+$ with $\int_0^\infty f_i(t) \diff t = 1 - p_0^i$.  Then $F_i^{-1}$ is absolutely continuous and has derivative $f_i(F_i^{-1}(p))^{-1}$ for $p \in (p_0^i,1)$ and zero otherwise.  Hence, $\MCBbar_\HB = \int_0^1 (p-f(p))^2 \, \gamma (p) \diff p$ by Corollary \ref{cor:HB_1}, where
\begin{align*}
\gamma (p) = \frac{1}{n} \sum_{i=1}^n \frac{1}{f_i(F_i^{-1}(p))} \one_{(p_0^i,1)}(p), \;
f(p) = \frac{1}{n\gamma (p)} \sum_{i=1}^n \one \{ F_i(y_i) \leq p \} \frac{1}{f_i(F_i^{-1}(p))} \one_{(p_0^i,1)}(p)
\end{align*}
for $p \in (0,1)$ with $\gamma(p) > 0$, and $f(p) = 0$ otherwise.
\end{example}

\subsection{Proofs for Section \ref{sec:properties}}  \label{app:proofs2}

\begin{proof}[Proof of Theorem \ref{thm:calibration}]  
Concerning part (a), we consider the Brier score based decomposition of $\Sbar$ and apply Fubini's theorem to obtain  
\begin{align}
\MCB_\CT & = & \int \left( \E \big( F(z) - \one \{ Y \leq z \} \big)^2 - \E \big( \prob( Y \leq z \mid F) - \one \{ Y \leq z \} \big)^2 \right) \diff z,  \label{eq1_proof_thm} \\
\DSC_\CT & = & \int \left( \E \big(F_\mg(z) - \one \{Y \leq z \} \big)^2 - \E \big( \prob( Y \leq z \mid F) - \one \{ Y \leq z \} \big)^2 \right) \diff z.  \label{eq2_proof_thm}
\end{align}
Recall that for any $z \in \R$, the expectation $\E \, ( \one \{ Y \leq z \} - p)^2$ is minimized by $\prob( Y \leq z \mid F)$ over all $\s(F)$-measurable random variables $p$, and this minimizer is $\prob$-almost surely unique.  Since $F(z)$ and the constant $F_\mg(z)$ are $\s(F)$-measurable, it follows from \eqref{eq1_proof_thm} and \eqref{eq2_proof_thm} that $\MCB_\CT \geq 0$ and $\DSC_\CT \geq 0$, respectively.  Equality in \eqref{eq1_proof_thm} holds if, and only if, $F$ is auto-calibrated.  Equality in \eqref{eq2_proof_thm} holds if, and only if, $P_{Y \mid F} = F_\mg$, i.e., $\prob( Y \leq z \mid F) = F_\mg(z)$ for all $z \in \R$.

For part (b), in analogy to the above, we find that 
\begin{align}
\MCB_\ISO & = \int \! \left( \E \big( \bar{F}(z) - \one \{ Y > z \} \big)^2 - \E \, \big( \prob( Y > z \mid \LL(F)) - \one \{ Y > z \} \big)^2 \right) \! \diff z, \label{eq:iso1} \\
\DSC_\ISO & = \int \! \left( \E \big( \bar{F}_\mg(z) - \one \{ Y > z \} \big)^2 - \E \big( \prob( Y > z \mid \LL(F)) - \one \{ Y > z \} \big)^2 \right) \! \diff z, \label{eq:iso2}
\end{align}
where $\bar{F}(z) = 1 - F(z)$, and $\bar{F}_\mg(z) = 1-F_\mg(z)$.
Recall that for any $z\in \R$, the expectation $\E( \one \{ Y > z \} - p)^2$ is minimized by $\prob( Y > z \mid \LL(F))$ over all $\LL(F)$-measurable random variables $p$, and the minimizer is $\prob$-almost surely unique.  Since $\bar{F}(z)$ and the constant $\bar{F}_\mg(z)$ are $\LL(F)$-measurable, it follows directly that $\MCB_\ISO \geq 0$ and $\DSC_\ISO \geq 0$. Equality in \eqref{eq:iso1} holds if, and only if, $F$ is isotonically calibrated, and equality in \eqref{eq:iso2} holds if, and only if,  $P_{Y \mid \LL(F)} = F_\mg$.  

To demonstrate part (c), it suffices to observe from \citet[Lemma 5.4]{ICL} that threshold calibration is equivalent to $\prob(Y \leq z \mid \LL(F(z))) = F(z)$ for $z\in \R$.  The rest of the argument is analogous to the above. 

Finally, for part (d), recall that for $\alpha \in (0,1)$, a random variable is a conditional quantile $q_\alpha(Y \mid \LL(F^{-1}(\alpha)))$ if, and only if, it minimizes $\E \, \QS_\alpha(X, Y)$ over all $\LL(F^{-1}(\alpha))-$measurable random variables $X$, see \cite{ICL}.  It follows that $\MCB_\QS \geq 0$ and $\DSC_\QS \geq 0$.  Assume that $F$ is quantile calibrated; then $q_\alpha \big( Y \mid \LL \big( F^{-1}(\alpha) \big) \big) = F^{-1}(\alpha)$ for $\alpha \in (0,1)$ and hence $\MCB_\QS = 0$.  Conversely, if $\MCB_\QS = 0$ then Fubini's theorem implies
\begin{align*}
\int_0^1 \left( \E \, \qs \big( F^{-1}(\alpha), Y \big) - \E \, \qs \big(q_\alpha \big( Y \mid \LL(F^{-1}(\alpha)) \big), Y \big) \right) \! \diff \alpha = 0.
\end{align*}
Since the integrand is non-negative, it follows that $q_\alpha \big( Y \mid \LL(F^{-1}(\alpha)) \big) = F^{-1}(\alpha)$ for almost all $\alpha \in (0,1)$ and, hence, there exists a Lebesgue null set $N \subseteq (0,1)$ with $q_\alpha(Y \mid \LL(F^{-1}(\alpha))) = F^{-1}(\alpha)$ for all $\alpha \in (0,1) \setminus N$.  Assume for a contradiction that $N \neq \emptyset$ and consider $\alpha_0 \in N$.  Choose $(\alpha_n)_{n \in \N} \subseteq (0,1) \setminus N$ with $\alpha_n \uparrow \alpha_0$ as $n \to \infty$.  Since $F^{-1}(\alpha_n) \to F^{-1}(\alpha_0)$ almost surely and $\textrm{qs}_{\alpha_n}(\cdot,y) \to \textrm{qs}_{\alpha_0}(\cdot, y)$ pointwise for any $y\in \R$, it follows that $\textrm{qs}_{\alpha_n}(F^{-1}(\alpha_n), Y) \to \textrm{qs}_{\alpha_0}(F^{-1}(\alpha_0), Y)$ almost surely, and hence, $\E \, \textrm{qs}_{\alpha_n}(F^{-1}(\alpha_n), Y) \to \E \, \textrm{qs}_{\alpha_0}(F^{-1}(\alpha_0), Y)$ by dominated convergence.   Analogously, $\E \, \textrm{qs}_{\alpha_n}(X,Y) \to \E \, \QS_{\alpha_0}(X,Y)$ for $X = q_{\alpha_0}(Y \mid \LL(F^{-1}(\alpha_0)))$ and, hence, $\E \, \textrm{qs}_{\alpha_0}(X,Y) \ge  \E \, \textrm{qs}_{\alpha_0}(F^{-1}(\alpha_0), Y)$ since $\E \, \textrm{qs}_{\alpha_n}(X, Y) \ge \E \, \textrm{qs}_{\alpha_n}(F^{-1}(\alpha_n), Y)$ for all $n\in \N$.  This shows that $q_\alpha \big( Y \mid \LL(F^{-1}(\alpha)) \big)$ is an $\alpha$-quantile of $F$ for $\alpha \in (0,1)$.  By construction in Section 6 of \citet{ICL}, $q_\alpha \big( Y \mid \LL(F^{-1}(\alpha)) \big)$ is the smallest possible minimizer of $\E \, \qs(X, Y)$, so it coincides with $F^{-1}(\alpha)$ for all $\alpha \in (0,1)$ and, hence, $N = \emptyset$.  Clearly, $\DSC_\QS = 0$ if $q_\alpha \big( Y \mid \LL(F^{-1}(\alpha)) \big) = q_\alpha(Y)$ for $\alpha\in (0,1)$.  Conversely, if $\DSC_\QS = 0$ then $q_\alpha \big( Y \mid \LL(F^{-1}(\alpha)) \big) = q_\alpha(Y)$ for $\alpha\in (0,1)$.
\end{proof}

\begin{proof}[Proof of Corollary \ref{cor:order_of_MCB}]  
For any $z \in \R$, $P_{Y \mid F}(\cdot, (z,\infty))$ minimizes $\E(p - \one \{ Y > z \})^2$ over all $\s(F)$-measurable random variables $p$, and hence, also over all $\LL(F)$-measurable random variables since any $\LL(F)$-measurable random variable is also $\s(F)$-measurable, see \citet[Lemma 3.1]{ICL}.  Thus, we apply Fubini to derive 
\begin{align*}
\E \, \crps(P_{Y \mid F},Y) 
& = \int \E \, (P_{Y \mid F}(\cdot, (z,\infty)) - \one \{ Y > z \} )^2 \diff z \\
& \leq \int \E \, (P_{Y \mid \LL(F)}(\cdot, (z,\infty)) - \one \{ Y > z \} )^2 \diff z = \E \, \crps(P_{Y\mid \LL(F)},Y),
\end{align*}
which implies $\MCB_\CT \geq \MCB_\ISO$.  Moreover, for any $z \in \R$ we know that $\LL(F(z)) \subseteq \overline{\LL(F)}$, where for any $\s$-lattice $\A\subseteq \F$, $\bar{\A}$ denotes the $\s$-lattice which consists of all complements of elements in $\A$.  Hence, we may argue similarly that 
\begin{align*}
\E \, \crps(P_{Y \mid \LL(F)},Y) 
& = \int \E (1-P_{Y \mid \LL(F)}(\cdot, (z,\infty)) - \one \{ Y \leq z\})^2 \diff z \\ 
& \leq \int \E (\prob(Y \leq z \mid \LL(F(z))) - \one \{ Y \leq z \})^2 \diff z,
\end{align*}
which implies $\MCB_\ISO \geq \MCB_\BS$.  Finally for any $\alpha \in (0,1)$, we have that $P_{Y \mid \LL(F)}^{-1}(\alpha)$ minimizes $\E \, \qs(X,Y)$ over all $\LL(F)$-measurable random variables $X$.  We use that $\LL(F^{-1}(\alpha)) \subseteq \LL(F)$, to derive that
\begin{align*}
\E \, \crps(P_{Y \mid \LL(F)},Y) = 
\int_0^1 \E \, \qs (P_{Y\mid \LL(F)}^{-1}(\alpha),Y) \diff \alpha \leq \int_0^1 \E \, \qs (q_\alpha (Y \mid \LL(F^{-1}(\alpha)),Y) \diff \alpha
\end{align*}
and hence $\MCB_\ISO \geq \MCB_\QS$.
\end{proof}

\begin{proof}[Proof of Proposition \ref{prop:properties_Hersbach_decomp}]  
The claim in part (a) follows from the definition of MS at \eqref{eq:MS_Hersbach}.  For part (b), suppose that $F$ is auto-calibrated.  Then $Y \in \supp(F)$ almost surely and hence $\MS = 0$ by part (a).  The tower property implies for any $A \in \BU$ that 
\begin{align*}
\tau(A) & = \E \left( \E \left( \int_A \one \{ F(Y) \leq p \} \diff \nu_F(p) \MMid F \right) \right) \\
        & = \E \left( \int_A \E \left( \one \{ F(Y) \leq p \} \mid F \right) \diff \nu_F(p) \right) \\
        & = \E \left( \int_A F(F^{-1}(p)) \diff \nu_F(p) \right),
\end{align*}
where the last equality follows since if $Y \in \supp(F)$, then $F(Y) \leq p$ if and only if $Y \leq F^{-1}(p)$ and $\prob(Y \leq F^{-1}(p) \mid F) = F(F^{-1}(p))$ by auto-calibration.  By the properties of generalized inverses \citep{Generalized_inverses}, we have $F(F^{-1}(p)) \geq p$ for all $p\in (0,1)$.  However, if $F(F^{-1}(p)) > p$ for all $p\in B$ in some $B \in \BU$, then $F^{-1}(B) = \{ x \in \R \mid F(x) \in B \} = \emptyset$ and hence $\nu_F(B) = 0$ almost surely.  That is, $\nu_F( \{ p \in (0,1) : F(F^{-1}(p)) > p \} = 0$ almost surely and thus 
\begin{align*}
\tau(A) = \E \left( \int_A F(F^{-1}(p)) \diff \nu_F(p) \right) = \E \left( \int_A p \diff \nu_F(p) \right) = \int_A p \diff \mu(p).
\end{align*}
We conclude that $f(p) = p$ $\mu$-almost surely and hence $\MCB_\HB = 0$.

The condition in part (c) is equivalent to assuming that $\frac{\diff}{\diff p} F^{-1}$ is almost surely constant for all $p \in (0,1)$.  Since $F$ is probabilistically calibrated, we have for any $p\in (0,1)$,
\begin{align*}
f(p) 
= \frac{1}{\gamma(p)} \E \left( \one \{ F(Y) \leq p \} \frac{\diff}{\diff p} F^{-1}(p) \right) 
= \frac{\gamma (p)}{\gamma(p)} \E \left( \one \{ F(Y) \leq p \} \right) = \prob (F(Y) \leq p) = p
\end{align*}
and hence $\MCB_\HB = 0$. 
\end{proof}

\section{Analytic examples at the population level}  \label{app:examples}

In this section we compare the population level decompositions from Section \ref{sec:population} in a number of examples in the prediction space setting.  Table \ref{tab:examples} collects and summarizes the analytic forms of the decomposition components in these examples.  Assumption \ref{assump:population} is satisfied throughout.

\renewcommand{\arraystretch}{1.3}
\begin{table}[t] 
\centering
\caption{Analytic form of the various different types of decomposition in population level examples E.1, \dots, E.5.  For details and supporting calculations see the text.  \label{tab:examples}}
\medskip
\begin{tabular}{lcccccc} 
\toprule
Example & E.1 & E.2 & E.3 & E.4 & E.5 \\
\midrule
$\E \, \crps(F,Y)$ & $\sum_{i=1}^n w_{i} \frac{\sigma_{i}}{\sqrt{\pi}}$ & $\frac{1}{6}$ & 1 & $\frac{39}{80}$ & $\frac{5}{24} t$ \\
\midrule
$\UNC_0$ & $\frac{1}{2} \sum_{i,j = 1}^n w_i w_j \, A(\mu_i-\mu_j, \sigma_i^2 + \sigma_j^2)$ & $\frac{2}{5}$ & $\frac{3}{4}$ & $\frac{3}{2}$ &  $\frac{2}{9} t$ \\
\midrule
$\MCB_\CT$  & 0 & $\frac{1}{30}$ & 1               & $\frac{7}{400}$  & $\frac{3}{200} t$ \\

$\MCB_\ISO$ & 0 & $\frac{1}{30}$ & 1               & $\frac{9}{2800}$ & $\frac{3}{200} t_2$ \\
$\MCB_\QS$  & 0 & $\frac{1}{30}$ & $\frac{13}{16}$ & $\frac{9}{2800}$ & 0 \\
$\MCB_\BS$  & 0 & $\frac{1}{30}$ & $\frac{1}{2}$   & $\frac{9}{2800}$ & 0 \\
$\MCB_\HB$  & 0 & 0              & $\frac{1}{8}$   & $\frac{1}{1600}$ & 0 \\
\bottomrule
\end{tabular}
\end{table}

\subsection{Auto-calibrated Gaussian}  \label{app:1}

In this example, the predictive distribution $F$ is Gaussian with mean $\mu_i$ and standard deviation $\sigma_i > 0$ with probability $w_i$ for $i = 1, \dots, n$, where $w_i + \dots + w_n = 1$.  Conditionally on $F$, the outcome $Y$ has distribution $F$, so $F$ is auto-calibrated.  We conclude that 
\begin{align*}
\MCB_\CT = \MCB_\ISO = \MCB_\BS = \MCB_\QS = 0.  
\end{align*}
Since auto-calibration implies probabilistic calibration, Proposition \ref{prop:properties_Hersbach_decomp} yields $\MCB_\HB = \MS_\HB = 0$.  Finally, we apply formulas in \cite{Grimit_et_al} to obtain 
\begin{align*}
\E \, \crps(F,Y) = \sum_{i=1}^n w_i \frac{\sigma_i}{\sqrt{\pi}}
\quad \textrm{and} \quad 
\UNC_0 = \frac{1}{2} \sum_{i,j=1}^n w_i w_j A(\mu_i-\mu_j,\sigma_i^2+\sigma_j^2), 
\end{align*}
where $A(\mu,\sigma^2) = 2 \sigma \phi(\frac{\mu}{\sigma}) + \mu (2 \Phi (\frac{\mu}{\sigma}) - 1)$, with $\phi$ and $\Phi$ denoting the density and the \cdf\ of the standard normal distribution, respectively. 

\subsection{Example in \cite{Candille_2005}}  \label{app:2}

In this example of \citet[p.~2145]{Candille_2005}, the forecast $F$ is $F_1$, which is uniform on $(-1,0)$, or $F_2$, which is uniform on $(0,1)$, with equal probability.  Given $F = F_1$, the conditional \cdf\ of $Y$ is $Q_1(z) = 1-z^2$ for $z \in (-1,0)$, and given $F = F_2$, the conditional \cdf\ of $Y$ is $Q_2(z) = z^2$ for $z \in (0,1)$. 

For $i = 1, 2$, we denote by $G_i$ the isotonic conditional law of $Y$ given $F = F_i$.  Since $F_1 \st F_2$ and $Q_1 \st Q_2$ it follows that $Q_i = G_i$ for $i = 1, 2$ and the isotonicity-based decomposition coincides with the Candille--Talagrand decomposition.  For any $z \in (-1,1)$, $F_1(z)$ and $F_2(z)$ strictly order and hence the random variable $F(z)$ already reveals the value of $F$.  That is, $\sigma(F(z)) = \sigma(F)$ and hence $\prob(Y \leq z \mid F(z)) = \prob(Y \leq z \mid F) = P_{Y \mid F}(z)$.  Since this conditional probability is already an increasing function of $F(z)$, we may conclude by Proposition 3.2. in \citet{ICL} that $\prob(Y \leq z \mid \LL(F(z))) = P_{Y \mid F}(z)$ for all $z \in \R$ and hence the Brier score based decomposition correspond with the Candille--Talagrand decomposition.  Analogously the claim can be shown for the quantile score based decomposition.  Thus the isotonicity-based, Brier score based, and quantile score based decompositions coincide with the Candille--Talagrand decomposition, where $\E \, \crps(F,Y) = 1/6$, $\MCB_\CT = 1/30$, and $\UNC_0 = 2/5$.

The forecasts satisfy the conditions in part (c) of Proposition \ref{prop:properties_Hersbach_decomp}, therefore $\MCB_\HB = 0$.  Since $Y \in \supp(F)$ almost surely, we have $\MS = 0$.

\subsection{Example with two atoms}  \label{app:3}

This simple example illustrates that the Brier score and quantile score based decompositions do not coincide in general, that the corresponding calibration methods do not necessarily produce valid {\cdf}s or quantile functions, respectively, and that $\DSC_\HB$ can be negative.

Consider the distributions $F_1 = (\delta_1 + \delta_2)/2$ and $F_2 = (\delta_0 + \delta_3)/2$, where $\delta_z$ denotes the Dirac measure at $z\in \R$.  Assume that $F$ is $F_1$ and $F_2$ with equal probability and that $Y = y_1$ if $F = F_1$ and $Y = y_2$ if $F = F_2$.  Let $y_1 = 3$ and $y_2 = 0$, so the marginal law $F_\mg$ of $Y$ is $F_2$.  We readily compute $\E \, \crps(F,Y) = 1$ and $\E \, \crps(F_\mg,Y) = \UNC_0 = 3/4$.  

An application of the PAV algorithm for the mean functional on $(\one \{ y_1 \leq z \}, \one \{ y_2 \leq z\})$ with respect to the order induced by $(F_1(z), F_2(z))$ at threshold $z \in \R$ results in  
\begin{align*}  \textstyle
\cF_1(z) = \frac{1}{2} \one_{[1,3)}(z) + \one_{[3,\infty)}(z)
\eqand
\cF_2(z) = \one_{[0,1)}(z) + \frac{1}{2} \one_{[1,3)}(z) + \one_{[3,\infty)}(z),
\end{align*}
and we see that $\cF_2$ fails to be increasing.  Similarly, an application of the PAV algorithm for the $\alpha$-quantile on $(y_1,y_2)$ with respect to the order induced by $(F_1^{-1}(\alpha), F_2^{-1}(\alpha))$ at level $\alpha \in (0,1)$ results in 
\begin{align*}  \textstyle
\cFinv^{-1}_1(\alpha) = 3
\eqand
\cFinv^{-1}_2(\alpha) = 3 \one_{(\frac{1}{2},1]}(\alpha), 
\end{align*}
so $\cFinv^{-1}_2$ fails to be increasing.  Furthermore, it follows easily that $\MCB_\BS = 1/2 \neq 13/16 = \MCB_\QS$.  As the conditional law of $Y$ given $F$ is a Dirac measure, $\E \, \crps(P_{Y \mid F},Y) = 0$ and $\MCB_\CT = 1$.  Similarly, $\MCB_\ISO = 1$ since $F_1$ and $F_2$ do not order.

According to the formulas in Section \ref{sec:HB}, $\bar{g}_1 = 2$ and $\bar{f}_1 = ( \one \{F_1(y_1) \leq \frac{1}{2}\} + 3 \one \{ F_2(y_2) \leq 1/2 \}) / (2 \bar{g}_1) = 3/4$ and thus $\MCB_{\HB} = (p_1 - \bar{f}_1)^2 \, \bar{g}_1 = 1/8$, whence we conclude that $\DSC_{\HB} = \MCB_{\HB} + \UNC_0 - \E \, \crps(F,Y) = - 1/8$. 

\subsection{Example 2.4 a) in \citet{Tilmann_Johannes_Calibration}}  \label{app:4}

Let $F$ be a mixture of uniform distributions on $[0, 1]$, $[1, 2]$, and $[2, 3]$ with weights $p_1, p_2$, and $p_3$, respectively, and let $Y$ be drawn from a mixture of these distributions with weights $q_1, q_2$, and $q_3$, respectively, where the tuple $(p_1, p_2, p_3;q_1,q_2,q_3)$ attains each of the values
\begin{align*}  \textstyle
\left( \frac{1}{2},\frac{1}{4},\frac{1}{4}; \frac{5}{10},\frac{1}{10},\frac{4}{10}\right), \quad 
\left( \frac{1}{4},\frac{1}{2},\frac{1}{4}; \frac{1}{10},\frac{8}{10},\frac{1}{10}\right), \quad 
\left( \frac{1}{4},\frac{1}{4},\frac{1}{2}; \frac{4}{10},\frac{1}{10},\frac{5}{10}\right)
\end{align*}
with equal probability.  We note that $F$ is probabilistically calibrated, and still we find that $\MCB_\HB \not= 0$. 

Let $F_1, F_2$, and $F_3$ denote the distributions that $F$ attains.  For $i = 1, 2, 3$, let $Q_i$ be the conditional law of $Y$ given $F = F_i$, and let $G_i$ be the isotonic conditional law of $Y$ given $F = F_i$.  The marginal law $F_\mg$ of $Y$ is uniform on $[0,3]$ and, hence, 
\begin{align*}  \textstyle
\UNC_0 = \E \, \crps (F_\mg,Y) 
& = \int \int (F_\mg(x) - \one \{ y \leq x \})^2 \diff x \diff F_\mg(y) \\
& = \frac{1}{3} \int_0^3 \int_0^3 \left( \frac{x}{3} - \one \{ y \leq x \} \right)^2 \diff x \diff y = \frac{3}{2}.
\end{align*}
It holds that $F_1 \st F_2 \st F_3$ but only $Q_1 \st Q_3$, hence $P_{Y \mid F} \not= P_{Y \mid \LL(F)}$.  Let $r = 10/7$, $s = 11/7$.  On $(-\infty,r]$, we have the pointwise inequalities $Q_2\le Q_3 \le Q_1$; on $[r,s]$, we have $Q_3 \le Q_2 \le Q_1$; and on $[s,\infty)$, we have $Q_3 \le Q_1 \le Q_2$.  Consider the pooled {\cdf}s $Q_{12} = (Q_1 + Q_2)/2$ and $Q_{23} = (Q_2 + Q_3)/2$.  The $G_i$'s may be derived by pooling the $Q_i$'s according to the given order constraint $G_1 \st G_2 \st G_3$, namely,
\begin{align*}  \textstyle
& G_1(z) = Q_1(z) \one_{(-\infty,s]}(z) + Q_{12}(z) \one_{[s,\infty)}(z), \\
& G_2(z) = Q_{23}(z) \one_{(-\infty,r]}(z) + Q_2(z) \one_{[r,s]}(z) + Q_{12}(z) \one_{[s,\infty)}(z), \\
& G_3(x) = Q_{23}(z) \one_{(-\infty,r]}(x) + Q_{3}(z) \one_{[r,\infty)}(z).
\end{align*}
By the law of total expectation and Fubini's theorem,
\begin{align*}  \textstyle
\E \, \crps(F,Y) 
& = \frac{1}{3} \sum_{i=1}^3 \E \, \big(\crps(F,Y) \mid F=F_i\big) \\
& = \frac{1}{3} \sum_{i=1}^3 \int \int \big( F_i(x) - \one \{ y \leq x \} \big)^2 \diff x \diff Q_i(y) \\
& = \frac{1}{3} \sum_{i=1}^3 \int \int \big( F_i(x) - \one \{ y \leq x \} \big)^2 \diff Q_i(y) \diff x \\
& = \frac{1}{3} \sum_{i=1}^3 \int  \left( F_i^2(x) - 2F_i(x)Q_i(x) + Q_i(x) \right) \diff x.
\end{align*}
Similarly, we find that $\E \, \crps(G,Y) = (1/3) \sum_{i=1}^3 \int (G_i^2(x) - 2G_i(x)Q_i(x) + Q_i(x)) \diff x$ and $\E \, \crps(Q,Y) = (1/3) \sum_{i=1}^3 \int (Q_i(x) - Q_i^2(x)) \diff x$; hence $\E \,  \crps(F,Y) = 39/80$, $\E \, \crps(G,Y)$ $= 339/700$, and $\E \, \crps(Q,Y) = 47/100$.  We conclude that
\begin{align*}  \textstyle
\MCB_\CT = \frac{39}{80} - \frac{47}{100} = \frac{7}{400} \eqand \MCB_\ISO = \frac{39}{80} - \frac{339}{700} = \frac{9}{2800}.
\end{align*}
Since the predictive distributions are ordered with respect to $\st$, it follows that for every threshold $z$, the ordering of $F_i(z)$ is the same.  For $z \in (-\infty,1]$, $F_2(z)$ and $F_3(z)$ coincide but this also holds for $G_2(z)$ and $G_3(z)$. Similarly, for $z \in [2,\infty)$, $F_1(z)$ and $F_2(z)$ coincide but this also holds for $G_1(z)$ and $G_2(z)$.  This implies that the Brier score based and the isotonocity-based decompositions coincide.  Since the stochastic order is equivalently characterized by pointwise orderings of lower quantile functions, the quantile score based and the isotonicity-based decompositions also coincide. 

As all $F_i^{-1}$'s are absolutely continuous, we may apply Corollary \ref{cor:HB_1} to compute $\MCB_\HB$.  For $p \in (0,1) \setminus \{ 1/4, 1/2, 3/4\}$ we find that
\begin{align*}  \textstyle
& \frac{\diff}{\diff p} F_1^{-1}(p) = 2 \one_{(0,\frac{1}{2})}(p) + 4 \one_{(\frac{1}{2},1)}(p), \quad \frac{\diff}{\diff p} F_3^{-1}(p) = 4 \one_{(0,\frac{1}{2})}(p) + 2 \one_{(\frac{1}{2},1)}(p), \\
& \frac{\diff}{\diff p} F_2^{-1}(p) = 4 \one_{(0,\frac{1}{4})}(p) + 2 \one_{(\frac{1}{4},\frac{3}{4})}(p) + 4 \one_{(\frac{3}{4},1)}(p),
\end{align*}
hence 
\begin{align*}  \textstyle
\gamma(p) 
= \frac{1}{3} \sum_{i=1}^3 \E \left( \frac{\diff}{\diff p} F^{-1}(p) \Big| F = F_i \right)  
= \frac{10}{3} \one_{(0,\frac{1}{4})}(p) + \frac{8}{3} \one_{(\frac{1}{4},\frac{3}{4})}(p) + \frac{10}{3} \one_{(\frac{3}{4},1)}(p).
\end{align*}
The law of total expectation implies 
\begin{align*}  \textstyle
\E \left( \one \{ F(Y) \leq p \} \frac{\diff}{\diff p} F^{-1}(p) \right) 
& = \frac{10}{3} p \one_{(0,\frac{1}{4})}(p) + \left( \frac{3}{15} + \frac{34}{15}p\right) \one_{(\frac{1}{4},\frac{3}{4})}(p)+ \frac{10}{3} p\one_{(\frac{3}{4},1)}(p),
\end{align*}
and hence,
\begin{align*}  \textstyle
f(p) = p \one_{(0,\frac{1}{4})}(p) + \left( \frac{3}{40} + \frac{17}{20}p \right) \one_{(\frac{1}{4},\frac{3}{4})}(p) + p \one_{(\frac{3}{4},1)}(p).
\end{align*}
Finally, we obtain 
\begin{align*}  \textstyle
\MCB_\HB = \int \left( p - f(p) \right)^2 \gamma (p) \diff p  = \int_{\frac{1}{4}}^{\frac{3}{4}} \left(\frac{3}{20}p - \frac{3}{40}\right)^2 \frac{8}{3} \diff p = \frac{1}{1600}.
\end{align*}
  
\renewcommand{\thesubsubsection}{\thesection.\arabic{subsection}.\alph{subsubsection}}

\subsection{Example 2.14 b) in \cite{Tilmann_Johannes_Calibration}}  \label{app:5}

For $y_1 < y_2 < y_3$, let $F$ be a mixture of the Dirac measures on $y_1, y_2$, and $y_3$ with weights $p_1, p_2$, and $p_3$, and let $Y$ be drawn from a mixture of the same Dirac measures with weights $q_1, q_2$, and $q_3$, respectively.  Suppose that the tuple $(p_1, p_2, p_3; q_1, q_2, q_3)$ attains each of the values
\begin{align*}  \textstyle
\left( \frac{1}{2},\frac{1}{4},\frac{1}{4}; \frac{5}{10},\frac{4}{10},\frac{1}{10} \right), \quad 
\left( \frac{1}{4},\frac{1}{2},\frac{1}{4}; \frac{1}{10},\frac{5}{10},\frac{4}{10} \right), \quad 
\left( \frac{1}{4},\frac{1}{4},\frac{1}{2}; \frac{4}{10},\frac{1}{10},\frac{5}{10} \right)
\end{align*}
with equal probability.  Let $t_1 = y_2 - y_1 > 0$, $t_2 = y_3 - y_2 > 0$, and $t = t_1 + t_2$.  It is immediate that $\E \, \crps(F,Y) = 5t/24$ and $\UNC_0 = \E \, \crps (F_\mg,Y) = 2t/9$.  As \citet{Tilmann_Johannes_Calibration} show, $F$ is threshold and quantile calibrated, hence $\MCB_\BS = \MCB_\QS = 0$. 

Let $F_1, F_2$, and $F_3$ denote the three discrete distributions that $F$ may attain.  For $i = 1, 2, 3$, denote by $Q_i$ the conditional law of $Y$ given $F = F_i$ and by $G_i$ the isotonic conditional law of $Y$ given $F = F_i$, namely, 
\begin{align*}  \textstyle
G_1 = \frac{1}{2} \delta_{y_1} + \frac{4}{10} \delta_{y_2} + \frac{1}{10} \delta_{y_3}, \quad
G_2 = \frac{1}{4} \delta_{y_1} + \frac{7}{20} \delta_{y_2} + \frac{4}{10} \delta_{y_3}, \quad 
G_3 = \frac{1}{4} \delta_{y_1} + \frac{1}{4} \delta_{y_2} + \frac{1}{2} \delta_{y_3}.
\end{align*}
Since the image of the random vector $(F,Y)$ is finite and ICL is the population version of IDR \citep[Proposition 4.1]{ICL}, one obtains the $G_i$'s alternatively by applying IDR on the finite sample of size $n = 30$ with five occurrences of $(F_1,y_1)$, four of $(F_1,y_2)$, one each of $(F_1, y_3$ and $(F_2,y_1)$, five of $(F_2,y_2)$, four each of $(F_2,y_3)$ and $(F_3,y_1)$, one of $(F_3,y_2)$, and five of $(F_3,y_3)$.  The $\MCB_\CT$ and $\MCB_\ISO$ components may be calculated in analogy to previous examples.  We obtain $\MCB_\CT = 3t/200$ and $\MCB_\ISO = 3t_2/200$. 

To compute the Hersbach decomposition, let $\nu_i$ be the image of the Lebesgue measure on $(0,1)$ under $F_i$ where $i = 1, 2, 3$.  We have $\nu_1 = t_1 \delta_{1/2} + t_2 \delta_{3/4}$, $\nu_2 = t_1 \delta_{1/4} + t_2 \delta_{3/4}$, and $\nu_3 = t_1 \delta_{1/4} + t_2 \delta_{1/2}$, and hence, $\mu = (1/3) (2 t_1 \, \delta_{1/4} + t \, \delta_{1/2} + 2t_2 \, \delta_{3/4})$.  For $\ell = 1, 2, 3$ and $p_l = l/4$, and for any $A \in \BU$, the quantities $f_\ell = f(p_\ell)$ satisfy 
\begin{align}  \textstyle
\tau (A) 
& = \E \int_A \one \{ F(Y) \leq p \} \diff \nu_F(p) \label{eq:app:5b} \\
& = \int_A f(p) \diff \mu(p) = f_1 \, \frac{2 t_1}{3} \, \delta_{1/4}(A) + f_2 \, \frac{t}{3} \, \delta_{1/2}(A) + f_3 \, \frac{2t_2}{3} \, \delta_{3/4}(A) \nonumber,
\end{align}
where the expectation in \eqref{eq:app:5b} may be calculated by the law of total expectation:
\begin{align*}  \textstyle
\E \int_A \one \{ F(Y) \leq p \} \diff \nu(p)
& = \frac{1}{3} \sum_{i=1}^3 \E \left( \int_A \one \{ F(Y) \le p \} \diff \nu_F(p) \Mid F = F_i \right) \\
& = \frac{1}{3} \sum_{i=1}^3 \int \int_A \one \{ F_i(y) \le p \} \diff \nu_i(p) \diff Q_i(y) \\
& = \frac{t_1}{6} \, \delta_{1/4}(A) + \frac{t}{6} \, \delta_{1/2}(A) + \frac{t_2}{2} \, \delta_{3/4}(A).
\end{align*}
We conclude that $f_\ell = p_\ell$ for $\ell = 1, 2, 3$, and hence $\MCB_\HB = 0$.

\end{document}